\newtheorem{thm}{Theorem}[section]
\newtheorem{cor}[thm]{Corollary}
\newtheorem{lem}[thm]{Lemma}
\newtheorem{prop}[thm]{Proposition}
\newtheorem{rem}[thm]{Remark}
\newtheorem{defn}[thm]{Definition}
\numberwithin{equation}{section}
\newcommand{\be}{\begin{equation}}
\newcommand{\ee}{\end{equation}}
\newcommand{\bea}{\begin{eqnarray}}
\newcommand{\eea}{\end{eqnarray}}
\newcommand{\ba}{\begin{aligned}}
\newcommand{\ea}{\end{aligned}}
\begin{document}

\title{Poisson-Lie analogues of spin Sutherland models revisited}

\maketitle

\begin{center}

L. Feh\'er${}^{a,b}$

\medskip
${}^a$Department of Theoretical Physics, University of Szeged\\
Tisza Lajos krt 84-86, H-6720 Szeged, Hungary\\
e-mail: lfeher@sol.cc.u-szeged.hu

\medskip
${}^b$HUN-REN Wigner Research Centre for Physics\\
 H-1525 Budapest, P.O.B.~49, Hungary

\end{center}

\begin{abstract}
Some generalizations of spin Sutherland models
descend from  `master integrable systems'  living on
Heisenberg doubles of compact semisimple Lie groups.
The  master systems represent Poisson--Lie counterparts of the systems of free motion modeled on
the respective cotangent bundles and their reduction  relies on taking
 quotient with respect to  a  suitable conjugation action
of the compact Lie group.
We present an enhanced exposition of the reductions and
prove rigorously for  the first time that the reduced systems possess the property of degenerate integrability
on the dense open subset of the Poisson quotient space  corresponding to the principal orbit type for the
pertinent  group action.
After restriction to a smaller dense open subset,
degenerate integrability  on the generic symplectic leaves is  demonstrated as well.
The paper also contains a novel description of the reduced Poisson structure and
a careful elaboration of the scaling limit whereby our  reduced systems turn into the spin Sutherland models.
\end{abstract}

 \setcounter{tocdepth}{2}

 \tableofcontents

\def\bI{\mathbb{I}}                         %
\def\1{{\boldsymbol 1}}                     %
\def\cD{{\mathcal D}}                       %
\def\cH{{\mathcal H}}                       %
\def\tr{\mathrm{tr}}                        %
\def\diag{\mathrm{diag}}                    %
\def\ri{{\rm i}}                            %
\def\bC{\mathbb{C}}                         %
\def\C{\mathbb{C}}                          %
\def\N{\mathbb{N}}                          %
\def\bR{\mathbb{R}}                         %
\def\T{\mathbb{T}}                          %
\def\cF{{\mathcal F}}                       %
\def\reg{\mathrm{reg}}                      %
\def\red{\mathrm{red}}                      %
\def\Ad{{\mathrm{Ad}}}                      %
\def\id{{\mathrm{id}}}                      %
\def\dt {\left.\frac{d}{dt}\right|_{t=0}}   %
\def\fM{\mathfrak{M}}                       %
\def\cN{{\mathcal N}}                       %
\def\gl{{\rm gl}(n,\C)}                     %
\def\cG{{\mathcal G}}                       %
\def\cR{{\mathcal R}}                       %
\def\I{{\mathbb{I}}}                        %
\def\cB{\mathcal{B}}                       %
\def\Dress{{\mathrm{Dress}}}               %
\def\dress{{\mathrm{dress}}}               %
\def\red{{\mathrm{red}}}                   %
\def\dim{{\mathrm{dim}}}                   %
\def\rank{{\mathrm{rank}}}                 %
\def\cP{\mathcal{P}}                       %
\def\fP{\mathfrak{P}}                      %
\def\ad{\mathrm{ad}}                       %
\def\Ad{\mathrm{Ad}}                       %
\def\cA{\mathcal{A}}                       %
\def\cP{\mathcal{P}}                       %
\def\cU{\mathcal{U}}                       %
\def\cM{\mathcal{M}}                       %
\def\fN{\mathfrak{N}}                      %
\def\End{\mathrm{End}}                     %
\def\r{{\mathrm r}}                        %
\def\fR{\mathfrak{R}}                      %
\def\cC{\mathcal{C}}                       %
\def\fH{\mathfrak{H}}                      %
\def\fC{\mathfrak{C}}                      %
\def\cO{\mathcal{O}}                       %
\def\fF{\mathfrak{F}}                      %
\def\bZ{\mathbb{bZ}}                       %
\def\cL{\mathcal{L}}                       %
\def\bfR{{\mathbf R}}                      %
\def\bM{\mathbb{M}}                        %
\def\cT{\mathcal{T}}                       %
\def\bT{\mathbb{T}}                        %
\def \vLambda{\varLambda}                  %
\newcommand\br[1]{\{ #1 \}}                %
\newcommand\ip[1]{\langle #1 \rangle}      %

\def\u{\mathfrak{u}}                       %
\def\ext{\mathrm{ext}}                     %
\def\lin{{\mathrm{lin}}}                   %
\def\cV{{\mathcal V}}                      %
\def\cW{{\mathcal W}}                      %
\def\o{{\mathrm{o}}}                       %

\newpage

\section{Introduction}
\label{S:1}

It is well known that many important
integrable Hamiltonian systems  can be viewed
as low dimensional `shadows' of higher dimensional manifestly integrable master systems.
The integrability of the master systems is due to their rich symmetries, and their shadows result
by projection onto the quotient space  of the pertinent master phase space with respect to
the symmetry group. This is the essence of the method of Hamiltonian reduction and its variants \cite{OR}.
For example, in the pioneering paper by Kazhdan, Kostant and Sternberg \cite{KKS} the higher dimensional phase space
was the cotangent bundle of the unitary group $\mathrm{U}(n)$, and Marsden--Weinstein reduction at a specific moment map
value for the Hamiltonian action of  $\mathrm{U}(n)$ by conjugations was applied to reduce the master system of free geodesic motion
to the  Sutherland model of $n$ interacting particles on the circle.
Several generalizations of this construction were later investigated in which the group $\mathrm{U}(n)$ was replaced by
other Lie groups or their symmetric spaces \cite{OP,Per}.
It turned out that reductions at generic moment map values often lead to many-body systems possessing
internal `spin' degrees of freedom \cite{FP1,Re1}.
Infinite dimensional master phase spaces built on loop groups \cite{AFM,GN},
and on spaces of flat connections on Riemann surfaces
 \cite{ARe,FR} were also utilized  for constructing integrable systems.
Moreover, there appeared interesting applications of the reduction method \cite{FK0,FK1,FK2,FR,M,Ob}
in which the nature of the underlying symmetry had been
promoted from Hamiltonian group actions to their generalizations based on Poisson--Lie groups \cite{CP,Drin,STS,STSrev}  and on
quasi--Poisson/quasi--Hamiltonian geometry \cite{AKSM,AMM}.
The reviews \cite{A,Eti,N,PolR} and the recent papers \cite{AO,CF,FF,FFM,KLOZ,Re3,Re03} show that this subject possesses
close connections to important areas of physics and mathematics, and enjoys considerable current activity.

The principal goal of the present paper is to complement and enhance our previous results \cite{F1,F2} on the structure
of  Poisson--Lie analogues of those spin Sutherland models that result by reductions of
cotangent bundles of semisimple Lie groups via the conjugation action.
Here, we consider these models in association with every  (connected and simply connected)
compact Lie group $G$ having a simple Lie algebra.
The  relevant master system  is a generalization of the Hamiltonian system on the cotangent bundle
$T^*G$ governed by the kinetic energy of a `free particle' moving on $G$ in the bi-invariant
Riemannian metric.
The master phase space is obtained by replacing the cotangent bundle by the so-called Heisenberg double \cite{STS},
which as a manifold is provided by the complexification $G^\bC$ of the group $G$.
This phase space carries a symplectic structure for which a generalization of the conjugation
action of $G$ on $T^*G$  represents  Poisson--Lie symmetry with respect to the standard multiplicative  Poisson
structure on $G$ \cite{Kli}. There exist also Hamiltonians on the Heisenberg double that generate `free motion' in the sense
that their flows  project on the geodesic lines on $G$.

The free motion modeled on $T^*G$ yields a
degenerate integrable system\footnote{Degenerate integrable systems are also called superintegrable,
the notion as we use it is defined in Section \ref{ss:22}.},
and its reductions by the conjugation action of $G$
 inherit the integrability properties on generic symplectic leaves
of the quotient space $T^*G/G$ \cite{Re1}.
The reduced systems are spin Sutherland models built on `collective spin variables'
belonging to  the reduction of  the dual space $\cG^*$  of the Lie algebra of $G$ (or a coadjoint orbit therein) with respect to the maximal torus  $G_0<G$,
at the zero value of the moment map.
(For the spin Sutherland Hamiltonian, see equation \eqref{Suth}.)
In the Poisson--Lie setting, the space of collective spin variables
becomes a similar reduction of the dual Poisson--Lie group (or a dressing orbit therein), which is the Lie group $G^*=B$
whose Lie algebra $\cB$ appears in the Manin triple \cite{CP,STSrev} displayed in equation \eqref{cGdec}.
The Poisson--Lie analogues of the spin Sutherland models were  first introduced in \cite{F1}, where
 Marsden--Weinstein type reductions
of the Heisenberg double were studied employing the shifting trick of symplectic reduction \cite{OR}.
This means that the phase space was initially extended by a dressing orbit of $G$ in $B$,
and then the reduction was defined by setting the relevant $B$-valued  moment map to the identity value.
The resulting systems were further investigated  in \cite{F2} using Poisson reduction, i.e.,  by directly
taking the quotient of the phase space by the action of the symmetry group $G$.
Via restriction to symplectic leaves after reduction, the two methods give the same models.
The first method is better suited for describing the reduced symplectic form, while the second one
leads more directly to the reduced Poisson algebra.

In  \cite{F1,F2} we collected heuristic arguments in favour of the degenerate integrability
of the reduced systems that descend from the master systems of free motion supported by the Heisenberg doubles,
but have not obtained a full proof.
 The main achievement of this paper is that we will establish in a mathematically
 exact manner the degenerate integrability of the reduced systems after restriction to a dense
 open subset of the Poisson quotient. This subset corresponds to the principal orbit type with respect to the $G$-action on the Heisenberg double.
 After restriction to a smaller dense open subset,
 degenerate integrability on the generic symplectic leaves will be proved as well.
 Our proof of degenerate integrability was motivated by ideas that we learned
 from  papers by Reshetikhin \cite{Re1,Re2}.
 It also relies on techniques introduced in our joint work with
 Fairon \cite{FF} and
 on the recent note \cite{WGMPnew} dealing with reduced integrability on $T^*G/G$.

 Besides the main  achievement,  the  analyses of \cite{F1,F2} will  be  further developed
 in  several other respects as well.
 For example, we shall present two useful, alternative descriptions of the reduced Poisson brackets.
 In the first one `particle positions' and the Lax matrix are used as variables. The explicit formula
 given by equation \eqref{RED1P} contains the dynamical $r$-matrix $\cR(Q)$ \eqref{RQ2} depending on the former.
 The second formulation (given by equation \eqref{FHred-form}) relies on variables
  that may be interpreted as particle positions, their canonical conjugates and
 collective spin degrees of freedom.
 These `decoupled variables'  make it possible to view the reduced systems as Ruijsenaars--Schneider type deformations of the
 standard spin Sutherland models, and we shall present a detailed elaboration of the relevant `scaling limit'.

 \subsection{Organization and results}
 In the next section,
 we  collect the necessary background material concerning Lie theory, degenerate integrability
 and Poisson--Lie symmetry.
 In Section \ref{S:3}, we first give a careful presentation of the Heisenberg double,
 describing its Poisson structure in terms of three distinct sets of variables; each have
 their own advantages as it turns out subsequently. Then, we expose the master system of free motion
 and explain its degenerate integrability.
The core of the paper is Section \ref{S:4}, where we define the reduction of the master system and
demonstrate  the integrability properties of the resulting reduced system.
Our main new results, Theorem \ref{thm:45} with Corollary \ref{cor:46}, and Theorem \ref{thm:49}  can be found in this section.
Section \ref{ss:51} contains the derivation of the dynamical $r$-matrix form
of the reduced Poisson brackets. The result is given by Theorem \ref{thm:defSuthP},
which can be considered as an improvement of a previous result
found in  \cite{F2}.   In Section \ref{ss:52}, we describe the reduced Hamiltonian vector fields  and present
a quadrature leading to their integral curves.
Here, we employ the partial gauge fixing  associated with the gauge slice $\bM_0$ \eqref{bM0reg}, which covers
a dense open subset of the reduced phase space.
Then, in Section \ref{ss:61} we exhibit
canonical conjugates of the position coordinates  and a `collective spin variable' whereby the reduced Poisson bracket
takes the `decoupled form' displayed  in Theorem \ref{thm:5.1}.
This result appeared implicitly already in \cite{F1}, and explicitly in the $G=\mathrm{U}(n)$ case in the paper \cite{FM}.
In Section \ref{ss:62}, we utilize  the decoupled variables
to explain how our reduced systems are connected to the
standard spin Sutherland models in the so-called scaling limit characterized by
Propositions \ref{prop:64} and \ref{prop:65}.  These propositions strengthen and make more precise previous  results of \cite{F1}.
In the final section, we offer a brief summary and an outlook towards
open problems.
There are also three appendices developing technical issues.
 Appendix \ref{sec:A} illustrates how   a Poisson--Lie moment map generates a $G$-action, Appendix \ref{sec:B}  explains a connection
 with the paper \cite{F1}, and in Appendix \ref{sec:C} the previous derivation \cite{FK0}
 of the  spinless trigonometric Ruijsenaars--Schneider model \cite{RS} is recovered from the formalism used
 in the present work.

The exposition of the material that follows is detailed and mostly self-contained,  with the intention
to facilitate further studies of the subject.

\section{Background material}
\label{S:2}

Here, we first summarize a few  Lie theoretic facts for later use.
More details can be found in \cite{F2} and in the textbooks \cite{DK,Knapp,Sam}.
Then, we review the notion of degenerate integrability, and recall
crucial features of Poisson--Lie groups and their actions.

\subsection{Lie theoretic preparations}
 \label{ss:21}

Let $\cG^\bC$ be a complex simple Lie algebra with Killing form  $\langle -, - \rangle$.
The choice of a Cartan subalgebra $\cG_0^\bC < \cG^\bC$ and a system of positive roots leads to the
triangular decomposition
\be
\cG^\bC = \cG_-^\bC + \cG_0^\bC + \cG_+^\bC.
\label{triang}\ee
Then, the `realification' $\cG^\bC_\bR$ of $\cG^\bC$ (i.e. $\cG^\bC$ viewed as a real Lie algebra)
can be  written as the vector space direct sum of two subalgebras
\be
\cG^\bC_\bR = \cG + \cB,
\label{cGdec}\ee
where $\cG$ is a compact simple Lie algebra containing the maximal Abelian subalgebra $\cG_0 < \cG$ for which
\be
\cG_0^\bC = \cG_0 + \ri \cG_0,
\ee
and
\be
\cB := \ri \cG_0 + \cG^\bC_+
\label{cBtriang}\ee
 is a `Borel' subalgebra.
We shall also employ the vector space decompositions
\be
\cG^\bC = \cG_0^\bC + \cG^\bC_\perp
\quad \hbox{with} \quad \cG^\bC_\perp := \cG_-^\bC + \cG_+^\bC,
\label{triangperp}\ee
and
\be
\cG = \cG_0 + \cG_\perp, \quad
\cB = \cB_0 + \cB_+
\label{cGcBperp}\ee
with
\be
\cG_\perp = \cG \cap \cG^\bC_\perp,
\quad
\cB_0 = \ri \cG_0,\quad \cB_+ = \cG^\bC_+.
\ee
Referring to \eqref{triang} and \eqref{triangperp}, we may write any $X\in \cG^\bC_\bR$ as
\be
X = X_- + X_0 +  X_+ = X_0 + X_\perp,
\ee
or by using \eqref{cGdec} as
\be
X= X_\cG + X_\cB,
\label{cGdec+}\ee
and can also further decompose $X_\cG\in \cG$ and $X_\cB\in \cB$ according to \eqref{cGcBperp}.

Let us equip $\cG^\bC_\bR$ with the invariant, nondegenerate, symmetric bilinear form
\be
\langle X, Y\rangle_\bI := \Im \langle X,Y\rangle,
\qquad \forall X,Y\in \cG^\bC.
\label{impair}\ee
  The decomposition  \eqref{cGdec} is a well known example of a Manin triple \cite{CP,STSrev}, meaning that  $\cG$
 and $\cB$ are isotropic subalgebras of $\cG_\bR^\bC$.
 Consequently, the bilinear form gives rise to the following  identifications of linear dual spaces:
 \be
 \cG^* = \cB,  \quad \cB^* = \cG,  \quad
 (\cB_+)^* = \cG_\perp,
 \quad (\cB_0)^* = \cG_0,
 \quad (\cG_0)^* = \cB_0.
\label{dualids}\ee

For the simplest series of examples
$\cG^\bC = \mathrm{sl}(n,\bC)$,  $\cG_0^\bC$ is  the standard Cartan subalgebra of traceless diagonal matrices,
$\cG = \mathrm{su}(n)$, $\cB$ consists of the upper triangular
elements of $\mathrm{sl}(n,\bC)$ with real diagonal entries, and
the Killing form $\langle X, Y\rangle$ is a multiple of $\tr(XY)$ by  a positive constant.

Let $G^\bC_\bR$
be a connected and simply connected real Lie group whose Lie algebra is $\cG^\bC_\bR$, and denote $G$ and $B$ its
connected subgroups associated with the Lie subalgebras $\cG$ and $\cB$.
These subgroups are simply connected and $G$ is compact.
We have the connected subgroup $G_0^\bC <  G^\bC_\bR$ corresponding to $\cG_0^\bC$,
as well as the subgroups $G_0 < G$, $B_0 < B$, $B_+ < B$  associated with $\cG_0$ and the subalgebras $\cB_0$ and $\cB_+$
of $\cB$.

The real vector space $\cG^\bC_\bR$ can be presented as the direct sum
\be
\cG^\bC_\bR = \cG + \ri \cG,
\ee
and we let $\theta$ denote the corresponding  complex conjugation,
\be
\theta( Y_1+ \ri Y_2) := Y_1 - \ri Y_2\ \quad \hbox{for} \quad Y_1,Y_2\in \cG.
\label{theta}\ee
This is an involutive automorphism of the real Lie algebra $\cG_\bR^\bC$, which lifts
to an involutive automorphism $\Theta$ of the group $G^\bC_\bR$.
They are known as infinitesimal and global Cartan involutions, respectively.
It is customary to denote
\be
Z^\dagger:= - \theta(Z), \qquad
K^\dagger= \Theta(K^{-1}),
\qquad\forall Z\in \cG_\bR^\bC,\,\, \forall K\in G_\bR^\bC.
\label{taumap}\ee
The maps $Z \mapsto Z^\dagger$ and $K \mapsto K^\dagger$ are antiautomorphisms.
For the classical Lie groups one can
choose the conventions in such a way that dagger coincides with the matrix adjoint \cite{Knapp}.

The compact subgroup $G < G_\bR^\bC$ is the fixed point set of $\Theta$.
The  closed submanifold
\be
\fP:= \exp(\ri \cG) \subset G^\bC_\bR
\label{fP}\ee
is diffeomorphic to $\ri\cG$ by the exponential map and is
a connected component of the fixed point set of the antiautomorphism $K\mapsto K^\dagger$.
The group $B$ also admits global exponential parametrization, and the map
 \be
 \nu: B \to \fP, \quad \nu(b):= b b^\dagger
\label{nu}\ee
is   a diffeomorphism.

Next, we describe a chain of diffeomorphisms  between the manifolds
\be
M:= G_\bR^\bC, \quad \fM:= G \times B \quad\hbox{and}\quad  \bM:= G \times \fP.
\label{Heis3}\ee
 We start by recalling that
every element $K\in M$ admits unique (Iwasawa) decompositions \cite{Knapp}  into products of elements of $G$ and $B$,
\be
K = g_L b_R^{-1} = b_L g_R^{-1} \quad \hbox{with}\quad g_L, g_R \in G,\, b_L, b_R \in B.
\label{KdecT}\ee
These decompositions induce the (real-analytic) maps $\Xi_L, \Xi_R: M\to G$ and $\Lambda_L, \Lambda_R: M\to B$,
\be
\Xi_L(K) := g_L,\quad \Xi_R(K):= g_R,\quad \Lambda_L(K):= b_L,\quad \Lambda_R(K):= b_R.
\label{XiLaT}
\ee
Besides the pairs $(\Xi_L, \Lambda_R)$ and $(\Xi_R, \Lambda_L)$, also the pair $(\Xi_R, \Lambda_R)$
yields  a diffeomorphism,
\be
m_1:= (\Xi_R, \Lambda_R): M\ \to \fM, \qquad m_1(K) = (g_R, b_R).
\label{m1}\ee
  In addition to this, we need the diffeomorphism
\be
m_2: \fM \to \bM, \qquad
m_2 (g,b): = (g, \nu(b)).
\label{m2}\ee

The map $\nu$ \eqref{nu} intertwines the so-called
 dressing action of $G$ on $B$ with the obvious conjugation action of $G$ on $\fP$.
 That is, we have
\be
\Dress_\eta(b) (\Dress_\eta(b))^\dagger = \eta b b^\dagger \eta^{-1}, \qquad \forall \eta \in G,\, b\in B.
\label{Dressconj}\ee
It follows that any element of $B$ can be transformed into $B_0= \exp(\ri \cG_0)$ by the dressing action.
The relation \eqref{Dressconj} can be taken as the definition of the dressing action. More explicitly, one has
\be
\Dress_\eta(b) = \Lambda_L(\eta b), \qquad
\forall \eta\in G,\, b\in B,
\label{Dress}\ee
and the corresponding infinitesimal action
\be
\dress_Y(b):=
\dt {\mathrm{Dress}}_{e^{tY}}(b)= b (b^{-1} Y b)_\cB, \quad \forall Y\in \cG.
\label{dress}\ee

\begin{rem}
The notation used on the right side of  \eqref{dress}  `pretends' that our Lie groups are groups of matrices.
Such symbolic matrix notations are adopted throughout the paper.
If desired, one may rewrite  all of the relevant equations in equivalent abstract  form (which is often longer), or can employ faithful
matrix representations.
\end{rem}

For a real function $\varphi \in C^\infty(B)$, we define the $\cG$-valued left and right derivatives, $D\varphi$ and $D'\varphi$, by
\be
 \langle X, D \varphi(b) \rangle_\bI + \langle X', D' \varphi(b) \rangle_\bI := \dt \varphi(e^{tX} b e^{tX'}), \quad \forall b\in B, \,
  X,X'
  \in \cB.
 \label{derB}\ee
In general, these obey the relation $D\varphi(b) = ( b D'\varphi(b) b^{-1})_\cG$.
If the function is invariant with respect to the dressing action, $\varphi \in C^\infty(B)^G$,
then $\langle  D'\varphi(b), (b^{-1} Y b)_\cB \rangle_\bI=0$, $\forall Y\in \cG$, and from this we get
$D\varphi(b) = b D'\varphi(b) b^{-1}$.  Equivalently, we have
\be
(b^{-1} D\varphi(b) b)_\cB =0, \qquad \forall \varphi\in C^\infty(B)^G,\, b\in B.
\label{invprop}\ee
By \eqref{dress}, this means that $D\varphi(b)$ belongs to the Lie algebra of the isotropy group $G_b < G$ of $b$
with respect to the dressing action.  Even more, this derivative belongs the center of the isotropy Lie algebra,
because the derivative of an invariant function is equivariant:
\be
D\varphi (\Dress_\eta(b) ) = \eta D\varphi(b) \eta^{-1}, \qquad \forall \eta\in G,\, b\in B.
\ee
The isotropy subgroup $G_b$ is generically a maximal torus of $G$, and the elements for which
this holds constitute the dense open subset $B^\reg \subset B$.
The derivatives of the invariant functions actually span the center of the isotropy Lie algebra at any $b\in B$.
This can be seen, for example, with the aid of the natural isomorphisms
\be
C^\infty(\cG_0)^W \longleftrightarrow C^\infty(\cG)^G \longleftrightarrow C^\infty(\fP)^G \longleftrightarrow
 C^\infty(B)^G,
\ee
where $C^\infty(\cG_0)^W$ denotes the Weyl invariant smooth functions on $\cG_0$.
The isomorphisms are induced by the  maps
 \be
 \cG_0 \stackrel{\iota}{\longrightarrow}\cG \xrightarrow{\exp_\ri}   \fP \stackrel{\nu}{\longleftarrow} B,
 \label{arrowtop}\ee
 where $\iota:\cG_0 \to\cG$ is the  inclusion, $\exp_\ri(X):= \exp(\ri X)$, and $\nu$ is defined in \eqref{nu}.
  These maps also
 relate the dense open subsets $\cG_0^\reg$, $\cG^\reg$, $\fP^\reg$ and $B^\reg$.
 It is  well known that the gradients (with respect to the Killing form of $\cG)$
 of the invariant functions on $\cG$ span
 the center of the corresponding isotropy subalgebra. The dimension of the span
 of the derivatives of the invariant functions does not change under these maps,
 since the derivatives are equivalent to the ordinary exterior derivatives
 (for example, $D\varphi (b)\in \cG \simeq \cB^*$ encodes
 $d \varphi(b)\in  T_b^*B$).

\subsection{Degenerate integrability}
 \label{ss:22}

The notion of degenerate integrability of Hamiltonian systems on symplectic manifolds is due to
Nekhoroshev \cite{Nekh}. Degenerate integrable systems have more first integrals (constants of motion)
than half the dimension
of the phase space, which characterizes Liouville integrability.
In the extreme case the trajectories are completely determined by fixing the constants of motion,
as is exemplified by the classical Kepler problem that possesses 5 independent constants of motion.
A closely related concept of non-Abelian or non-commutative integrability was introduced
by  Mischenko and Fomenko \cite{MF}, and  this is especially fitting for
systems whose basic constants of motion form a finite-dimensional, non-Abelian Lie algebra.

A systematic exploration of natural quantum mechanical Hamiltonians with
many conserved quantities having specific form was initiated by Fris \emph{et al } in 1965 \cite{Fris}, and later this has become
a very active research subject \cite{MPV}.
The systems studied in this field are nowadays called   superintegrable,
a term that apparently goes back to Wojciechowski \cite{Wo}.
The adjective \emph{superintegrable } is now often used to characterize both quantum and classical
mechanical systems \cite{Fas,MPV,Re3,Re03}.
We prefer to stick to the original terminology of Nekhoroshev, which  highlights the important feature
that in comparison to  classical Liouville integrability
the dynamics takes place on  lower dimensional submanifolds of the phase space
(typically, degenerate tori when compact).
The extensive literature on the subject of
integrability (see e.g.~\cite{J,LMV,Rud}) contains
several variants of the basic notions.
The definition  that we find the most convenient is presented below.

\begin{defn}\label{def:21}
Suppose that $\cM$ is a symplectic manifold of dimension $2m$ with associated Poisson bracket
$\br{-,-}$ and two distinguished subrings $\fH$ and $\fF$ of $C^\infty(\cM)$
satisfying the  following conditions:
\begin{enumerate}[itemsep=0pt]
\item{
The ring $\fH$ has functional dimension $r$ and $\fF$ has functional dimension $s$ such that
$r + s = \dim(\cM)$ and  $r<m$.}
\item{Both $\fH$ and $\fF$ form Poisson subalgebras of $C^\infty(\cM)$,  satisfying
$\fH\subset \fF$  and $ \{\cF, \cH\}=0$ for all $\cF\in \fF$, $\cH\in \fH$.}
\item{The Hamiltonian vector fields of the elements of $\fH$ are complete.}
\end{enumerate}
Then, $(\cM, \br{-,-}, \fH, \fF)$ is called a  degenerate integrable system of rank $r$.
The rings $\fH$ and $\fF$ are referred to as the ring of Hamiltonians and constants of motion, respectively.
\end{defn}

Recall that the functional dimension of a ring $\fR$ of functions
on a manifold $\cM$ is $d$
if the exterior derivatives of the elements of $\fR$  \emph{generically}, that is on a dense open submanifold,
span a $d$-dimensional subspace of the cotangent space.
Condition $(3)$ above on the completeness of the flows is superfluous if the joint level surfaces
of the elements of $\fF$  are compact.
Degenerate integrability of a single Hamiltonian $\cH$ is  understood to mean
that there exist rings $\fH$ and $\fF$ with the above properties such that $\cH\in \fH$.
Observe that $\fF$ is either equal to or can
be enlarged to the centralizer of $\fH$ in the Poisson algebra $(C^\infty(\cM), \br{-,-})$.
In the literature the definition is often formulated in terms of functions
$f_1,\dots, f_r, f_{r+1},\dots, f_s$ so that they generate $\fF$ and the first $r$ of them generate $\fH$.
If the definition is modified by setting $r=s=m$ and $\fH = \fF$, then one obtains
the notion of Liouville integrability.

The concepts of integrability can be extended to Poisson manifolds
\cite{LMV} beyond the symplectic class.
In fact, we shall construct a series of examples that satisfy the requirements of the
next definition.

\begin{defn}\label{def:22}
Consider
 a Poisson manifold $(\cM,\br{-, -})$  whose Poisson tensor has maximal rank $2m\leq \dim(\cM)$
on a dense open subset.  Then,  $(\cM,\br{-,-},\fH,\fF)$ is called a degenerate integrable
system of rank $r$ if conditions (1), (2), (3) of Definition \ref{def:21}  hold, and
the Hamiltonian vector fields of the elements of $\fH$ span an $r$-dimensional subspace of
the tangent space over a dense open subset of $\cM$.
\end{defn}

The integrable systems of Definition \ref{def:21}  are integrable in the sense
of Definition \ref{def:22}, too, since in the symplectic case the condition on the span of
the Hamiltonian vector fields of $\fH$ holds automatically.
Liouville integrability in the Poisson case results by imposing $r=m$
instead of $r<m$ in the definition.
In that case, our definition implies that $\fF$ is an Abelian Poisson algebra
(in \cite{LMV} this condition appears in the definition).

In a degenerate integrable system,
the evolution equation associated with  any $\cH\in \fH$
can be integrated by quadrature (see, e.g., \cite{J,Nekh,Rud}).
A description of `action-angle and spectator' coordinates in the Poisson case
can be found in \cite{LMV}.
Under further conditions, it can be shown \cite{J} that
degenerate integrable systems are integrable also in the Liouville sense.
However, in general
there is no canonical way to enlarge $\fH$ by elements of $\fF$ to obtain
an Abelian Poisson algebra of the required functional dimension.
This freedom can be used to manufacture very different  Liouville integrable systems
out of a given degenerate integrable system.
For spin Calogero--Moser type systems, and their generalizations that we are interested in,
 $\fH$  is distinguished by its group theoretic origin \cite{F2,Re1,Re2}.

\subsection{Poisson--Lie symmetry}
 \label{ss:23}

Poisson--Lie groups are the quasi-classical analogues of quantum groups introduced by Drinfeld \cite{CP,Drin}.
Their role in classical integrable systems was pioneered by Semenov--Tian--Shansky \cite{STS}, whose
review \cite{STSrev} is highly recommended as a general reference.

By definition, a Poisson--Lie group is a pair $(G,\br{-,-}_G)$, where
$\br{-,-}_G$ is a Poisson bracket on the smooth (or holomorphic etc)
functions on the Lie group $G$  such that the group product
$G\times G \to G$ is a Poisson map.
A Poisson action of $(G,\br{-,-}_G)$ on a Poisson manifold $(\cM,\br{-,-})$ is an action for which the action map
$\cA: G \times \cM \to \cM$ is Poisson.
In these definitions, $G\times G$ and $G \times \cM$ are equipped with the respective product Poisson structures.
Take arbitrary points $g\in G$, $p\in \cM$ and for any $F\in C^\infty(\cM)$ define $F_g\in C^\infty(\cM)$ and
$F^p \in C^\infty(G)$ by
\be
F_g(p) = F^p(g)= F(\cA_g(p))
\ee
using $\cA_g(p):= \cA(g,p)$. The Poisson property of the map $\cA$ means that
\be
\{ F,H\}(\cA_g(p)) = \{ F_g, H_g\}(p) + \{ F^p, H^p\}_G(g), \qquad \forall F,H \in C^\infty(\cM).
\label{Poact}\ee

The Poisson tensor of every Poisson Lie group vanishes at the  unit element $e\in G$. Thus, the linearization of the Poisson bracket $\br{-,-}_G$
 yields \cite{CP,STSrev} a Lie bracket $[-,-]_*$ on the dual space $\cG^*= T^*_e G$ of the
Lie algebra $\cG= T_eG$.
For any $X\in \cG$, let $X_\cM$ denote the infinitesimal generator of the (left) $G$-action on $\cM$, such
that $X \mapsto X_\cM$ is an antihomomorphism, and let $\cL_{X_\cM}F= dF(X_\cM)$ denote the derivative of the  function
$F\in  C^\infty(\cM)$.
Pick a  basis $\{T_a\}$ of $\cG$ with dual basis $\{T^a\}$ of $\cG^*$, and define
$\zeta_F \in C^\infty(\cM, \cG^*)$ by
\be
\zeta_F:= \sum_a T^a  dF( (T_a)_\cM),\qquad \forall F\in C^\infty(\cM).
\label{zetaF}\ee
Then, the Poisson property \eqref{Poact}  implies  the identity
\be
\cL_{X_\cM} \{F, H\} -  \{\cL_{X_\cM} F, H\} -  \{F, \cL_{X_\cM}  H\}  -  ([\zeta_F, \zeta_H]_*, X)  =0,
\label{PLid1}\ee
for all $X\in \cG$, $F,H \in C^\infty(\cM)$, where in the last term  the pairing between $\cG^*$ and $\cG$ is used.
Indeed, \eqref{PLid1} follows by putting $g = \exp(tX)$ in \eqref{Poact} and taking derivative with respect to $t\in \bR$ at $t=0$.
It is also worth noting that
\be
\zeta_F(p) = (d_G F^p)(e) \quad \hbox{and}\quad  [\zeta_F(p), \zeta_H(p)]_* = (d_G \{ F^p, H^p\}_G)(e),
\ee
where $d_G$ denotes the exterior derivation of functions on $G$.

We assume that $G$ is connected, and then \eqref{PLid1} is  equivalent to the Poisson property of the $G$-action.
Two consequences of the identity \eqref{PLid1} are important for us.
First, if both $F$ and $H$ are $G$-invariant, then so is their Poisson bracket, i.e.,
$C^\infty(\cM)^G$ is closed under the Poisson bracket.
The statement holds because
$\cL_{X_\cM} \{F, H\} =0$ in this case.
Second, if $F\in C^\infty(\cM)$ is arbitrary and $H\in C^\infty(\cM)^G$, then \eqref{PLid1} becomes
\be
\cL_{X_\cM} \{F, H\} -  \{ \cL_{X_\cM}  F,H\}  =0.
\ee
Defining the Hamiltonian vector field $V_H$ by $\{F,H\} =: \cL_{V_H}(F)$, the identity means
that
\be
[X_\cM, V_H] =0,
\qquad \forall X\in \cG,\, H\in C^\infty(\cM)^G.
\label{PLid2}\ee
This entails that the corresponding flows, denoted $\varphi_\tau^X$ and $\varphi^H_t$, commute
\be
\varphi_\tau^X  \circ\varphi^H_t = \varphi^H_t \circ \varphi^X_\tau.
\label{PLid3}\ee
Since $G$ is supposed to be connected, this in turn implies that the Hamiltonian flow $\varphi_t^H$ is $G$-equivariant.
In favourable circumstances, e.g. if the group $G$ is compact, one may identify
$C^\infty(\cM)^G$ with the ring of smooth functions on the quotient space $\cM^\red:= \cM/G$.
In this way, $C^\infty(\cM^\red)$ becomes  a Poisson algebra, and the Hamiltonian flows generated by its elements are
the projections of the flows $\varphi_t^H$ living upstairs.
The process of descending to the quotient space $\cM^\red$ is known as Poisson reduction,
or  Hamiltonian reduction if a $G$-invariant Hamiltonian is also specified.
It should be noted that the quotient space $\cM^\red$ is usually not a smooth Poisson manifold, but a so-called
stratified Poisson space \cite{OR,SL}.

In the theory of Poisson actions of $(G,\br{-,-}_G)$
the $G^*$-valued Poisson--Lie moment map plays an important role \cite{LuPhD,Lu}.
Here, $G^*$ is the dual Poisson--Lie group  \cite{CP,LuPhD,STSrev}, whose Lie algebra is $(\cG^*, [-,-]_*)$ mentioned above and
the linearization  of the Poisson bracket on $G^*$  reproduces the Lie algebra of $G$.
The precise notion of the moment map  will be recalled in Section \ref{S:4} focusing on the  groups our  interest.
The Poisson--Lie moment map can be used for finding Poisson subspaces  of $\cM^\red$
quite in the same way as for the standard $\cG^*$-valued moment map \cite{OR}.  For compact semisimple Lie groups,
there is a direct link between ordinary Hamiltonian $G$-actions and their Poisson--Lie analogues.
One can be converted into the other by means of a modification of the symplectic form, without changing the reduced structure \cite{Al}.

\section{Integrable master system on the Heisenberg double}
\label{S:3}

In the first subsection we give a terse overview of the Poisson geometry
of the standard Heisenberg double of the compact Lie group $G$.
The second subsection is  devoted to the description
of a degenerate integrable system on this phase space.

\subsection{Three models of the Heisenberg double}
 \label{ss:31}

We recall \cite{STS,STSrev} that the group manifold
$M= G^\bC_\bR$
 carries the following two  Poisson brackets:
\be
\{ \Phi_1, \Phi_2\}_{\pm}: = \langle \nabla \Phi_1, \rho \nabla \Phi_2 \rangle_\bI \pm  \langle \nabla' \Phi_1, \rho \nabla' \Phi_2 \rangle_\bI,
\quad \forall \Phi_1, \Phi_2 \in C^\infty(M).
\label{A1T}\ee
Here, $\rho := \frac{1}{2}\left( \pi_{\cG} - \pi_{\cB}\right)$ with
 $\pi_\cG$ and $\pi_\cB$  denoting the projections from $\cG^\bC_\bR$ onto $\cG$ and $\cB$, which correspond to the direct sum in \eqref{cGdec}.
For any real function $\Phi \in C^\infty(M)$,
the $\cG^\bC_\bR$-valued  left  and right derivatives  $\nabla \Phi$ and $\nabla' \Phi$ are defined by
 \be
 \langle X, \nabla \Phi(K) \rangle_\bI + \langle X', \nabla' \Phi(K) \rangle_\bI := \dt \Phi(e^{tX} K e^{tX'}),
 \quad \forall K\in M, \, X,X' \in \cG_\bR^\bC.
 \label{Nab}\ee
 The minus bracket makes $M$ into a Poisson--Lie group, of which $G$ and $B$ are Poisson--Lie subgroups, i.e.,
 (embedded) Lie subgroups and Poisson submanifolds.
 Their \emph{inherited Poisson brackets} take the form
\be
\{ \chi_1, \chi_2\}_G(g) = - \langle D' \chi_1(g), g^{-1} (D \chi_2(g)) g \rangle_\bI,
\label{PBGT}\ee
and
\be
\{ \varphi_1, \varphi_2\}_B(b) = \langle D' \varphi_1(b), b^{-1} (D \varphi_2(b)) b \rangle_\bI.
\label{PBBT}\ee
The derivatives are $\cB$-valued for $\chi_i\in C^\infty(G)$ and $\cG$-valued for  $\varphi_i\in C^\infty(B)$.
Concretely, we use the definition \eqref{derB} for any $\varphi \in C^\infty(B)$, and
\be
 \langle Y, D \chi(g) \rangle_\bI + \langle Y', D' \chi(g) \rangle_\bI := \dt \chi(e^{tY} g e^{tY'}), \quad \forall g\in G, \, Y,Y' \in \cG,
\label{derG} \ee
for any $\chi\in C^\infty(G)$.
The Poisson manifolds $(M, \br{- ,- }_-)$
and $(M, \br{- ,- }_+)$ are known, respectively, as the Drinfeld double and the Heisenberg double
associated with the standard Poisson structures of $B$ and $G$.
The Poisson bracket $\br{-,- }_+$  is nondegenerate.
The corresponding symplectic form was found in \cite{AM}, but we shall not use its formula here.
It is also known that the maps
\be
(\Lambda_L, \Lambda_R): M \to B \times B
\quad\hbox{and}\quad
(\Xi_L, \Xi_R): M \to G \times G
\label{LRmaps}\ee
are Poisson maps with respect to $(M, \br{-,-}_+)$ and the direct product
Poisson structures on the targets obtained from $(B,\br{-,-}_B)$ and from $(G,\br{-,-}_G)$, respectively.

Below we focus on the Heisenberg double $(M, \br{- ,- }_+)$, and transfer its Poisson structure to $\fM$ and $\bM$ \eqref{Heis3}
 by the diffeomorphisms
$m_1$ \eqref{m1} and $m_2$ \eqref{m2}.
As was proved in \cite{F2}, the usage of $m_1$ results in the Poisson bracket $\br{-,-}_\fM$ on $\fM$ having
 the following explicit form:
\be
\{f, h\}_\fM(g,b) =\left\langle D_2' f, b^{-1} (D_2 h) b \right\rangle_\bI
-\left\langle D'_1 f, g^{-1} (D_1 h) g\right\rangle_\bI
 +  \left\langle D_1 f , D_2 h \right\rangle_\bI
-\left\langle D_1 h , D_2 f \right\rangle_\bI
\label{A2T}\ee
for functions $f, h\in C^\infty(\fM)$.
The
 derivatives on the right-hand side
are   taken at $(g,b)\in G\times B$, with respect to the first and second variable,
according to the definitions \eqref{derG} and \eqref{derB}, respectively.
In particular, $D_1 f$ is $\cB$-valued and $D_2 f$ is $\cG$-valued.

For any real function $\cF\in C^\infty(\fP)$, define its \emph{$\cG^\bC_\bR$-valued derivative}  $\cD \cF$ as follows:
 \be
 \langle X, \cD \cF(L)\rangle_\bI :=  \dt \cF( e^{t X} L e^{t X^\dagger}), \qquad \forall X\in \cB,
 \label{newD1}\ee
 with $X^\dagger$ given by \eqref{taumap},  and
 \be
 \langle Y, \cD \cF(L)\rangle_\bI :=  \dt \cF( e^{t Y} L e^{-t Y}), \qquad \forall Y\in \cG.
 \label{newD2}\ee
 Referring to \eqref{cGdec+}, the first equation determines $(\cD \cF(L))_\cG$ and the second one $(\cD \cF(L))_\cB$.
 (These two equations could be `unified' since $- Y = Y^\dagger$ for $Y\in \cG$, but we prefer to display them separately.)
 Because the natural action of $B$ on $\fP$ is
 transitive\footnote{The corresponding action map is
  $B\times \fP\in (b, L)\mapsto b L b^\dagger\in \fP$.},
  all information about $ \cD \cF$ is  contained in the $\cG$-component.
 This is clear from the next lemma, too.

 \begin{lem}  \label{LemDerP}
 Let $\cF\in C^\infty(\fP)$ and $\varphi\in C^\infty(B)$ connected by the diffeomeorphism $\nu$ \eqref{nu}, i.e.,
 \be
 \cF(bb^\dagger) = \varphi(b), \qquad \forall b\in B.
\ee
Then their derivatives satisfy
\be
(\cD \cF(bb^\dagger))_\cG = D \varphi(b) \equiv (b D' \varphi(b) b^{-1})_\cG,
\qquad
(\cD \cF(bb^\dagger))_\cB = (b D' \varphi(b) b^{-1})_\cB,
\label{newDrel1}\ee
and consequently
\be
\cD \cF(bb^\dagger) = b D'\varphi(b) b^{-1}.
\label{newDrel2}\ee
\end{lem}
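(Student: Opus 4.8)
The plan is to compute both sides of \eqref{newDrel2} directly from the defining relations \eqref{newD1}--\eqref{newD2} for $\cD\cF$ and \eqref{derB} for $D'\varphi$, using the chain rule applied to the relation $\cF(bb^\dagger)=\varphi(b)$. The natural split is to test against $X\in\cB$ (this pins down the $\cG$-component of $\cD\cF$, by the remark following \eqref{newD2}) and separately against $Y\in\cG$ (this pins down the $\cB$-component).

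First I would handle the $\cB$-test. Fix $X\in\cB$ and differentiate $t\mapsto \cF(e^{tX}(bb^\dagger)e^{tX^\dagger})$ at $t=0$. The curve inside $\fP$ is $e^{tX}b b^\dagger e^{tX^\dagger} = (e^{tX}b)(e^{tX}b)^\dagger$; since $e^{tX}b\in B$ (as $X\in\cB$ and $B$ is a subgroup), this equals $\nu(e^{tX}b)$, so by $\cF\circ\nu=\varphi$ the derivative is $\dt\varphi(e^{tX}b)$. Comparing with \eqref{newD1} and with the $X'=0$ part of \eqref{derB}, one reads off $\langle X,\cD\cF(bb^\dagger)\rangle_\bI = \langle X, D\varphi(b)\rangle_\bI$ for all $X\in\cB$ — wait, more carefully: \eqref{derB} gives $\langle X, D\varphi(b)\rangle_\bI = \dt\varphi(e^{tX}b)$, and we must also recall the stated identity $D\varphi(b)=(bD'\varphi(b)b^{-1})_\cG$. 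Since $X$ ranges over $\cB$ and $\langle-,-\rangle_\bI$ pairs $\cB$ nondegenerately with $\cG$ (by \eqref{dualids}, $\cB^*=\cG$), this determines the $\cG$-component: $(\cD\cF(bb^\dagger))_\cG = D\varphi(b) = (bD'\varphi(b)b^{-1})_\cG$, which is the first equation of \eqref{newDrel1}.

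Next the $\cG$-test. Fix $Y\in\cG$ and differentiate $t\mapsto\cF(e^{tY}bb^\dagger e^{-tY})$. Here the subtlety is that $e^{tY}b$ need not lie in $B$, so I rewrite $e^{tY}bb^\dagger e^{-tY}=(e^{tY}b)(e^{tY}b)^\dagger$ using $(e^{-tY})^\dagger=(e^{tY^\dagger})^{-1}$... actually $Y^\dagger=-Y$ for $Y\in\cG$ so $(e^{-tY})=e^{tY^\dagger}$ and the conjugation action is of the form $L\mapsto cLc^\dagger$ with $c=e^{tY}\in G$, i.e. the transitive $B$-action extended. To evaluate I use \eqref{Dressconj}: writing $e^{tY}b = \Dress_{e^{tY}}(b)\cdot k_t$ is not quite right either; rather, via the Iwasawa factorization $e^{tY}b = \beta(t)\,\eta(t)$ with $\beta(t)\in B$, $\eta(t)\in G$ (so $\beta(t)=\Lambda_L(e^{tY}b)=\Dress_{e^{tY}}(b)$ by \eqref{Dress}), one gets $(e^{tY}b)(e^{tY}b)^\dagger = \beta(t)\eta(t)\eta(t)^\dagger\beta(t)^\dagger=\beta(t)\beta(t)^\dagger$ since $\eta(t)\eta(t)^\dagger = \eta(t)\Theta(\eta(t))^{-1}... $ wait $\eta^\dagger=\Theta(\eta^{-1})=\eta^{-1}$ for $\eta\in G$, so $\eta\eta^\dagger=e$. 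Hence $\cF(e^{tY}bb^\dagger e^{-tY})=\varphi(\beta(t))=\varphi(\Dress_{e^{tY}}(b))$, whose $t$-derivative is $\dt\varphi(\Dress_{e^{tY}}(b))=d\varphi(b)(\dress_Y(b))$, and by \eqref{dress} and \eqref{derB} this equals $\langle \dress_Y(b), \text{something}\rangle$ — concretely $\langle D'\varphi(b), b^{-1}Yb)_\cB\rangle_\bI = \langle D'\varphi(b), b^{-1}Yb\rangle_\bI$ (the projection can be dropped since $D'\varphi(b)\in\cG$ is already $\langle-,-\rangle_\bI$-orthogonal to $\cG$) $=\langle bD'\varphi(b)b^{-1}, Y\rangle_\bI$. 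Comparing with \eqref{newD2}, $\langle Y,\cD\cF(bb^\dagger)\rangle_\bI = \langle Y, bD'\varphi(b)b^{-1}\rangle_\bI$ for all $Y\in\cG$; since $\langle-,-\rangle_\bI$ pairs $\cG$ nondegenerately with $\cB$, this gives $(\cD\cF(bb^\dagger))_\cB = (bD'\varphi(b)b^{-1})_\cB$, the second equation of \eqref{newDrel1}.

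Finally, adding the two components according to the decomposition \eqref{cGdec+} yields $\cD\cF(bb^\dagger) = (bD'\varphi(b)b^{-1})_\cG + (bD'\varphi(b)b^{-1})_\cB = bD'\varphi(b)b^{-1}$, which is \eqref{newDrel2}. The main obstacle I anticipate is purely bookkeeping: keeping the dagger-antiautomorphism manipulations straight (especially $\eta\eta^\dagger=e$ for $\eta\in G$ versus the nontrivial $bb^\dagger$ for $b\in B$) and correctly invoking \eqref{Dress} to identify the Iwasawa $B$-factor of $e^{tY}b$ with $\Dress_{e^{tY}}(b)$, so that the $\cG$-test genuinely reduces to differentiating an invariant-type expression along the dressing orbit.
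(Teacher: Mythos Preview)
Your proposal is correct and follows essentially the same route as the paper: test against $X\in\cB$ via the curve $e^{tX}b\in B$ to pin down $(\cD\cF)_\cG$, and test against $Y\in\cG$ via the dressing action (your Iwasawa factorization of $e^{tY}b$ is exactly how the paper's identity $\nu(\Dress_{e^{tY}}(b))=e^{tY}Le^{-tY}$, i.e.\ \eqref{Dressconj}, is obtained) to pin down $(\cD\cF)_\cB$. The only difference is cosmetic---the paper invokes \eqref{Dressconj} directly rather than rederiving it.
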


\begin{proof}
Take $L= b b^\dagger$ and consider the curve $ e^{tX} $ for $X\in \cB$. Since $\nu( e^{t X} b)  =  e^{t X} L e^{t X^\dagger}$,
 we obtain
 \be
  \langle X, b D'\varphi(b) b^{-1} \rangle_\bI = \langle X, D\varphi(b) \rangle_\bI =  \dt \varphi( e^{t X} b) =\dt \cF (e^{tX} L e^{t X^\dagger})
  = \langle X, \cD \cF(L) \rangle_\bI,
  \ee
which implies the first equality in \eqref{newDrel1}.
Next, take any $Y\in \cG$ and consider the curve ${\mathrm{Dress}}_{e^{tY}}(b)$.
Using \eqref{dress} we get
\be
\dt \varphi( {\mathrm{Dress}}_{e^{tY}}(b)) = \langle D'\varphi(b), (b^{-1} Y b)_\cB \rangle_\bI = \langle b D'\varphi(b) b^{-1}, Y \rangle_\bI.
\ee
Due to the identity
$\nu( {\mathrm{Dress}}_{e^{tY}}(b)) = e^{tY} L e^{-tY}$, this is equal to
\be
\dt \cF(e^{tY} L e^{-tY}) = \langle (\cD \cF(L))_\cB, Y\rangle_\bI.
\ee
Consequently,  the  second equality in \eqref{newDrel1} holds, too.
\end{proof}

We use the map $\nu$ \eqref{nu} to transfer the Poisson bracket $\br{-,-}_B$ \eqref{PBBT} from $B$ to $\fP$.
 With the aid of Lemma \ref{LemDerP}, this leads to
 \be
 \{\cF, \cH\}_\fP(L) = \left\langle (\cD \cF(L))_\cB, (\cD \cH(L))_\cG \right\rangle_\bI= - \left\langle (\cD \cF(L))_\cG, (\cD \cH(L))_\cB \right\rangle_\bI,
 \quad \forall \cF,\cH\in C^\infty(\fP).
 \label{PBonP}\ee
 The Hamiltonian vector field generated by $\cH\in C^\infty(\fP)$ on $\fP$ yields the evolution equation
 \be
 \dot{L} = [ (\cD\cH(L))_\cG, L].
 \label{dotL}\ee
 The identical vanishing of the right-hand side characterizes the elements of the center of the Poisson bracket \eqref{PBonP}.
The elements of the center are constant along all Hamiltonian flows, and this gives their equivalent characterization:
 $(\cD \cH(L))_\cB =0$. This holds if and only $\cH$ is $G$-invariant, that is, for $\cH\in C^\infty(\fP)^G$.
As one can verify,  the derivative of every invariant function is equivariant,
 \be
 \cD \cH(\eta L \eta^{-1}) = \eta \cD\cH(L) \eta^{-1}, \qquad \forall \eta\in G,\, L\in \fP,\, \cH\in C^\infty(\fP)^G.
 \ee

 Similarly to the relation between $B$ and $\fP$, we can transfer the Poisson bracket \eqref{A2T}
 from $\fM$ to $\bM$ \eqref{Heis3} via the map $m_2$ \eqref{m2}.
 To display the result, for $\cF \in C^\infty(\bM)$ let $\cD_1\cF(g,L) \in \cB$ and  $\cD_1'\cF(g,L) \in \cB$ denote the
 usual derivatives with respect to the first variable, and $\cD_2\cF(g,L)\in \cG^\bC_\bR$ denote the derivative
 with respect to the second variable defined according to equations \eqref{newD1}  and \eqref{newD2}.

 \begin{prop}  Via the map $m_2$ \eqref{m2},
 the formula \eqref{A2T} of the Poisson bracket on $\fM$ is equivalent to
 \be
 \{ \cF, \cH\}_\bM(g,L)=\left\langle \cD_2 \cF, (\cD_2\cH)_\cG \right\rangle_\bI
-\left\langle g \cD'_1\cF g^{-1},  \cD_1\cH \right\rangle_\bI
 +  \left\langle \cD_1\cF , \cD_2\cH \right\rangle_\bI
-\left\langle \cD_1 \cH , \cD_2\cF \right\rangle_\bI,
\label{A3T}\ee
where the derivatives of $\cF, \cH\in C^\infty(\bM)$ are evaluated at $(g,L)\in \bM=G\times \fP$.
 \end{prop}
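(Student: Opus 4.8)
The plan is to use that, by construction, the Poisson bracket $\{-,-\}_\bM$ is the push-forward of $\{-,-\}_\fM$ along the diffeomorphism $m_2$ \eqref{m2}. Thus, setting $f:=\cF\circ m_2$ and $h:=\cH\circ m_2$, so that $f(g,b)=\cF(g,bb^\dagger)$ and $h(g,b)=\cH(g,bb^\dagger)$, one has $\{\cF,\cH\}_\bM(g,bb^\dagger)=\{f,h\}_\fM(g,b)$, and it remains to rewrite the four terms on the right-hand side of \eqref{A2T} in terms of the derivatives of $\cF$ and $\cH$ at $(g,L)$ with $L=bb^\dagger$.

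First I would dispose of the first-variable derivatives. Since $m_2$ acts as the identity on the $G$-factor and $bb^\dagger$ is independent of $g$, comparing the defining relation \eqref{derG} for $f$ with the definition of $\cD_1\cF$, $\cD_1'\cF$ shows that $D_1 f(g,b)=\cD_1\cF(g,L)$ and $D_1'f(g,b)=\cD_1'\cF(g,L)$ (both $\cB$-valued), and likewise for $h$. For the second-variable derivatives I would invoke Lemma \ref{LemDerP}, applied with the extra argument $g$ playing the role of a passive spectator: replacing $\varphi$ there by $b\mapsto f(g,b)$ and $\cF$ there by $L\mapsto\cF(g,L)$, relations \eqref{newDrel1} and \eqref{newDrel2} give
\be
D_2'f(g,b)=b^{-1}\,\cD_2\cF(g,L)\,b,\qquad D_2 f(g,b)=\bigl(\cD_2\cF(g,L)\bigr)_\cG,
\ee
and the same identities with $h$, $\cH$ in place of $f$, $\cF$.

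Substituting these four expressions into \eqref{A2T}, I would simplify term by term using two elementary facts recorded in Section \ref{S:2}: the bilinear form $\langle-,-\rangle_\bI$ \eqref{impair} is invariant under the adjoint action of $G^\bC_\bR$ (hence under conjugation by $b\in B$ and by $g\in G$), and the subalgebra $\cB$ is isotropic for $\langle-,-\rangle_\bI$. In the first term, $\Ad_b$-invariance turns $\langle b^{-1}\cD_2\cF\,b,\ b^{-1}(\cD_2\cH)_\cG b\rangle_\bI$ into $\langle\cD_2\cF,(\cD_2\cH)_\cG\rangle_\bI$; in the second term, $\Ad_g$-invariance turns $-\langle\cD_1'\cF,g^{-1}(\cD_1\cH)g\rangle_\bI$ into $-\langle g\cD_1'\cF g^{-1},\cD_1\cH\rangle_\bI$; in the third term, since $\cD_1\cF\in\cB$ and $(\cD_2\cH)_\cB\in\cB$ pair to zero, $\langle\cD_1\cF,(\cD_2\cH)_\cG\rangle_\bI=\langle\cD_1\cF,\cD_2\cH\rangle_\bI$; and the fourth term is handled the same way, yielding $-\langle\cD_1\cH,\cD_2\cF\rangle_\bI$. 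Adding up the four simplified contributions reproduces exactly \eqref{A3T}.

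I do not expect a genuine obstacle here: Lemma \ref{LemDerP} already carries out the one nontrivial step, namely the conversion of the $B$-derivatives of $f$ into the $\fP$-derivatives of $\cF$. The only point demanding attention is careful bookkeeping of where each derivative takes values --- the $\cB$-valued derivatives in the first slot, and the splitting of the $\cG^\bC_\bR$-valued $\cD_2$ into its $\cG$- and $\cB$-components --- so that the conjugation-invariance of $\langle-,-\rangle_\bI$ and the isotropy of $\cB$ are applied to the correct pairs of arguments.
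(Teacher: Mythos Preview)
Your proposal is correct and follows essentially the same approach as the paper: invoke Lemma \ref{LemDerP} (with $g$ as a spectator) to translate the $B$-derivatives of $f,h$ into the $\fP$-derivatives of $\cF,\cH$, substitute into \eqref{A2T}, and simplify using $\Ad$-invariance of $\langle-,-\rangle_\bI$ and isotropy of $\cB$. The paper's proof is terser but records exactly the same substitution identities and the same final observation about $\langle\cD_1\cF,\cD_2\cH\rangle_\bI$.
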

 \begin{proof}
 We simply substitute the following relations into \eqref{A2T}:
 \be
 D_1f(g,b) = \cD_1 \cF(g,L),
 \quad
 b D_2'f(g,b) b^{-1} = \cD_2\cF(g,L),
 \quad
 D_2 f(g,b)= (\cD_2 \cF(g,L))_\cG.
 \ee
  Using also the corresponding relations for $h = \cH \circ m_2$,  we  get \eqref{A3T} from \eqref{A2T}.
 Note that $\langle \cD_1 \cF, \cD_2 \cH\rangle_\bI = \langle \cD_1 \cF, (\cD_2 \cH)_\cG\rangle_\bI$,
 because $\cD_1 \cF(g,L)\in \cB$.
 \end{proof}

 \begin{rem}\label{rem:Luconv}
For the alert reader, a word on `tricky signs' is in order.
 Using the identifications $\cB^* = \cG$ and $\cG^* = \cB$ in \eqref{dualids},
 the linearization of the Poisson bracket \eqref{PBBT} on $B$ gives
 the Lie bracket on the subalgebra $\cG < \cG^\bC_\bR$, and the linearization of the opposite of the Poisson bracket
  on $G$ gives the Lie bracket on  $\cB < \cG^\bC_\bR$. In standard terminology, this means that
  $(G, (-1)\br{-,-}_G)$ and $(B,\br{-,-}_B)$ form a pair of mutually dual Poisson Lie groups.
The dual group of $(G, \br{-,-}_G$) is obtained from $(B, \br{-,-}_B)$ by keeping the Poisson
structure on the manifold $B$  but replacing the group product by its opposite, defined
by $b_1 \star b_2 = b_2 b_1$, which also changes the corresponding Lie bracket on $\cB$ to its opposite.
 \end{rem}

\subsection{The integrable master system of free motion}
 \label{ss:32}

By the projection
\be
\pi_2: \bM \to \fP, \quad \pi_2(g,L) = L,
\label{pi2}\ee
we can pull-back the elements of $C^\infty(\fP)^G$ to $\bM$ \eqref{Heis3}. Since $\pi_2$ is a Poisson map, this yields
Poisson commuting Hamiltonians on $\bM$. We next describe the flows and the constants of motion
for these Hamiltonians.

\begin{prop}\label{prop:Pint}
Let $\cH = \pi_2^*(\phi)$ for a function $\phi\in C^\infty(\fP)^G$ and pick an initial value
$(g(0), L(0))\in \bM$.
The corresponding integral curve of the  Hamiltonian vector field of $\cH$, defined by means of $\br{- ,- }_\bM$ \eqref{A3T},
is provided by
\be
(g(t), L(t)) = \left(\exp\left( t \cD \phi(L(0)) \right) g(0), L(0) \right).
\label{Pint}\ee
The map $\Psi: \bM \to \fP \times \fP$  defined by
\be
\Psi(g,L) := (\tilde L, L) \quad \hbox{with}\quad \tilde L:= g^{-1} L g
\label{Psi}\ee
is constant along the integral curves \eqref{Pint}.
The map $\Psi$ is Poisson  with  respect to \eqref{A3T} and the direct product Poisson structure on $\fP \times \fP$
obtained from $\br{-,-}_\fP$ \eqref{PBonP} on the second $\fP$ factor and its opposite
(multiple by $-1$)
on the first $\fP$ factor.\footnote{When we wish to emphasize its Poisson structure, we denote this space as $\fP_- \times \fP$.}
\end{prop}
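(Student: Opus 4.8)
The plan is to verify the three assertions of Proposition \ref{prop:Pint} in turn, each by a direct computation using the explicit bracket \eqref{A3T} and the characterization of $G$-invariant functions on $\fP$.

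First, I would establish the flow formula \eqref{Pint}. Since $\cH = \pi_2^*(\phi)$ with $\phi\in C^\infty(\fP)^G$, its derivatives are $\cD_1\cH = \cD_1'\cH = 0$ and $\cD_2\cH(g,L) = \cD\phi(L)$, where moreover $(\cD\phi(L))_\cB = 0$ by the invariance (as recalled after \eqref{PBonP}), so $\cD_2\cH(g,L) = (\cD\phi(L))_\cG = \cD\phi(L) \in \cG$. Plugging this into \eqref{A3T} with arbitrary $\cF$, all terms containing $\cD_1\cH$ or $\cD_1'\cH$ drop out and one is left with $\{\cF,\cH\}_\bM(g,L) = \langle \cD_2\cF, \cD\phi(L)\rangle_\bI - \langle \cD_1\cH,\cD_2\cF\rangle_\bI = \langle \cD_2\cF(g,L), \cD\phi(L)\rangle_\bI$. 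Reading off the Hamiltonian vector field from the definitions \eqref{newD1}--\eqref{newD2} of $\cD_2$, this says $\dot L = 0$ and that $g$ evolves by left translation generated by $\cD\phi(L(0))$, i.e. $\dot g = \cD\phi(L(0))\, g$; this is exactly the curve \eqref{Pint}. (Here $L$ is constant, so $\cD\phi(L(0))$ is genuinely time-independent and the exponential solves the linear equation.) The fact that $\Psi$ is constant along \eqref{Pint} is then immediate: $L(t) = L(0)$, and $\tilde L(t) = g(t)^{-1} L(0) g(t) = g(0)^{-1}\exp(-t\cD\phi(L(0)))\,L(0)\,\exp(t\cD\phi(L(0)))\,g(0)$, which equals $g(0)^{-1}L(0)g(0) = \tilde L(0)$ because $\cD\phi(L(0))\in\cG$ commutes with $L(0)$ in the sense that $\exp(s\cD\phi(L(0)))$ conjugates $L(0)$ trivially — this is precisely the content of $\dot L = [(\cD\phi(L))_\cG, L] = 0$ from \eqref{dotL} specialized to the invariant $\phi$.

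The substantive part is showing that $\Psi$ is Poisson onto $\fP_-\times\fP$. I would do this by computing $\{\Psi^*F, \Psi^*H\}_\bM$ for $F,H\in C^\infty(\fP\times\fP)$ and comparing with the target bracket $\Psi^*\{F,H\}_{\fP_-\times\fP}$. It suffices to check this on pairs of the form $F = \alpha\otimes 1$, $F = 1\otimes\beta$ (and mixed), with $\alpha,\beta\in C^\infty(\fP)$, using bilinearity and the Leibniz rule. For the "second factor" functions, $\Psi^*(1\otimes\beta) = \pi_2^*\beta$, and since $\pi_2$ is visibly Poisson (it forgets the first variable, and the $L$-$L$ part of \eqref{A3T} is exactly $\{-,-\}_\fP$), that block matches the $\fP$-factor automatically. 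The work is in the "first factor" functions $\Psi^*(\alpha\otimes 1)(g,L) = \alpha(g^{-1}Lg)$: I would compute $\cD_1, \cD_1', \cD_2$ of this composite function in terms of $\cD\alpha$ evaluated at $\tilde L = g^{-1}Lg$. The key chain-rule identities are obtained by differentiating $\alpha(g^{-1}Lg)$ along the three relevant curves — $g\mapsto e^{tY}g$, $g\mapsto ge^{tY'}$, and $L\mapsto e^{tX}Le^{tX^\dagger}$ — and expressing each in terms of $\langle\,\cdot\,,\cD\alpha(\tilde L)\rangle_\bI$; one finds that $\cD_2$ of the composite sees $\Ad_{g^{-1}}$ applied appropriately while $\cD_1,\cD_1'$ produce the $\cG$-component of $g\,\cD\alpha(\tilde L)\,g^{-1}$ (the object governing the $\tilde L$-dynamics). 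Substituting all of these into \eqref{A3T} and simplifying — invoking invariance of $\langle-,-\rangle_\bI$ under $\Ad_g$, the decomposition \eqref{cGcBperp}, and the isotropy of $\cG$ and $\cB$ — should collapse the $\{\Psi^*(\alpha\otimes 1),\Psi^*(\beta\otimes 1)\}_\bM$ to $-\{\alpha,\beta\}_\fP(\tilde L) = \{\alpha,\beta\}_{\fP_-}(\tilde L)$, which is the required sign flip on the first factor. Finally the mixed bracket $\{\Psi^*(\alpha\otimes 1),\Psi^*(1\otimes\beta)\}_\bM$ must be shown to vanish, reflecting the direct product structure on the target; this follows because one of the two slots then has $\cD_1 = \cD_1' = 0$, and the surviving $\cD_2$-$\cD_2$ term pairs $\cD\beta(L)$ against the $\cG$-piece coming from $\cD\alpha(\tilde L)$ — but that piece, transported back, is $\Ad_g$ of something whose contraction against $\cD\beta$ is exactly the derivative of $\alpha(\tilde L)$ along the $L$-flow of $\beta$, which is zero since $\tilde L$ is constant under that flow (this is again \eqref{dotL} for invariant functions: the $\phi$-flow fixes $\tilde L$, so conversely the $\beta$-flow as seen in $\alpha(\tilde L)$ is trivial — more directly, $\tilde L$ Poisson-commutes with $L$-functions because the $L$-flows of invariant functions act trivially on $\tilde L = g^{-1}Lg$, as already shown for the flow formula).

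\textbf{Main obstacle.} The bookkeeping in the chain-rule step for $\alpha(g^{-1}Lg)$ is where care is needed: one must correctly track which component ($\cG$ or $\cB$, primed or unprimed) of each derivative of the composite function picks up $\Ad_g$ versus $\Ad_{g^{-1}}$, and get the signs in \eqref{newD1}--\eqref{newD2} right, especially the $X^\dagger$ appearing in the $\cB$-directional derivative on $\fP$. The temptation is to hand-wave this as "obvious by equivariance," but the dagger and the mixed real/imaginary structure of the Manin triple make sign errors easy; I expect to spend most of the effort pinning down the identity $\cD_2(\Psi^*(\alpha\otimes 1))(g,L) = \Ad_{g^{-1}}\big(\text{(something built from }\cD\alpha(\tilde L))\big)$ precisely, after which the algebraic collapse in \eqref{A3T} is routine. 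An alternative, possibly cleaner route — which I would fall back on if the direct computation gets unwieldy — is to factor $\Psi$ through known Poisson maps: writing $\tilde L = g^{-1}Lg$ and using that on the Heisenberg double the maps $(\Xi_L,\Xi_R)$ and $(\Lambda_L,\Lambda_R)$ are Poisson (as recalled around \eqref{LRmaps}), together with Lemma \ref{LemDerP} and the transfer maps $m_1, m_2$, one can try to recognize $L\mapsto L$ and $(g,L)\mapsto g^{-1}Lg$ as compositions of anti-Poisson and Poisson maps already established in \cite{F2}; the sign reversal on the first factor would then be a consequence of the "opposite" appearing in Remark \ref{rem:Luconv} rather than something to be checked by hand.
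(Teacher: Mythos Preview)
Your treatment of the flow formula and the constancy of $\Psi$ matches the paper's proof essentially verbatim: one reads off $\dot g=(\cD\phi(L))g$, $\dot L=0$ from \eqref{A3T} and uses $[L,\cD\phi(L)]=0$.

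For the Poisson property of $\Psi$, the paper does \emph{not} carry out your direct bracket computation; it takes precisely what you call the fallback route. Using the Iwasawa relation $g_R^{-1}b_R=b_L^{-1}g_L$ \eqref{LRrel} one checks in one line that
\[
\Psi\circ m=\bigl(\nu\circ(\Lambda_L)^{-1},\ \nu\circ\Lambda_R\bigr),
\]
and then the claim follows from the known fact \eqref{LRmaps} that $(\Lambda_L,\Lambda_R):M\to B\times B$ is Poisson together with the anti-Poisson property of inversion on a Poisson--Lie group. This is considerably shorter than the direct computation you outline, and it makes the sign flip on the first factor transparent.

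Your direct route would also succeed, but the sketch has two slips worth flagging. First, in the mixed case $\cF=\alpha(g^{-1}Lg)$, $\cH=\pi_2^*\beta$, the surviving terms of \eqref{A3T} are the first \emph{and} the third (since $\cD_1\cF\neq 0$), not just the ``$\cD_2$--$\cD_2$'' term. Second, your justification that the mixed bracket vanishes invokes \eqref{dotL} ``for invariant functions'', but $\beta$ here is arbitrary. The argument still goes through, because the flow of $\pi_2^*\beta$ for \emph{any} $\beta$ is $\dot g=(\cD\beta(L))_\cG\,g$, $\dot L=[(\cD\beta(L))_\cG,L]$, and these two pieces cancel in $\frac{d}{dt}(g^{-1}Lg)=0$; but you should say this rather than appeal to invariance.
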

\begin{proof}
For the Hamiltonian $\cH$ at hand, $\cD_1\cH(g,L)=0$ and $\cD_2\cH(g,L) = \cD \phi(L) \in \cG$.
Therefore we have
\be
\{\cF, \cH\}_\bM(g,L) = \langle \cD_1\cF(g,L), \cD\phi(L)\rangle_\bI, \quad \forall \cF\in C^\infty(\bM).
\ee
Hence, $\cH$ generates the evolution equation
\be
\dot{g} = (\cD\phi(L)) g, \qquad \dot{L}=0,
\label{dotgL}\ee
which is  solved by \eqref{Pint}. The statement that $\tilde L$ is constant along the flows is
then verified by using that $[L, \cD \phi(L)]=0$ for all $\phi \in C^\infty(\fP)^G$.

To see the Poisson map property of $\Psi$, consider the diffeomorphism
\be
m:= m_2 \circ m_1: M \to \bM,
\label{m}\ee
based on \eqref{m1} and \eqref{m2}.
From \eqref{KdecT},  we have
\be
g_R^{-1} b_R = b_L^{-1} g_L, \qquad \forall K\in M.
\label{LRrel}\ee
Using also \eqref{nu}, this implies the equality
\be
\Psi \circ m = \left(  \nu\circ (\Lambda_L)^{-1}, \nu \circ \Lambda_R\right ).
\ee
The Poisson property of $\Psi$ is then a consequence of the facts that $(\Lambda_L, \Lambda_R): M \to B \times B$
is a Poisson map, where $B\times B$ carries the product Poisson structure with $\br{-,-}_B$
on the two copies, and that taking the inverse is an anti-Poisson map on every Poisson--Lie group.
\end{proof}

It is worth noting that the map $\Psi$ is not surjective and its image
\be
\fC:= \Psi(\bM) \subset \fP \times \fP
\label{fC}\ee
is \emph{not} a smooth manifold in any natural way. However,  the dense subset $\fC_\reg \subset \fC$, given by
\be
\fC_\reg:= \Psi(\pi_2^{-1}(\fP^\reg)),  \qquad
 \pi_2^{-1}(\fP^\reg) = G \times \fP^\reg,
\label{fCreg1}\ee
is an embedded submanifold of $\fP^\reg \times \fP^\reg$ of co-dimension $r$.
(Recall that $\fP^\reg$ contains those elements of $\fP$ whose isotropy groups in $G$ are maximal tori.)
In fact, $\fC_\reg$ is also a Poisson submanifold of $\fP^\reg_- \times \fP^\reg$ since it can
presented as the intersection of $\fP^\reg \times \fP^\reg$
with the joint zero set
of Casimir functions $F_i\in C^\infty(\fP_- \times \fP)$ of the form
\be
F_i(\cL_1,\cL_2) = C_i(\cL_1) - C_i(\cL_2), \quad \forall (\cL_1,\cL_2)\in \fP \times \fP,
\label{fCreg2}\ee
where the differentials of the functions $C_i \in C^\infty(\fP)^G$ $(i=1,\dots, r)$
span an $r$-dimensional subspace of the cotangent space of at every point of $\fP^\reg$.
For example, one may obtain the $C_i$ out of independent invariant polynomials on $\cG$ using
the exponential parametrization, $\fP = \exp(\ri \cG)$.

Next, we verify  an important consequence of Proposition \ref{prop:Pint}.

\begin{cor}\label{cor:34}
The two subrings of $C^\infty(\bM)$ defined by
\be
\fH:= \pi_2^*\left(C^\infty(\fP)^G\right)
\quad \hbox{and}\quad \fF:= \Psi^* \left(C^\infty(\fP_- \times \fP)\right)
\label{fHfF}\ee
engender a degenerate integrable system  on the symplectic
Poisson manifold $(\bM, \br{-,-}_\bM)$. The rank of this integrable system
is equal to the rank  $r=\dim(\cG_0)$ of the Lie algebra $\cG$.
\end{cor}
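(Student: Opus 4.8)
The plan is to verify the three conditions of Definition \ref{def:22} (equivalently, those of Definition \ref{def:21}, since $\bM$ is symplectic) for the pair of subrings $\fH$ and $\fF$ given in \eqref{fHfF}. The ambient phase space has dimension $\dim(\bM) = \dim(M) = 2\dim(\cG)$, so $m = \dim(\cG)$. We must exhibit the functional dimensions $r = \dim(\fH)$ and $s = \dim(\fF)$ with $r + s = 2\dim(\cG)$ and $r < m$, show the Poisson-subalgebra and commutativity properties, and establish completeness of the Hamiltonian flows in $\fH$.

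First I would compute the functional dimension of $\fH = \pi_2^*(C^\infty(\fP)^G)$. Since $\pi_2: \bM \to \fP$ is a submersion, the functional dimension of $\fH$ equals the functional dimension of $C^\infty(\fP)^G$. By the chain of isomorphisms \eqref{arrowtop} relating $\fP^\reg$ to $\cG_0^\reg$ and the classical fact that the invariant functions on $\cG$ have gradients spanning the Cartan subalgebra generically, this functional dimension is $r = \dim(\cG_0) = \rank(\cG)$; this is strictly less than $m = \dim(\cG)$ since $\cG$ is simple (hence nonabelian). Next I would compute the functional dimension of $\fF = \Psi^*(C^\infty(\fP_- \times \fP))$. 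Here $\Psi: \bM \to \fC \subset \fP \times \fP$ has image the non-smooth set $\fC$, but on the dense open subset $\pi_2^{-1}(\fP^\reg) = G \times \fP^\reg$ the image $\fC_\reg$ is an embedded submanifold of $\fP^\reg \times \fP^\reg$ of codimension $r$, as recorded after \eqref{fCreg2}. Therefore $\Psi$ restricted there is a submersion onto $\fC_\reg$, and the functional dimension of $\fF$ equals $\dim(\fC_\reg) = 2\dim(\fP) - r = 2\dim(\cG) - r = 2m - r$. Hence $r + s = 2m = \dim(\bM)$, as required, and $s = 2m - r$.

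For condition (2): $\fH \subset \fF$ holds because $\pi_2 = \mathrm{pr}_2 \circ \Psi$ (the second component of $\Psi$ in \eqref{Psi} is $L$ itself), so any $\pi_2^*(\phi)$ with $\phi \in C^\infty(\fP)^G$ is the pullback by $\Psi$ of the function $(\cL_1,\cL_2)\mapsto \phi(\cL_2)$ on $\fP_- \times \fP$. That $\fH$ is a Poisson subalgebra follows since $\pi_2$ is a Poisson map and $C^\infty(\fP)^G$ is closed under $\br{-,-}_\fP$ (it is the Poisson center, by the discussion following \eqref{dotL}). That $\fF$ is a Poisson subalgebra follows since $\Psi$ is a Poisson map (Proposition \ref{prop:Pint}) and $C^\infty(\fP_- \times \fP)$ is trivially closed. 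The commutativity $\br{\cF,\cH}_\bM = 0$ for $\cF \in \fF$, $\cH \in \fH$ reduces, via the Poisson property of $\Psi$, to the statement that every $\phi \in C^\infty(\fP)^G$, pulled back to the second factor, Poisson-commutes with all of $C^\infty(\fP_- \times \fP)$ — which is exactly the Casimir/center property of $C^\infty(\fP)^G$ in $\br{-,-}_\fP$. Condition (3), completeness of the flows generated by $\fH$, is immediate from the explicit integral curves \eqref{Pint} in Proposition \ref{prop:Pint}: they are defined for all $t \in \bR$.

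The only real subtlety — and the step I would expect to require the most care — is the functional-dimension bookkeeping for $\fF$, specifically pinning down that $\Psi$ maps onto $\fC_\reg$ submersively with the claimed codimension, so that $s = 2m - r$ exactly rather than merely $s \le 2m - r$. The codimension-$r$ claim rests on $\fC_\reg$ being cut out of $\fP^\reg \times \fP^\reg$ by the $r$ independent Casimir differences \eqref{fCreg2}, together with the fact that the matching Casimir values of $\tilde L = g^{-1}Lg$ and $L$ impose exactly $r$ independent constraints and no more — which is where genericity on $\fP^\reg$ (maximal-torus isotropy) is used to guarantee that, conversely, for generic $(\cL_1,\cL_2)$ with equal Casimirs there is a $g\in G$ with $g^{-1}\cL_2 g = \cL_1$, so $\Psi$ is onto $\fC_\reg$. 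Once this is in hand, everything else is a direct consequence of Proposition \ref{prop:Pint} and the already-established Poisson-map properties, and the conclusion that $(\bM, \br{-,-}_\bM, \fH, \fF)$ is a degenerate integrable system of rank $r = \dim(\cG_0)$ follows.
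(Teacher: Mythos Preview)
Your argument is correct and follows essentially the same route as the paper's proof: both establish the functional dimensions of $\fH$ and $\fF$ on the dense open set $G\times\fP^\reg$, invoke the Poisson property of $\Psi$ for the closure of $\fF$, and use the factorization $\pi_2=\mu_2\circ\Psi$ for $\fH\subset\fF$. Your treatment of commutativity (via the Casimir property of $C^\infty(\fP)^G$ pulled back through the Poisson map $\Psi$) is an equally valid alternative to the paper's phrasing (that $\Psi$ is constant along the flows of $\fH$); these are two sides of the same coin.

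The one place where your reasoning is slightly loose is the inference ``the image $\fC_\reg$ is a codimension-$r$ submanifold, \emph{therefore} $\Psi$ is a submersion onto it''. Knowing the dimension of the image does not by itself give the rank of $D\Psi$; one could in principle have a map with lower-rank derivative whose image happens to be a submanifold. The paper closes this by a direct rank computation: for $(g,L)\in G\times\fP^\reg$ and $(Xg,Z)\in T_{(g,L)}\bM$ one has $D\Psi(g,L)(Xg,Z)=\bigl(g^{-1}([L,X]+Z)g,\,Z\bigr)$, whose kernel is $\{(Xg,0):[L,X]=0\}$, of dimension $r$ since $L$ is regular. Hence $\mathrm{rank}\,D\Psi(g,L)=\dim(\bM)-r$ everywhere on $G\times\fP^\reg$, and your functional-dimension count for $\fF$ follows. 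With that one line added, your proof is complete.
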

\begin{proof}
The elements of $\fF$ are constant along the flows of the elements of $\fH$, because $\Psi$ is constant
along those flows. Since $\Psi$ is a Poisson map, $\fF \subset C^\infty(\bM)$ is a Poisson subalgebra,
and we only have to establish the functional dimensions of $\fH$ and $\fF$.
To this end, let us denote $r:= \dim(\cG_0)$, and note that the exterior derivatives
of the elements of $C^\infty(\fP)^G$ span an $r$-dimensional space at every point $L\in \fP^\reg$.
Thus, the same is true for their $\pi_2$ pullbacks, at every point of $\pi_2^{-1}(\fP^\reg)$,
which is  a dense open submanifold of $\bM$. Hence, the functional dimension of $\fH$ is $r$.

One can verify by an easy inspection that the derivative $D\Psi$ has constant rank, equal to
$\dim(\bM) - r$, at every point of $G\times \fP^\reg$.
As a result, the transpose $(D\Psi)^*$ satisfies
\be
\dim\left( \mathrm{Im}\left( D\Psi(g,L)^*\right)\right) = \dim(\bM) - r, \qquad \forall (g,L) \in G \times \fP^\reg.
\label{presub}\ee
This implies immediately that $\fF$ has functional dimension $\dim(\bM) - r$.
If $\mu_2: \fP \times \fP$ is the projection onto the second factor, then
$\pi_2 = \mu_2 \circ \Psi$ (with $\pi_2$ in \eqref{pi2}). Thus, $\fH \subset \fF$, which completes the proof.
\end{proof}

\begin{rem}
We refer to the system of Corollary \ref{cor:34} as the \emph{integrable master system} of free
motion on the Heisenberg double.
The presentation of the Heisenberg double via its  model $\bM = G \times \exp(\ri \cG)$
highlights the close analogy with the standard degenerate integrable system on the cotangent
bundle $T^*G \simeq G \times \cG$ associated with the invariant functions $C^\infty(\cG)^G$.
Of course, we can transfer the master system to the models $M$ and $\fM$ \eqref{Heis3}
by means of the diffeomorhisms $m: M\to \bM$ and $m_2: \fM \to \bM$.
We shall make use of all three  models in what follows.
\end{rem}

\section{Hamiltonian reduction of the master system}
\label{S:4}

We consider the reduction of the master system relying on an action of $G$
on the Heisenberg double. The degenerate integrability of the reduced system will be proved
after restriction to a dense open subset of the reduced phase space.
It will be convenient to start with the model $M$ in the first subsection, but utilize the model $\bM$
for the proof of reduced integrability given in the second subsection.

\subsection{Actions of $G$ and Hamiltonian reduction}
\label{ss:41}

We begin by recalling the concept of Poisson--Lie moment map for $G$-actions \cite{LuPhD,Lu,STSrev}, adapted to our case.
Let $(\cM,\br{-,-})$ be a Poisson manifold and $\Lambda: \cM \to B$ a Poisson map with respect to the Poisson bracket $\br{-,-}_B$
\eqref{PBBT}
on the target space. Such a map can be used to generate
an (infinitesimal) Poisson action of the group $(G,\br{-,-}_G)$ \eqref{PBGT}  on $\cM$.
This works by associating a vector field $X_\cM$ on $\cM$, i.e. a derivation of $C^\infty(\cM)$,
to every $X\in \cG$ via the following formula:
 \be
df(X_\cM) := - \langle X, \br{  \Lambda, f}_\cM  \Lambda^{-1} \rangle_\bI, \qquad \forall f\in C^\infty(\cM).
\label{genmomdef}\ee
To explain the meaning of this formula, note that, for any $x_0\in \cM$, we have
\be
\{ \Lambda, f \}_\cM(x_0) := \dt \Lambda(x(t)),
\ee
where $x(t)$ is the integral curve of the Hamiltonian vector field of $f$ satisfying $x(0)= x_0$.
This yields an element of  $T_{\Lambda(x_0)} B$, which is then translated into $\cB = T_e B$ via right multiplication
by $\Lambda(x_0)^{-1}$.
According to the general theory \cite{LuPhD,Lu},  the so-obtained map $\cG\ni X \mapsto X_\cM\in \mathrm{Vect}(M)$ is an antihomomorphism
satisfying the key relation \eqref{PLid1}.
If this $\cG$-action can be integrated
to an action of $G$, then it is automatically a Poisson action.
The map $\Lambda$ is called the Poisson--Lie moment map for the pertinent action.\footnote{The equivalence of our
moment map condition \eqref{genmomdef} to the original one introduced by Lu \cite{LuPhD,Lu} follows
from Remark \ref{rem:Luconv}, since the right-invariant Maurer--Cartan 1-form  $(db) b^{-1}$ on the group $B$,
which features in \eqref{genmomdef},
becomes the left-invariant Maurer--Cartan form $b^{-1} \star d b$ on $B$ equipped with the opposite multiplication,
$b_1 \star b_2 = b_2 b_1$.
Agreement between   \eqref{genmomdef}
 with its counterpart (5.20) in \cite{STSrev} is seen by additionally noting that
 $ X_\cM$ corresponds to $-\widehat X$ used in \cite{STSrev}.}
It enjoys $G$-equivariance with respect to the dressing action \eqref{Dress} of $G$ on $B$ and the action that it generates on $\cM$.
The construction can be applied in two ways. Either one starts with a Poisson action and searches
for a corresponding (equivariant) moment map, or one starts with a Poisson map $\Lambda$ and looks for a
$G$-action that integrates
the vector fields $X_\cM$.
Since $G$ is connected and simply connected, a $G$-action
results from the infinitesimal action whenever the vector fields $X_\cM$ are complete.
As an example, one may check that the dressing action \eqref{Dress} is a Poisson action
of $(G,\br{-,-}_G)$ on $(B,\br{-,-}_B)$ with the identity map from $B$ to $B$ being the moment map.

The Heisenberg double $(M,\br{-,-}_+)$ supports three natural Poisson maps into the
Poisson--Lie group $(B,\br{-, -}_B)$. These serve as moment maps generating corresponding Poisson actions
of $(G,\br{-,-}_G$) on $M$.
In fact, $\Lambda_L$ and $\Lambda_R$ \eqref{XiLaT} are the moment maps for the $G$-actions given by left and right multiplications
of the elements of $M=G_\bC^\bR$ by the elements of the subgroup $G$, and their pointwise product
\be
\Lambda:= \Lambda_L \Lambda_R : M \to B
\label{Lambda} \ee
is the moment map for the so-called quasi-adjoint action of $G$ on $M$.
 As was shown by  Klim\v c\'\i k \cite{Kli}, the moment map \eqref{Lambda} generates a global $G$-action
 whose  action map is given explicitly by
\be
\cA^M: G \times M \to M,
\qquad
\cA^M_\eta(K):= \cA^M(\eta, K) = \eta K \Xi_R(\eta \Lambda_L(K)).
\label{actM}\ee
The map $\Lambda$ is equivariant with respect to this $G$-action on $M$ and the dressing action on $B$.
Since the center $Z(G)$ of $G$ is contained in the center of $G^\bC_\bR$, we obtain the equality
$\Xi_R(\eta \Lambda_L(K)) = \eta^{-1}$ for $\eta \in Z(G)$.
By using this, it is easily seen that
$Z(G)$ acts trivially, and the action of $G$ descends to an
effective action of adjoint group $\bar G:= G/Z(G)$ on $M$.
Since we have a Poisson action, i.e. $\cA^M$ is  a Poisson map with respect to the
product of the Poisson structures on $G$ and on $M$,
the ring of $G$-invariant functions, $C^\infty(M)^G$, is closed
under the Poisson bracket on $M$. The same is true for the functions of the moment  map; and
the Poisson algebras
\be
(C^\infty(M)^G, \br{-,-}_+)
\quad\hbox{and}\quad
(\Lambda^* C^\infty(B), \br{-,-}_+)
 \ee
 are the centralizers of each other in $(M,\br{-,-}_+)$.
 If one defines the reduced phase space by
 \be
 M^\red:= M/G,
 \label{Mred}\ee
 then the identification  $C^\infty(M^\red) \equiv C^\infty(M)^G$ equips $C^\infty(M^\red)$ with a Poisson bracket.
 In this way, one obtains the reduced Poisson space $(M^\red, \br{- ,- }_+^\red)$.
 The pullbacks by $\Lambda$ of the dressing invariant functions on $B$ engender Casimir functions on the reduced phase space,
  because
 \be
 \Lambda^*( C^\infty(B)^G) \subset C^\infty(M)^G.
 \ee
The joint level surfaces of these Casimir  functions  are unions of symplectic leaves.
The reduced phase space $M^\red$ is not a smooth manifold, but is
 a disjoint union of smooth strata.
 However, like in the smooth case, the Hamiltonian vector fields of the smooth functions on $M^\red$
 can still be obtained as projections of the Hamiltonian vector fields of the corresponding elements of
 $C^\infty(M)^G$.

 We are mostly interested in the `big cell' of $M^\red$
 that results by restriction to the principal orbit type \cite{DK,Mic,OR}  for the $G$-action.
 The principal orbits fill the dense open submanifold
 \be
 M_*:= \{ x \in M \mid G_x = Z(G)\},
 \label{M*}\ee
 where $G_x$ denotes the isotropy group of the point $x$.
  Three important features of the restriction to $M_*$ are as follows.
 First,
\be
M_*^{\red}:= M_*/G
\label{M*red}\ee
is a smooth manifold, and is a connected, dense open subset of $M^\red$.
Second, the restriction of the moment map to $M_*$ is a submersion, i.e., its derivative $D \Lambda(x): T_x M_* \to T_{\Lambda(x)} B$
is surjective at every  $x\in M_*$.
Third, $M_*$ is invariant with respect to the Hamiltonian flows of all the elements of $C^\infty(M)^G$.
These statements are immediate consequences of well known general results. For example, the third property is a consequence
of the fact that the flow $\varphi_t^H$ of any invariant function $H\in C^\infty(M)^G$ is equivariant,
\be
\cA_\eta^M \circ \varphi_t^H = \varphi_t^H \circ \cA_\eta^M, \qquad \forall \eta \in G,
\ee
which implies that the isotropy group $G_{\varphi^H_t(x)}$ is constant in $t$ for every $x\in M$.

In order to transfer the action $\cA^M$ \eqref{actM} to the alternative models $\fM$ and $\bM$ of the Heisenberg double \eqref{Heis3},
we use the relations \eqref{KdecT} and \eqref{LRrel} that lead to the identities
\be
\Lambda_L(K) =\Lambda_L(b_L g_R^{-1}) =  \Lambda_R(b_R^{-1} g_R)^{-1} =\Lambda_L(g_R^{-1} b_R)^{-1}
=: \beta_L(g_R, b_R).
\ee
The last equality is  the definition of the map $\beta_L: G \times B \to B$, which can also be written as
\be
\beta_L(g,b) = (\Dress_{g^{-1}} (b))^{-1}, \qquad \forall (g,b) \in G \times B.
\ee
Combining this with the diffeomorphisms $m_1$ \eqref{m1} and $m_2$ \eqref{m2}, we see that the Poisson action $\cA^M$
acquires the following form in terms of the models $\fM$ and $\bM$
\be
\cA^\fM_\eta(g,b) = \left( \Xi_R( \eta \beta_L(g,b))^{-1} g  \Xi_R( \eta \beta_L(g,b)), \Dress_{ \Xi_R( \eta \beta_L(g,b))^{-1}}(b)\right),
\label{cAfM}\ee
and
\be
\cA^\bM_\eta(g,L) =  \left( \tilde \eta   g \tilde \eta^{-1} , \tilde \eta L \tilde \eta^{-1} \right),
\quad \hbox{with}\quad
\tilde \eta = \Xi_R( \eta \beta_L(g,\nu^{-1}(L)))^{-1},
\label{cAbM}\ee
where we  applied the inverse of the diffeomorphism $\nu$ \eqref{nu}.
For any fixed $(g,b)$, the map
\be
\eta \mapsto \Xi_R( \eta \beta_L(g,b))^{-1}
\ee
yields  a diffeomorphism of $G$.  Consequently, the Poisson actions \eqref{cAfM} and \eqref{cAbM} are orbit equivalent (have the same orbits) as the
simpler $G$-actions given on $\fM$ and on $\bM$ by the formulae
\be
A^\fM_\eta(g,b):= (\eta g \eta^{-1}, \Dress_\eta(b)),\qquad
A^{\bM}_\eta (g,L) := (\eta g \eta^{-1}, \eta L \eta^{-1}),
\label{AfMbM}\ee
for all $\eta \in G$, $(g,b)\in \fM$ and $(g,L) \in \bM$.
These simpler actions are \emph{not} Poisson action of $G$,
 but for taking the quotients of the
respective model phase spaces they can be
 used in the same way as
their parent Poisson actions.

In Proposition \ref{prop:Pint}, we introduced the Poisson space $\fP_- \times \fP$, which is
$\fP \times \fP$ equipped with the Poisson bracket $(-1)\br{-,-}_\fP \times \br{-,-}_\fP$ with \eqref{PBonP}.
We may write the elements $(\cL_1,\cL_2) \in \fP_- \times \fP$ in the form
\be
(\cL_1, \cL_2) =  ( \nu(b_1^{-1}), \nu(b_2))
\quad \hbox{with}\quad (b_1, b_2) \in B \times B,\ee
and then we obtain a Poisson map $\hat \Lambda: \fP_- \times \fP \to B$ by the definition
\be
\hat \Lambda: (\cL_1, \cL_2) \mapsto b_1 b_2.
\label{momPP}\ee
As a  moment map,  $\hat \Lambda$  generates a Poisson action of $G$.  The action map
$\hat \cA: G \times (\fP_- \times \fP)\to \fP_- \times \fP$ operates for $\eta \in G$ by
\be
\hat\cA_\eta: (\cL_1, \cL_2) \mapsto
\left(\Xi_R(\eta b_1)^{-1} \cL_1 \Xi_R(\eta b_1), \Xi_R(\eta b_1)^{-1} \cL_2 \Xi_R(\eta b_1)\right),
\label{actonPP}\ee
using $b_1 = (\nu^{-1}(\cL_1))^{-1}$.
It is an instructive exercise to verify this statement, which we do in Appendix \ref{sec:A}.
The Poisson action \eqref{actonPP} possesses the same orbits as the alternative $G$-action having the action map
\be
\hat A_\eta: (\cL_1, \cL_2) \mapsto ( \eta \cL_1 \eta^{-1}, \eta \cL_2 \eta^{-1}).
\ee
The Poisson map $\Psi: \bM \to \fP_- \times \fP$ \eqref{Psi}
relates the relevant moment maps according to
\be
\Lambda \circ m^{-1} = \hat \Lambda \circ \Psi,
\ee
where we used $\Lambda$ in \eqref{Lambda} and $m: M\to \bM$ in \eqref{m}.
It  also satisfies the equivariance properties
\be
\Psi \circ \cA^\bM_\eta = \hat \cA_\eta \circ \Psi
\quad\hbox{and}\quad
\Psi \circ A^\bM_\eta = \hat A_\eta \circ \Psi,
\quad
\forall \eta \in G.
\ee
Our construction implies that
\be
C^\infty(\fP_- \times \fP)^G \subset C^\infty(\fP_- \times \fP)
\label{pre-fFG}\ee
is a Poisson subalgebra, and this entails that
\be
\fF^G:= \Psi^*( C^\infty(\fP_- \times \fP)^G) \subset C^\infty(\bM)^G
\label{fFG}\ee
is also a Poisson subalgebra.

 \begin{rem}
 Let us  take the opportunity to clarify a potentially confusing point that occurs in our earlier paper \cite{F2}.
 Namely,
 it was verified in Appendix C of \cite{F2} that the identity
 $A_\eta^\bM \circ \varphi_t^H = \varphi_t^H \circ A_\eta^\bM$
 \emph{does not} hold for certain $G$-invariant Hamiltonians on $\bM$.
 This is not surprising since $A_\eta^\bM$ \eqref{AfMbM} is not a Poisson action (in fact, the failure of the identity proves this),
 and the analogous identity holds if one uses the original action $\cA_\eta^\bM$ \eqref{cAbM}.
 \end{rem}

\subsection{Reduced integrability}
 \label{ss:42}

Now we use the model $\bM$ of the Heisenberg double \eqref{Heis3}, whereby
$M_*$ \eqref{M*} and $M_*^\red$ \eqref{M*red} get replaced by $\bM_*$ and $\bM_*^\red$, respectively.
That is,  with the map $m$ \eqref{m}, we have
\be
\bM_* = m(M_*), \qquad \bM_*^\red = \bM_*/G.
\label{mM*}\ee
In Section \ref{ss:32}, we  described the master integrable system $(\bM,\br{-,-}_\bM, \fH, \fF)$, whose Hamiltonians and
 constants of motion
\eqref{fHfF} were constructed relying on the Poisson map $\Psi$ \eqref{Psi}.
Eventually, we shall demonstrate that the quadruple $(\bM, \br{-,-}_\bM, \fH, \fF^G)$, with $\fF^G$ in \eqref{fFG}, engenders a degenerate
integrable system  on the Poisson manifold $\bM_*^\red$.
However, it will be advantageous to first deal with a restriction of the reduced system on a certain dense open subset
of $\bM_*^\red$, which will be found to satisfy stronger conditions than those required by Definition \ref{def:22}.

Let $\vLambda: \bM \to B$ be the moment map $\Lambda$ \eqref{Lambda} transferred to $\bM$ and $\vLambda_*$ its restriction
to $\bM_*$, i.e.,
\be
\vLambda = \Lambda \circ m^{-1}, \qquad \vLambda_{*} = \vLambda_{\vert \bM_*}.
\label{vLambda}\ee

\begin{lem}\label{lem:denseleaves} The inverse image $\vLambda_*^{-1}(B^\reg)$ is a dense open subset of $\bM_*$.
Then,
\be
\vLambda_*^{-1}(B^\reg)/G \subset \bM^\red_*
\ee
is a dense open subset, which consists of symplectic leaves of co-dimension $r= \dim(\cG_0)$.
\end{lem}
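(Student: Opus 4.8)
The plan is to handle the three assertions in order: the first two are formal consequences of the submersion property of the moment map together with standard facts about quotients by free proper actions, while the substance of the lemma is the dimension count for the symplectic leaves. For the first assertion, since $B^\reg$ is open in $B$, openness of $\vLambda_*^{-1}(B^\reg)$ in $\bM_*$ is immediate. For density I would use that $\Lambda$ restricted to $M_*$ is a submersion (established above), a property inherited by $\vLambda_* = \vLambda|_{\bM_*}$ through the diffeomorphism $m$; a submersion is an open map, so for every nonempty open $V \subset \bM_*$ the image $\vLambda_*(V)$ is open and nonempty, hence meets the dense subset $B^\reg \subset B$, which shows that $V$ meets $\vLambda_*^{-1}(B^\reg)$. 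In particular this set is nonempty.

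For the second assertion, I would first note that $\vLambda_*^{-1}(B^\reg)$ is $G$-invariant: $\vLambda$ is equivariant for the action $\cA^\bM$ and the dressing action on $B$, and $B^\reg$ is dressing-stable because the dressing isotropy of $\Dress_\eta(b)$ is $\eta G_b \eta^{-1}$, which is a maximal torus precisely when $G_b$ is. Since $Z(G)$ acts trivially and $\bar G = G/Z(G)$ is compact, the projection $\bM_* \to \bM_*^\red$ is the bundle projection of a principal $\bar G$-bundle, hence an open surjection. Consequently the image $\vLambda_*^{-1}(B^\reg)/G$ is open in $\bM_*^\red$, and it is dense because the preimage under the projection of any nonempty open subset of $\bM_*^\red$ is a nonempty open subset of $\bM_*$, which meets the dense set $\vLambda_*^{-1}(B^\reg)$.

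The third assertion is where the real work lies. I would first record that $\dim \bM_*^\red = \dim \bM - \dim \bar G = 2\dim G - \dim G = \dim G$, and that the functions $\vLambda^*(C^\infty(B)^G)$ descend to Casimirs of $\bM_*^\red$ whose differentials span an $r$-dimensional subspace of the cotangent space at every point of $\vLambda_*^{-1}(B^\reg)/G$: indeed the differentials of $C^\infty(B)^G$ already span an $r$-dimensional space at every point of $B^\reg$, and this is preserved under pullback by the submersion $\vLambda_*$ and under descent to the quotient, since the differentials of invariant functions annihilate the orbit directions. Hence every symplectic leaf contained in $\vLambda_*^{-1}(B^\reg)/G$ has codimension at least $r$. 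To get equality I would invoke the reduction theorem for the Poisson--Lie action — most conveniently, as explained around the reference \cite{Al}, by converting the Poisson--Lie $G$-action on the symplectic manifold $\bM$ into an ordinary Hamiltonian $G$-action with the same orbits and the same reduced Poisson structure, whereupon $\vLambda$ is replaced by a genuine moment map $\mu: \bM \to \cG^*$ carrying $B^\reg$ onto the locus of elements with maximal-torus isotropy. The standard theory then identifies the symplectic leaf of $\bM_*^\red$ through the class of $x \in \bM_*$ with a connected component of the point-reduced space $\mu^{-1}(\mu(x))/G_{\mu(x)}$; since $\mu|_{\bM_*}$ is a submersion the fibre has dimension $\dim \bM - \dim \cG^* = \dim G$, and when $\mu(x)$ is regular the isotropy $G_{\mu(x)}$ is a maximal torus, of dimension $r$, acting freely modulo $Z(G)$, so the leaf has dimension $\dim G - r$, i.e. codimension $r$. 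Finally I would check that $\vLambda_*^{-1}(B^\reg)/G$ is a union of entire leaves and not merely of pieces of them: a leaf meeting this set lies in a single level set of the Casimirs $\vLambda^*(C^\infty(B)^G)$, and the level set of $C^\infty(B)^G$ through a point of $B^\reg$ is a single dressing orbit contained in $B^\reg$, which one sees from the isomorphism $C^\infty(B)^G \cong C^\infty(\cG)^G$ induced by the maps in \eqref{arrowtop} together with the facts that invariant functions separate the relevant orbits and regularity is constant along an orbit.

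I expect the main obstacle to be precisely the passage from the cheap lower bound ``codimension $\geq r$'' to the exact value: this requires the structural identification of the symplectic leaves of the reduced Poisson manifold with the point-reduced (equivalently, orbit-reduced) spaces, and hence a careful use of the reduction theorem — either in its direct Poisson--Lie form or via the equivalence \cite{Al} — together with the verification that $B^\reg$ matches the regular locus of the associated $\cG^*$-valued moment map.
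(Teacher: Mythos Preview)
Your argument for the first two assertions matches the paper's: continuity plus the submersion property of $\vLambda_*$ (inherited from $\Lambda_{\vert M_*}$ via $m$) give openness and density upstairs, and the principal-bundle projection $\bM_* \to \bM_*^\red$ carries this to the quotient. The paper is terser about part~(2), but the content is the same.

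For the codimension statement your route diverges from the paper's. You go through Alekseev's equivalence \cite{Al}, converting the Poisson--Lie moment map $\vLambda$ into an ordinary $\cG^*$-valued moment map $\mu$, and then compute the leaf dimension via the point-reduction formula $\dim \mu^{-1}(\mu(x))/G_{\mu(x)} = \dim G - r$ when $\mu(x)$ is regular. This is a valid approach, though it forces you to verify that the Alekseev diffeomorphism $B\to\cG^*$ (which intertwines dressing and coadjoint actions) carries $B^\reg$ to the regular locus of $\cG^*$; you acknowledge this, and it does hold by the chain of isomorphisms in \eqref{arrowtop}. The paper instead cites the Poisson--Lie reduction theory directly \cite{Lu,OR,STSrev}: the symplectic leaves of $\bM_*^\red$ are the connected components of $\vLambda_*^{-1}(\cO_B)/G$ for dressing orbits $\cO_B$, and since these coincide with the joint level sets of the Casimirs descended from $\vLambda^*(C^\infty(B)^G)$, the leaf codimension equals the number of independent Casimir differentials, which is $r$ over $B^\reg$. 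The paper's route is shorter because the identification ``leaf $=$ level set of Casimirs'' already gives both inequalities at once, whereas you split this into a lower bound from Casimirs and an upper bound from the dimension formula. Your approach has the compensating advantage of being more explicit about where the exact dimension comes from, and avoids relying on the reader knowing that dressing orbits are precisely cut out by the invariants.
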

\begin{proof}
The map $\vLambda_*$ is continuous.
Since the action of $G/Z(G)$ is free on $\bM_*$, $\vLambda_*: \bM_* \to B$ is a submersion, and thus it is also an open map.
The inverse image of a dense set under an open map is dense, and the inverse image of an open set under
a continuous map is open.  Therefore,  $\vLambda_*^{-1}(B^\reg) \subset \bM_*$ is dense and open.

Let  $(\bM_*^\red, \br{-,-}_*^\red)$ denote the reduced Poisson manifold obtained by taking
the quotient of $\bM_*$ by the action of $G$. The general reduction theory \cite{Lu,OR,STSrev} says that the symplectic
leaves of this Poisson manifold are the connected components of the sets of the form
\be
\vLambda_*^{-1}(\cO_B)/G,
\label{leaf}\ee
where $\cO_B \subset B$ is a dressing orbit contained in $\vLambda_*(\bM_*)$.
The dressing orbits of maximal dimension are those that lie in $B^\reg$, and their
co-dimension is $r$.

The symplectic leaves \eqref{leaf} can also be identified as the connected components of the
level surfaces of the Casimir functions on $\bM_*^\red$ that arise from
$\vLambda^* (C^\infty(B)^G)$ restricted on $\bM_*$.  By using that both $\vLambda_*$ and the projection
$\pi: \bM_* \to \bM_*^\red$ are submersions, it is easily seen that the differentials of the Casimir functions span
an $r$-dimensional space at every point
of $\vLambda_*^{-1}(B^\reg)/G$.
Hence, the dressing orbits lying in $\vLambda_*(\bM_*) \cap B^\reg$ yield symplectic  leaves
of co-dimension $r$, which are the leaves of maximal dimension.
\end{proof}

\begin{rem}
It is known \cite{F1} that $\vLambda:\bM \to B$ is a surjective map.
Because $\vLambda$ is continuous and its restriction $\vLambda_*$ \eqref{vLambda} is an open map,
we see that
\be
B^\reg \cap \vLambda_*(\bM_*)  \subset B
\ee
is dense and open.
We suspect that $B^\reg$ is contained in $\vLambda_*(\bM_*)$,  but have not proved this.
\end{rem}

We previously introduced the `space of constants of motion' $\fC$ \eqref{fC} and its
dense subset $\fC_\reg$ \eqref{fCreg1}, which is a Poisson submanifold of $\fP_-^\reg \times \fP^\reg$.
Explicitly, $\fC_\reg$ consists of the pairs $(\cL_1,\cL_2)\in \fP^\reg \times \fP^\reg$ for which $\cL_1$ and $\cL_2$
belong to the same $G$-orbit:
\be
\fC_\reg: = \{ (g^{-1}L g, L)\mid L \in \fP^\reg,\, g\in G\}.
\label{fCreg3}\ee
 The group $G$ acts by componentwise
conjugations on $\fP^\reg \times \fP^\reg$, and this restricts to the submanifold $\fC_\reg$.
Now we introduce  $\fC_*\subset \fC_\reg$, which by definition is the dense open subset given by
the principal orbit type for the $G$-action on $\fC_\reg$.
It is easily seen that
\be
\fC_* = \{ (g^{-1} L g, L) \in \fC_\reg \mid G_{(g^{-1} L g, L)} = Z(G)\},
\label{fCstar}\ee
where  $G_{(g^{-1} L g, L)}$ is the isotropy group.
We observe that $C^\infty(\fC_*)^G$ gives rise to a Poisson structure on the smooth manifold
\be
\fC_*^\red:= \fC_*/G.
\label{fC*red}\ee
Then, we define the following subset of $\bM$:
\be
\bM_{**} := \Psi^{-1}(\fC_*).
\label{bM**}\ee
It is clear from the $G$-equivariance of $\Psi$ that $\bM_{**}$ is mapped to itself by the $G$-action.
After some preparation, our goal is to show that the restriction of the master system of
free motion on $\bM_{**}$ descends to a degenerate integrable system on the corresponding quotient space.

\begin{lem}\label{lem:44}
The inverse image $\bM_{**}$ \eqref{bM**}
is a dense open subset of $\bM_{*}\cap \pi_2^{-1}(\fP^\reg)$.
\end{lem}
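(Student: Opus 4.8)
The strategy is to reduce the three claims packed into the statement — that $\bM_{**}$ is \emph{contained in}, \emph{open in}, and \emph{dense in} $\bM_*\cap\pi_2^{-1}(\fP^\reg)$ — to a single structural property of $\Psi$, namely that its restriction to $\pi_2^{-1}(\fP^\reg)=G\times\fP^\reg$ is a submersion onto $\fC_\reg$. This restriction has image $\fC_\reg$ by the very definition \eqref{fCreg1}, and its derivative has rank $\dim(\bM)-r$ at every point of $G\times\fP^\reg$, as already noted in the proof of Corollary \ref{cor:34}; since $\fC_\reg$ has co-dimension $r$ in $\fP^\reg\times\fP^\reg$, this rank equals $\dim\fC_\reg$, so $\Psi|_{G\times\fP^\reg}$ is a submersion, hence a continuous open surjection onto $\fC_\reg$.

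First, $\Psi(\bM_{**})\subset\fC_*\subset\fP^\reg\times\fP^\reg$ forces $\pi_2(\bM_{**})\subset\fP^\reg$, so $\bM_{**}\subset\pi_2^{-1}(\fP^\reg)$. To obtain $\bM_{**}\subset\bM_*$, I would invoke the equivariance $\Psi\circ\cA^\bM_\eta=\hat\cA_\eta\circ\Psi$ and the definition of $\fC_*$: if $x\in\bM_{**}$ and $\eta$ lies in the $\cA^\bM$-isotropy group of $x$, then applying $\Psi$ shows that $\hat\cA_\eta$ fixes $\Psi(x)\in\fC_*$; since $\hat\cA$ has the same orbits as the componentwise conjugation action $\hat A$ used to define $\fC_*$, the $\hat\cA$- and $\hat A$-isotropy groups of $\Psi(x)$ have the same (zero) dimension, and because $G$ is simply connected each of them is isomorphic to the fundamental group of the common orbit, hence has the same finite order; as both contain $Z(G)$ and the $\hat A$-isotropy of $\Psi(x)$ equals $Z(G)$ by $\Psi(x)\in\fC_*$, we get $\eta\in Z(G)$, i.e.\ the $\cA^\bM$-isotropy of $x$ is $Z(G)$ and $x\in\bM_*$. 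Openness is then immediate: $\fC_*$ is open in $\fC_\reg$, so $\bM_{**}=\Psi^{-1}(\fC_*)$ is open in $\pi_2^{-1}(\fP^\reg)$ by continuity of $\Psi$, hence open in the subset $\bM_*\cap\pi_2^{-1}(\fP^\reg)$.

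For density I would use that $\fC_*$ is dense in $\fC_\reg$: since $\Psi\colon\pi_2^{-1}(\fP^\reg)\to\fC_\reg$ is a continuous open map, the preimage $\bM_{**}=\Psi^{-1}(\fC_*)$ of a dense subset is dense in $\pi_2^{-1}(\fP^\reg)$, and since $\bM_{**}\subset\bM_*\cap\pi_2^{-1}(\fP^\reg)\subset\pi_2^{-1}(\fP^\reg)$, it is a fortiori dense in the intermediate set. I expect the main obstacle to be exactly the inclusion $\bM_{**}\subset\bM_*$: one must carefully reconcile the fact that $\bM_*$ is defined through the genuine Poisson action $\cA^\bM$ whereas $\fC_*$ is defined through the plain conjugation action on $\fC_\reg$, using the orbit equivalence of the Poisson actions with the conjugation actions together with the equivariance of $\Psi$, and using simple connectedness of $G$ to upgrade "same orbit" to "same (finite) isotropy group" at the relevant points. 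The topological content — openness and density — is then a soft consequence of $\Psi|_{G\times\fP^\reg}$ being a surjective submersion.
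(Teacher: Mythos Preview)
Your proof is correct and follows essentially the same route as the paper: both hinge on the fact that $\Psi\vert_{G\times\fP^\reg}:\pi_2^{-1}(\fP^\reg)\to\fC_\reg$ is a surjective submersion (hence continuous and open), so that the preimage of the dense open $\fC_*\subset\fC_\reg$ is dense and open in $\pi_2^{-1}(\fP^\reg)$, and a fortiori in $\bM_*\cap\pi_2^{-1}(\fP^\reg)$.

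The only noteworthy difference is the containment step $\bM_{**}\subset\bM_*$. The paper dispatches this in one line: by equivariance of $\Psi$ one has $G_x\leq G_{\Psi(x)}$ for every $x\in\bM$, and $\Psi(x)\in\fC_*$ forces $G_{\Psi(x)}=Z(G)$, hence $G_x=Z(G)$. You instead worry (quite reasonably) about the mismatch between the Poisson action $\cA^\bM$ defining $\bM_*$ and the conjugation action $\hat A$ defining $\fC_*$, and resolve it via the orbit equivalence together with a $\pi_1$-argument using simple connectedness of $G$. That argument is correct, but it is heavier than necessary: since the paper records the \emph{second} equivariance $\Psi\circ A^\bM_\eta=\hat A_\eta\circ\Psi$ as well, one gets directly $G_x^{A^\bM}\leq G_{\Psi(x)}^{\hat A}=Z(G)$; and because $\cA^\bM$ and $A^\bM$ have the same orbits (the map $\eta\mapsto\tilde\eta$ is a diffeomorphism of $G$ fixing $Z(G)$), the condition $G_x^{A^\bM}=Z(G)$ is equivalent to $G_x^{\cA^\bM}=Z(G)$, i.e.\ $x\in\bM_*$. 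Your $\pi_1$ detour buys the same conclusion but at greater cost.
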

\begin{proof}
The $G$-equivariance of the map $\Psi$ \eqref{Psi} implies that $G_x < G_{\Psi(x)}$ holds for all $x\in \bM$.
It follows that $G_x = Z(G)$ for all $x\in \bM_{**}$, i.e.,
\be
\bM_{**} \subset \bM_{*}\cap \pi_2^{-1}(\fP^\reg).
\ee
We know from the proof of Corollary \ref{cor:34} that the restricted map
\be
\Psi: \pi_2^{-1}(\fP^\reg) \to \fC_\reg
\ee
 is a surjective submersion.  In particular, it is both continuous and open, and therefore
the inverse image of the dense open subset $\fC_* \subset \fC_\reg$ enjoys the claimed property.
\end{proof}


\begin{rem} It is an easy consequence of what we proved that in the chain
\be
\bM_{**} \subset \bM_* \subset \bM
\label{chain1}\ee
every subset is dense and open in the one that contains it, including $\bM_{**}\subset \bM$.
Now we explain a few properties of these sets, including that
$\bM_{**} \subset \bM_*$ is a proper subset.

First, by choosing both $g$ and $L$ to be the unit element of $G^\bC_\bR$, we see that $\bM_*$ is a proper subset of $\bM$.
Next, let us recall that the center $Z(G)$ is the intersection of all maximal tori of $G$, and for a fixed
maximal torus $G_0$ one can find (see e.g. \cite{Mein}) another one, $G_0'$, such that
\be
G_0 \cap G_0' = Z(G).
\label{torsect}
\ee
Clearly, one can choose $(\tilde L, L)= \Psi(g,L)$ in such way that the isotropy subgroups with respect to the conjugation action
of $G$ on $\fP$ are $G_L= G_0$ and $G_{\tilde L} = G_0'$.  Then, the isotropy group $G_{(\tilde L, L)}$ with respect
to the diagonal conjugation action of $G$ is $Z(G)$. Consequently, $(\tilde L, L) \in \fC_*$ and  $(g,L) \in \bM_{**}$.

For concreteness, consider $G=\mathrm{SU}(n)$ and choose a pair $(g,L)$,  where $L$ is a diagonal matrix with distinct positive
eigenvalues
and $g$ is a multiple of the matrix of cyclic permutation, i.e.,
\be
g = C (E_{n,1} + \sum_{i=1}^{n-1} E_{i,i+1}),
\ee
with a constant $C$ such that $\det(g)=1$. In this case, one can verify that $G_{(g,L)} = Z(G)$, while $G_{\Psi(g,L)}$ is the
standard maximal torus of $G$. This implies that $(g,L) \in \bM_{*} \setminus \bM_{**}$.
We can generalize this example to other groups  by picking a regular $L$ whose isotropy group is the maximal torus $G_0$,
and
taking $g\in G$ to be a representative of a Coxeter element of the Weyl group
with respect to $G_0$.   By using that \cite{Mein} the fixed point set of the action of the Coxeter element on $G_0$ is the center $Z(G)$,
it is easy to verify  that for this choice  $G_{\Psi(g,L)} = Z(G)$ and $(g,L) \in \bM_* \setminus \bM_{**}$.

Consider an arbitrary  pair $(g_0', L_0)$ for which $g_0' \in G_0'$ an $L_0 \in \exp(\ri \cG_0)$ are
regular elements, i.e., their isotropy groups in $G$ are $G_0'$ and $G_0$, respectively, which are subject to \eqref{torsect}.
The $G$-orbit through $(g_0', L_0)$ belongs to $\bM_*$, and from this one sees that
\be
G^\reg \subset\pi_1(\bM_*)  \quad \hbox{and}\quad \fP^\reg \subset \pi_2(\bM_*),
\label{pibMstar}\ee
since the regular elements of $G$ (and $\fP$) are those elements whose isotropy subgroups under the conjugation action of $G$ are maximal tori,
and all maximal tori are conjugate to each other.
We also note that $\pi_2(\bM_{**}) = \fP^\reg$, but it is not clear at present if
in the relations \eqref{pibMstar} one has equalities or not.
\end{rem}

The Abelian Poisson algebra $\fH$ \eqref{fHfF}
gives rise to the reduced Abelian Poisson algebra $\fH_\red$
of Hamiltonians defined on $\bM^\red = \bM/G$. Similarly,
$\fF^G$ \eqref{fFG} descends to a Poisson algebra of functions
on $\bM^\red$, which we denote $\fF_\red$
and call the
\emph{the Poisson algebra of reduced constants of motion}.
Resulting from \eqref{chain1}, we have the chain of dense open
subsets
\be
\bM^\red_{**} \subset \bM_*^\red \subset \bM^\red,
\ee
and we let $\fH_\red^{**}$ and $\fF_\red^{**}$ denote
 the restrictions of $\fH_\red$ and $\fF_\red$ on $\bM^\red_{**} $, respectively.
 Analogously, we denote by $\fH_\red^{*}$ and $\fF_\red^*$ the corresponding restrictions on
 $\bM^\red_*$.  These Poisson algebras enjoy the natural isomorphisms
 \be
 \fH_\red^{**}  \simeq \fH_{\vert \bM_{**}},\quad
  \fF_\red^{**}  \simeq {\fF^G}_{\vert \bM_{**}}
 \quad\hbox{and}\quad
  \fH_\red^{*}  \simeq \fH_{\vert \bM_{*}},\quad
 \fF_\red^{*}  \simeq {\fF^G}_{\vert \bM_{*}}.
 \label{fHfFstars}\ee

Since $\bM_{**}$ was defined by placing  constraints on the constants of motion,
the flows of all Hamiltonians $\cH\in \fH$ \eqref{fHfF} preserve this dense open submanifold of $\bM_*$.
Taking into account that $\bM_{**}$ is also preserved by the $G$-action, we can
consider the simultaneous reduction of the pertinent Hamiltonian systems after restriction on $\bM_{**}$.
This leads to the \emph{`restricted reduced system'}  denoted
\be
(\bM_{**}^\red, \br{-,-}_{**}^\red, \fH_\red^{**}),
\label{restred}\ee
where  the Poisson structure on
$\bM_{**}^\red= \bM_{**}/G$ results from the identification
\be
(C^\infty(\bM_{**}^\red), \br{-,-}_{**}^\red) \simeq (C^\infty(\bM_{**})^G, \br{-,-}_{\bM_{**}}).
\ee

The commutative diagram of maps displayed in  Figure \ref{Diag-DIS} will be utilized for proving Theorem
\ref{thm:45} below, which represents our first main result.
Here, $p_1$  and $p_2$ are the canonical projections, $\psi:= \Psi_{\vert \bM_{**}}$ and
\be
\psi_\red: \bM_{**}^\red \to \fC_*^\red = \fC_*/G
\ee
is induced by the $G$-equivariance of the map $\psi$.
All these maps are \emph{smooth, surjective submersions}  and are \emph{Poisson maps}.
In fact, $p_1$ and $p_2$ are projections of principal fiber bundles, with structure group $G/Z(G)$.
Since the map $\Psi: \pi_2^{-1}(\fP^\reg) \to \fC_\reg$
is a surjective submersion, this property  is inherited by its  restriction $\psi$. Then, one sees by tracing the diagram   that $\psi_\red$
is also a smooth submersion.     The Poisson property of the maps
follows immediately from the definitions.
Of course, the Poisson structure on $\fC_*^\red$ is defined by the isomorphism
\be
C^\infty(\fC_*^\red) \simeq C^\infty(\fC_*)^G.
\label{PB-fCred}
\ee
Below, we shall first use the Poisson algebra
\be
\fF^\sharp_\red:= \psi_\red^*( C^\infty(\fC_*^\red)).
\label{fFred}\ee
Its relation to $\fF_\red^{**}$  \eqref{fHfFstars} will be clarified later (see Lemma \ref{lem:sharpstar}).

\begin{figure}[ht]
\centering
  \captionsetup{width=.8\linewidth}
   \begin{tikzpicture}
 \node (A)  at (-1.7,1.2) {$\bM_{**}$};
 \node (B)  at (1.7,1.2) {$\fC_{*}$};
 \node (C)  at (-1.7,-1.2) {$\bM_{ * *}^{\red}$};
 \node (D)  at (1.7,-1.2) {$\fC_{*}^{\red}$};
  \path[->] (A) edge  node[above] {$\psi$}  (B); \path[->] (B) edge node[right] {$p_2$}  (D);
  \path[->] (A) edge node[left] {$p_1$}  (C); \path[->] (C) edge node[above] {$\psi_{\red}$}  (D);
 \end{tikzpicture}
 \caption{The sets and maps used in the proof of Theorem \ref{thm:45}.
  All sets are smooth Poisson manifolds and all  maps are smooth Poisson submersions.
 $\fC_*$ is the subset of
 principal orbit type for the $G$-action on
  $\fC_\reg\subset \fC$ \eqref{fCreg3}.
 The map $\psi$ is the restriction of $\Psi$ \eqref{Psi} to $\bM_{**}= \Psi^{-1}(\fC_*)$,
  $p_1$ and $p_2$ are projections of principal fibre bundles. }
\label{Diag-DIS}
\end{figure}

The $r=1$ case, i.e. the case of $G= \mathrm{SU(2)}$, is excluded in the subsequent theorem, since in that
case the reduced system is `only' Liouville integrable.
The proof given below is similar to the proof of an analogous
statement\footnote{Incidentally, in our work we first considered the Heisenberg double;
the reduced integrability for the cotangent bundle was presented in \cite{WGMPnew}  in order to
expound the ideas in a simpler context.}
 concerning Poisson reduction of the cotangent bundle $T^*G$.

\begin{thm}\label{thm:45}
Suppose that $r=\dim(\cG_0) \neq 1$.
 Then, the restricted reduced system \eqref{restred}
 is a degenerate
  integrable system of rank $r$ with constants of motion
provided by the ring of functions
$\fF_\red^\sharp$
\eqref{fFred},
that is, the quadruple $(\bM_{**}^\red,\br{-,-}^\red_{**}, \fH_\red^{**}, \fF_\red^\sharp)$
satisfies the stipulations
of Definition \ref{def:22}.  The reduced Hamiltonian vector fields associated with $\fH_\red^{**}$ span
an $r$-dimensional subspace of the tangent space at every point of $\bM_{**}^\red$,
and the differentials of the elements of $\fF_\red^\sharp$
span a co-dimension $r$ subspace of the cotangent space.
\end{thm}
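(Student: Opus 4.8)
The plan is to verify the three conditions of Definition \ref{def:22} for the quadruple $(\bM_{**}^\red,\br{-,-}^\red_{**}, \fH_\red^{**}, \fF_\red^\sharp)$, exploiting the commutative diagram in Figure \ref{Diag-DIS}. The key structural inputs are already in place: all four maps in the diagram are smooth surjective Poisson submersions, and $\fH_\red^{**} \simeq \fH_{\vert\bM_{**}} = \pi_2^*(C^\infty(\fP)^G)_{\vert \bM_{**}}$ while $\fF_\red^\sharp = \psi_\red^*(C^\infty(\fC_*^\red))$. Since $\pi_2 = \mu_2\circ\Psi$ descends compatibly to the quotients (one has $\pi_2^\red = (\mu_2)_\red \circ \psi_\red$ where $(\mu_2)_\red:\fC_*^\red \to \fP^\reg/G$ is the projection to the second factor), it follows at once that $\fH_\red^{**}\subset\fF_\red^\sharp$. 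The Poisson-subalgebra property of $\fF_\red^\sharp$ comes from pulling back the fact that $C^\infty(\fC_*)^G$ is a Poisson subalgebra of $C^\infty(\fP_-\times\fP)$ (established around \eqref{pre-fFG}–\eqref{fFG}), and the commutativity $\br{\fF_\red^\sharp,\fH_\red^{**}}=0$ is inherited from the upstairs statement that $\fF$ is constant along the flows of $\fH$ (Corollary \ref{cor:34}), transported to $\bM_{**}$ and then to the quotient. Completeness of the Hamiltonian flows of $\fH_\red^{**}$ follows because the lifted flows on $\bM_{**}$ are given by the explicit formula \eqref{Pint}, which is complete, and $\bM_{**}$ is flow-invariant.

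The heart of the matter is the functional-dimension count and the rank of the reduced Hamiltonian vector fields. First I would compute $\dim\bM_{**}^\red = \dim\bM - \dim(G/Z(G)) = \dim\bM - \dim\cG$; note $\dim\bM = \dim G + \dim\fP = 2\dim\cG$, so $\dim\bM_{**}^\red = \dim\cG$, which has the same parity issue as the $T^*G/G$ case — hence the symplectic leaves have dimension $\dim\cG - r$, even, and $\frac{1}{2}(\dim\cG - r)$ is the Liouville bound on a leaf; the rank-$r$ claim with $r\neq 1$ is genuinely the degenerate (super-integrable) regime. Next I would show $\dim_{\mathrm{fun}}\fH_\red^{**} = r$: the exterior derivatives of $C^\infty(\fP)^G$ span an $r$-dimensional space at each point of $\fP^\reg$ (standard Chevalley/Weyl fact, already used in Section \ref{ss:21}), this property passes to the $\pi_2$-pullbacks on the dense open $\bM_{**}\subset\pi_2^{-1}(\fP^\reg)$, and since $p_1$ is a submersion it descends to functional dimension $r$ downstairs. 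For $\fF_\red^\sharp$ I would compute $\dim\fC_*^\red$: from $\fC_* \subset \fP^\reg\times\fP^\reg$ being cut out (in the regular locus) by the $r$ independent Casimir relations \eqref{fCreg2}, we get $\dim\fC_* = 2\dim\cG - r$, hence $\dim\fC_*^\red = 2\dim\cG - r - \dim\cG = \dim\cG - r$; since $\psi_\red^*$ is injective on functions (as $\psi_\red$ is a surjective submersion), $\dim_{\mathrm{fun}}\fF_\red^\sharp = \dim\fC_*^\red = \dim\cG - r = \dim\bM_{**}^\red - r$. Thus $r + s = \dim\bM_{**}^\red$ with $s = \dim\bM_{**}^\red - r$, and $r < \frac{1}{2}\dim\bM_{**}^\red$ precisely when $r < \frac{1}{2}\dim\cG$, which holds for simple $\cG$ once $r\neq 1$ — this is where the hypothesis enters. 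The co-dimension-$r$ statement for the differentials of $\fF_\red^\sharp$ is just the functional-dimension statement restated pointwise on the dense open set where the differentials of the generating Casimirs and the $C_i$'s are in general position.

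The remaining, and I expect hardest, point is the sharp (not merely generic) claim that the reduced Hamiltonian vector fields of $\fH_\red^{**}$ span an exactly $r$-dimensional subspace of the tangent space \emph{at every point} of $\bM_{**}^\red$, and correspondingly that the $\fF_\red^\sharp$-differentials span a co-dimension-$r$ subspace everywhere on $\bM_{**}^\red$. The generic version of both is immediate from functional dimension, but the "at every point" upgrade requires a constancy-of-rank argument. The strategy, following the ideas from \cite{Re1,Re2,FF,WGMPnew} alluded to in the introduction, is: upstairs on $\bM_{**}$, the Hamiltonian vector field of $\cH = \pi_2^*\phi$ is $\dot g = (\cD\phi(L))g$, $\dot L = 0$, so these vector fields at $(g,L)$ span exactly the subspace $\{(Xg, 0) : X \in \mathrm{span}\{\cD\phi(L):\phi\in C^\infty(\fP)^G\}\}$, and on $\fP^\reg$ this span of derivatives is exactly the $r$-dimensional center $\mathfrak{z}(\cG_L)$ of the isotropy algebra (Section \ref{ss:21}) — a fixed dimension $r$ independent of the point. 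One must then check that, modulo the tangent to the $G$-orbit, this $r$-plane projects \emph{injectively} to $T\bM_{**}^\red$; injectivity holds because if $(Xg,0)$ is tangent to the $G/Z(G)$-orbit through $(g,L)$ under $A^\bM$ (equivalently, using that on $\bM_{**}$ the free action $A^\bM$ and the Poisson action $\cA^\bM$ have the same orbits) then the generating $\xi\in\cG$ must simultaneously satisfy $[\xi,L]=0$, i.e. $\xi\in\cG_L$, and $\xi g - g\xi = Xg$ with $X\in\mathfrak z(\cG_L)$; a short computation using $g^{-1}Lg = \tilde L$ and the definition \eqref{fCstar} of $\fC_*$ (so that $\cG_L \cap g\cG_L g^{-1} = \cG_{\tilde L}\cap\cG_L$ is the Lie algebra of $Z(G)$, hence zero on the semisimple part) forces $X = 0$. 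Thus the reduced span is $r$-dimensional at every point of $\bM_{**}^\red$, which is the strengthening beyond Definition \ref{def:22}; the co-dimension-$r$ statement for $\fF_\red^\sharp$ then follows dually, since $\psi_\red$ is a submersion onto $\fC_*^\red$ of the correct dimension. Assembling these gives all clauses of Definition \ref{def:22} plus the pointwise refinements, completing the proof.
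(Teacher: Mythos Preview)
Your proposal is correct and follows essentially the same route as the paper's proof: the same commutative diagram, the same dimension counts ($\dim\bM_{**}^\red=\dim\cG$, $\dim\fC_*^\red=\dim\cG-r$), and the same key injectivity argument for the reduced Hamiltonian vector fields via the trivial isotropy on $\fC_*$ --- the paper phrases this last step by applying $D\psi$ (so that $(Vg,0)\mapsto 0$ while $([W,g],[W,L])\mapsto([W,\tilde L],[W,L])$), whereas you unwind it directly to $W\in\cG_L\cap\cG_{\tilde L}=0$, which is the same computation. One small slip to fix: the degeneracy condition from Definition~\ref{def:22} is $r<m=\tfrac12(\dim\cG-r)$ (half the generic leaf dimension), not $r<\tfrac12\dim\bM_{**}^\red$; both inequalities happen to hold for simple $\cG$ with $r\neq1$, but only the former is what is required.
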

 \begin{proof}
 By the definition of the reduction, every element $\cH_\red\in \fH_\red^{**}$ obeys the relation
 \be
 \cH_\red \circ p_1 =  \cH\vert_{\bM_{**}}
 \quad\hbox{with some}\quad \cH\in \fH= \pi_2^*\left(C^\infty(\fP)^G\right),
  \ee
  and every integral curve $y(t)$ of $\cH_\red$ in $\bM_{**}^\red$ can be presented as the projection
  $p_1(x(t))$ of an integral curve $x(t)$ of $\cH$ in $\bM_{**}$.
  Since the map $\psi$ is constant along $x(t)$, we see from
  Figure \ref{Diag-DIS} that $\psi_\red$ is constant along $y(t)$. This implies immediately
  that the elements  $\fF_\red^\sharp$  \eqref{fFred} are constants of motion for every
  $\cH_\red \in \fH_\red^{**}$. The fact that $\psi_\red$ is a Poisson map entails
  that  $\fF_\red^\sharp$ forms a  Poisson subalgebra of $C^\infty(\bM_{**}^\red)$.
  Recalling that $\psi_\red$ is a submersion, we obtain
  that the dimension of the span of the differentials  of the elements of $\fF_\red^\sharp$ is equal to
  $\dim ( \fC_*^\red)$  at every point of $\bM_{**}^\red$.  Because $G/Z(G)$ acts freely on $\fC_*$ and on $\bM_{**}$, we have
  \be
  \dim ( \fC_*^\red)= \dim(G) - r  \quad\hbox{and}\quad   \dim(\bM_{**}^\red)= \dim(G),
   \ee
 which confirms the statement concerning the differentials of the elements of $\fF_\red^\sharp$.

Next, we verify the claim about the dimension of the span of the reduced Hamiltonian vector fields.
To do this, we pick an arbitrary point $y:=p_1(x) \in \bM_{**}^\red$, with some $x=(g,L)\in  \bM_{**}$.
  We know that
   the values of the reduced Hamiltonian vector fields at $y$
   result by applying the tangent map $Dp_1(x)$ to the values of the original Hamiltonian vector field at $x$.
   It follows from \eqref{Pint} that the latter are
   the tangent vectors of the form
   \be
   (V g, 0) \in T_x \bM_{**}= T_g G \oplus T_L \fP,
   \label{vect1+}\ee
   where $V\in \cG$ is given by the $\cG$-valued derivative of some function $\phi\in C^\infty(\fP)^G$ at
  $L\in \fP$.  Since $L$ is a regular element, these derivatives span a maximal Abelian subalgebra of $\cG$, and thus the linear space of
  the tangent vectors \eqref{vect1+}  is $r$-dimensional.
   We have to verify  that this linear space has zero intersection with $\mathrm{Ker}(D p_1(x))$, consisting of
    the elements
   \be
   ([W,g], [W,L])\in T_x\bM_{**}, \qquad \forall\, W\in \cG.
  \label{vect2} \ee
   Now, if two tangent vectors having the respective forms \eqref{vect1+} and \eqref{vect2} coincide, then so do their
   images in $T_{(\tilde L, L)} \fC_*$ obtained
   by the map $D\psi(x)$. But the image of the tangent vector in \eqref{vect1+}  is zero, while the image of the one in \eqref{vect2} is
   \be
   ([W, \tilde L], [W,L])\in T_{\psi(x)}\fC_* \subset  T_{\tilde L} \fP \oplus T_L\fP ,  \quad \hbox{with} \quad \tilde L = g^{-1} L g.
   \label{vect3}\ee
   Since $G/Z(G)$ acts freely $\fC_*$, the  vector in \eqref{vect3} vanishes only for $W=0$, and then the vector \eqref{vect2}
   also vanishes.
   In conclusion, the $Dp_1(x)$ image of the tangent vectors in \eqref{vect1+} has dimension $r$.

   The differentials of the elements of $\fH_\red^{**}$ span an $r$-dimensional subspace of
   $T_y^* \bM_{**}^\red$ at every $y\in \bM_{**}^\red$ since  their Hamiltonian vector fields span an $r$-dimensional
   subspace of $T_y \bM_{**}^\red$. That is, the functional dimensions of $\fH_\red^{**}$ and $\fH$ are the same.
   It is obvious that  $\fH_\red^{**}$ is contained in $\fF_\red^\sharp$.

   Lemma \ref{lem:denseleaves} implies that a dense open subset of $\bM_{**}^\red$ is filled by
   symplectic leaves of maximal dimension, which have co-dimension $r$.  If $r\neq 1$, then we have
    \be
   r < \frac{1}{2} \left(\dim(G) - r\right) = \frac{1}{2}( \dim(\bM_{**}^\red) - r).
   \ee
   This means that $r$ is strictly smaller than half the maximal dimension of the symplectic leaves
   in $\bM_{**}^\red$, and thus our restricted reduced system satisfies all conditions of Definition \ref{def:22}.

   In the excluded $r=1$ case, that is for $G= \mathrm{SU(2)}$, the reduced system is
   Liouville integrable, but there is no room for degenerate integrability in this case.
    \end{proof}

    \medskip
   Next, we prove an important consequence of  Theorem \ref{thm:45}.

  \begin{cor}\label{cor:46}
  The restriction of the system $(\bM_{**}^\red,\br{-,-}^\red_{**}, \fH_\red^{**}, \fF_\red^\sharp)$
  of Theorem \ref{thm:45} to any
  symplectic leaf of $\bM_{**}^\red$ of  co-dimension $r$ (where $r>1$)
  is a degenerate integrable system in the sense of Definition \ref{def:21}.
  \end{cor}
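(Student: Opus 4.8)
The plan is to obtain the corollary from Theorem~\ref{thm:45} by restricting all the structures to a co-dimension-$r$ symplectic leaf $S\subset\bM_{**}^\red$ and invoking the standard compatibility between the ambient Poisson bracket and the bracket $\br{-,-}_S$ of the symplectic form $\omega_S$ of the leaf, namely $\{i^*f,i^*h\}_S=i^*\{f,h\}_{**}^\red$ for the inclusion $i\colon S\hookrightarrow\bM_{**}^\red$. Set $2m:=\dim S=\dim(\bM_{**}^\red)-r=\dim(G)-r$ and take as candidate rings the restrictions $\fH_S:=i^*(\fH_\red^{**})$ and $\fF_S:=i^*(\fF_\red^\sharp)$. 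With the displayed identity of brackets, the properties established in Theorem~\ref{thm:45} --- that $\fH_\red^{**}$ and $\fF_\red^\sharp$ are Poisson subalgebras, that $\fH_\red^{**}\subset\fF_\red^\sharp$, and that $\{\fF_\red^\sharp,\fH_\red^{**}\}=0$ --- pass directly to $\fH_S\subset\fF_S\subset C^\infty(S)$, so condition~(2) of Definition~\ref{def:21} holds. Condition~(3) is immediate too: the flows of the elements of $\fH_\red^{**}$ are complete, being projections of the explicit flows \eqref{Pint}; they preserve $S$ since Hamiltonian flows on a Poisson manifold preserve its symplectic leaves; and their restrictions to $S$ coincide with the $\omega_S$-Hamiltonian flows of the elements of $\fH_S$.

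What remains is condition~(1), the functional dimensions. For $\fH_S$: by Theorem~\ref{thm:45} the Hamiltonian vector fields of the elements of $\fH_\red^{**}$ span an $r$-dimensional subspace of $T_y\bM_{**}^\red$ at every $y$; these fields are tangent to $S$ and restrict to the $\omega_S$-Hamiltonian vector fields of the elements of $\fH_S$, hence span an $r$-dimensional subspace of $T_yS$; and on a symplectic manifold the span of the differentials of a family of functions has the same dimension as the span of their Hamiltonian vector fields, so $\fH_S$ has functional dimension $r$. Moreover $r<m$: this is exactly the inequality $r<\frac12(\dim(\bM_{**}^\red)-r)$ proved in Theorem~\ref{thm:45} for $r\neq1$, and here $r>1$.

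The one point requiring a little argument is that $\fF_S$ has functional dimension $\dim(S)-r=\dim(\bM_{**}^\red)-2r=2m-r$, and the key fact is that the $r$ Casimir functions cutting out $S$ already belong to $\fF_\red^\sharp$. Indeed, for $c\in C^\infty(B)^G$ the identity $\vLambda|_{\bM_{**}}=\hat\Lambda\circ\psi$ together with the dressing-equivariance of the moment map $\hat\Lambda$ shows that $\vLambda^*c|_{\bM_{**}}$ is $G$-invariant and descends to an element of $\psi_\red^*(C^\infty(\fC_*^\red))=\fF_\red^\sharp$; and, $S$ being a symplectic leaf of maximal dimension, it lies inside $\vLambda_*^{-1}(B^\reg)/G$, so, as in the proof of Lemma~\ref{lem:denseleaves}, one can choose $r$ such functions with linearly independent differentials along $S$. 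Being constant on $S$, their differentials span the conormal space $N_y^*S$ at each $y\in S$, which is precisely the $r$-dimensional kernel of the restriction map $i^*\colon T_y^*\bM_{**}^\red\to T_y^*S$. Since by Theorem~\ref{thm:45} the differentials of the elements of $\fF_\red^\sharp$ span a subspace of $T_y^*\bM_{**}^\red$ of dimension $\dim(\bM_{**}^\red)-r$ that contains $N_y^*S$, the span of the differentials of $\fF_S$ --- the image of this subspace under $i^*$ --- has dimension exactly $(\dim(\bM_{**}^\red)-r)-r=2m-r$. Hence $r+(2m-r)=2m=\dim S$, and $(S,\br{-,-}_S,\fH_S,\fF_S)$ satisfies all conditions of Definition~\ref{def:21}, with rank $r$. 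I expect the genuinely delicate step to be this last dimension count --- ensuring that passing to $S$ lowers the functional dimension of $\fF_\red^\sharp$ by exactly $r$ and not more --- which is where the containment of $N_y^*S$ in the span of the differentials of $\fF_\red^\sharp$, equivalently the identification of the $r$ relevant Casimirs inside $\fF_\red^\sharp$, is indispensable.
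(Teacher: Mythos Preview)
Your proof is correct and follows essentially the same approach as the paper: both arguments hinge on the fact that the $r$ Casimir functions cutting out the co-dimension-$r$ leaf $S$ already belong to $\fF_\red^\sharp$, and use this to show that restriction to $S$ drops the functional dimension of $\fF_\red^\sharp$ by exactly $r$. The paper carries this out by completing the Casimirs to a local coordinate system $(\cC_i, f_a, z_i)$ and observing that the $f_a$ remain independent on $S$, whereas you phrase the same count more abstractly via $N_y^*S=\ker(i^*)$ being contained in the span of the differentials of $\fF_\red^\sharp$; the content is the same.
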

  \begin{proof}
  Choose $r$ functions $C_1,\dots, C_r \in C^\infty(\fP)^G$ that are functionally
  independent at every point of $\fP^\reg$.
  The restrictions of the functions $C_i \circ \vLambda_*$ on $\bM_{**}$ descend to
  Casimir functions $\mathcal{C}_i\in C^\infty(\bM_{**}^\red)$. These functions belong to
  $\fF_\red^\sharp$,
  and any symplectic leaf  of co-dimension $r$  in $\bM_{**}$ is (a connected component of)
  a joint level surface thereof.
  Now, we consider a symplectic leaf $S$ of co-dimension $r$ and  fix an arbitrary point $y\in S$.
  Then, we can select additional $(\dim(\fC_*^\red)- r)$ elements of $\fF_\red^\sharp$, say
  $f_a$, so that
  \be
  \cC_i, f_a,
  \qquad
  i=1,\dots, r,\,\,  a=1,\dots, \dim(\fC_*^\red) - r
  \ee
  are functionally independent at $y$.
  We can also find further $r$ functions, say $z_1,\dots, z_r$, so that the  functions
   \be
   \cC_i, f_a, z_i
   \label{loccord}\ee
   yield local coordinates  on an open set $U \subset \bM_{**}^\red$, containing $y$.
   It follows that the restriction of the functions $ f_a, z_i$  to the level surface $S$ of the Casimirs
   gives local coordinates on $S$ around $y\in S$.  Consequently,  the differentials of the
   restrictions  on $S$ of the elements of $\fF_\red^\sharp$  span a $\dim(S) -r$ dimensional subspace
   of the cotangent space $T^*_yS$ at every $y\in S$.
   This is all what we needed to prove, since we know from Theorem \ref{thm:45}
   that the Hamiltonian vector fields of $\fH_\red^{**}$ span an $r$-dimensional space at every
   $y\in \bM_{**}^\red$.
   They are tangent to the every symplectic leaf, and give the Hamiltonian vector fields
   of $\fH_\red^{**}$ restricted onto the leaf.
\end{proof}

Incidentally, for any generating set  $C_i$ $(i=1,\dots, r)$ of
$C^\infty(\fP)^G$,  the functions $\cH_i = C_i \circ \pi_2\in \fH$
are independent at every point of
   $\pi_2^{-1}(\fP^\reg) \subset \bM$,  and their restrictions on
   $\bM_{*}$ descend to functions  $\cH_i^\red\in C^\infty(\bM_*^\red)$, which are independent at every point of
  $(\bM_{*}\cap \pi_2^{-1}(\fP^\reg))/G$.

Finally, we wish to prove the reduced integrability on the `big cell'
$\bM_*^\red$ of the reduced phase space. We begin by recalling that $\fC_\reg$ \eqref{fCreg3}
is a regular (embedded) submanifold of $\fP^\reg \times \fP^\reg$ and $\fC_*$ is a dense open
subset of $\fC_\reg$. It follows that $\fC_*$ is also a regular submanifold of $\fP \times \fP$.
With the tautological embeddding
\be
\iota: \fC_* \to \fP \times \fP,
\label{iota1}\ee
we obtain
\be
\iota^* ( C^\infty(\fP \times \fP)^G) \subset C^\infty(\fC_*)^G.
\label{subfun}\ee
It is easy to see that this is a proper subset, since one can construct\footnote{For example, for $\fP = \exp(\ri \mathfrak{su}(n))$ take
the function $1/f$, where $f(\tilde L, L) := \prod_{i=1}^{n-1} (\lambda_i - \lambda_{i+1})$ with the $\lambda_i$ denoting the ordered
eigenvalues of $L\in \fP_\reg$.
}
smooth invariant functions on $\fC_\reg$ that blow up when approaching limit points of $\fC_\reg$ lying outside $\fP^\reg \times \fP^\reg$.
The maps in Figure \ref{Diag-DIS} give
\be
p_1^* (\fF_\red^{**})= \psi^*(\iota^* ( C^\infty(\fP \times \fP)^G))
\quad
\hbox{and}\quad
p_1^*(\fF_\red^\sharp) = \psi^*( C^\infty(\fC_*)^G),
\label{sharpstar1}\ee
where we used \eqref{fHfFstars} and \eqref{fFred}.
Then, since $p_1$ and $\psi$ are surjective submersions, we may conclude from  \eqref{subfun} that
\be
\fF_\red^{**} \subset \fF_\red^\sharp
\label{sharpstar2}\ee
is a proper subset.  Nevertheless, the following crucial lemma holds.

\begin{lem}\label{lem:sharpstar}
At each point $y\in \bM^\red_{**}$,
the differentials of the elements of $\fF_\red^{**}$ \eqref{fHfFstars} span the same linear
subspace of the cotangent space $T_y^* \bM^\red_{**}$ as do the differentials of the elements of
$\fF_\red^\sharp$ \eqref{fFred}.
\end{lem}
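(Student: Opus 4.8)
The plan is to reduce the claim to a statement about differentials of $G$-invariant functions on $\fP\times\fP$ along the submanifold $\fC_*$, and to settle that by invoking the standard theory of the principal orbit type. First I would record the general fact: if a compact group acts on a manifold $X$ with principal orbit type locus $X_{\mathrm{pr}}$ (which is open and dense), then $X_{\mathrm{pr}}/G$ is a smooth manifold, the projection onto it is a submersion, and $C^\infty(X_{\mathrm{pr}})^G$ is the pullback of $C^\infty(X_{\mathrm{pr}}/G)$; consequently, at every $x\in X_{\mathrm{pr}}$ the differentials $\{\, df(x)\mid f\in C^\infty(X)^G\,\}$ span precisely the annihilator $(T_x(G\cdot x))^{\circ}\subset T_x^*X$. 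Globally defined invariant functions suffice here, since a locally defined invariant function can be multiplied by a $G$-invariant bump function supported in $X_{\mathrm{pr}}$.

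Next I would check that $\fC_*$ is contained in the principal orbit type locus of $\fP\times\fP$ for the componentwise conjugation action of $G$. Since $Z(G)$ acts trivially on $\fP$, the subgroup $Z(G)$ is the minimal possible isotropy group of this action; and the pair of regular elements $L,\tilde L\in\fP^\reg$ whose centralizers are maximal tori $G_0,G_0'$ with $G_0\cap G_0'=Z(G)$, as in \eqref{torsect} and the surrounding discussion, provides a point of $\fP\times\fP$ whose isotropy group equals $Z(G)$. Hence the principal orbit type locus of $\fP\times\fP$ is exactly $\{(L_1,L_2)\mid G_{(L_1,L_2)}=Z(G)\}$, which by construction contains $\fC_*$.

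The core of the argument is then the comparison at a fixed point $q\in\fC_*$. On the one hand, $\fC_*$ is by definition the principal orbit type locus for the $G$-action on $\fC_\reg$, so applying the general fact to $X=\fC_*$ (with $\fC_*/G=\fC_*^\red$ smooth and $p_2$ a submersion) shows that $\{\, dh(q)\mid h\in C^\infty(\fC_*)^G\,\}$ spans $(T_q(G\cdot q))^{\circ}\subset T_q^*\fC_*$. On the other hand, applying the general fact to $X=\fP\times\fP$ (using the previous paragraph) shows that $\{\, dF(q)\mid F\in C^\infty(\fP\times\fP)^G\,\}$ spans $(T_q(G\cdot q))^{\circ}\subset T_q^*(\fP\times\fP)$; restricting these covectors along the embedding $\iota$ \eqref{iota1} and using that the orbit $G\cdot q$ lies inside $\fC_*$, a routine linear-algebra observation (every functional on $T_q\fC_*$ annihilating $T_q(G\cdot q)$ is the restriction of a functional on $T_q(\fP\times\fP)$ annihilating $T_q(G\cdot q)$) shows that $\{\, d(\iota^*F)(q)\mid F\in C^\infty(\fP\times\fP)^G\,\}$ also spans $(T_q(G\cdot q))^{\circ}\subset T_q^*\fC_*$. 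Thus $\iota^*(C^\infty(\fP\times\fP)^G)$ and $C^\infty(\fC_*)^G$ have the same span of differentials at every point of $\fC_*$.

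Finally I would transport this equality down to $\bM^\red_{**}$. By \eqref{sharpstar1} we have $p_1^*(\fF_\red^{**})=\psi^*\big(\iota^*(C^\infty(\fP\times\fP)^G)\big)$ and $p_1^*(\fF_\red^\sharp)=\psi^*\big(C^\infty(\fC_*)^G\big)$. Pulling back along the surjection $\psi$ the equality of differential-spans just established, we obtain that $p_1^*(\fF_\red^{**})$ and $p_1^*(\fF_\red^\sharp)$ have the same span of differentials at every point $x\in\bM_{**}$; since $p_1$ is a surjective submersion, the cotangent map $(Dp_1(x))^*$ is injective, so this equality descends to the assertion that the differentials of $\fF_\red^{**}$ and of $\fF_\red^\sharp$ span the same subspace of $T_y^*\bM^\red_{**}$ at every $y\in\bM^\red_{**}$. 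I expect the second step --- certifying that $\fC_*$ lies in the \emph{principal} orbit type of $\fP\times\fP$, rather than in some smaller orbit-type stratum --- to be the point requiring the most care, since it is precisely what makes the globally defined invariants of $\fP\times\fP$ as rich along $\fC_*$ as the invariants of $\fC_*$ itself; the remaining steps are bookkeeping with submersions and annihilators.
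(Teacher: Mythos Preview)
Your proof is correct and follows essentially the same architecture as the paper's: reduce to comparing $(D\iota(p))^*\big(\mathrm{span}\{dF(p):F\in C^\infty(\fP\times\fP)^G\}\big)$ with $\mathrm{span}\{dK(p):K\in C^\infty(\fC_*)^G\}$ at each $p\in\fC_*$, invoke the principal orbit type result that the differentials of the invariants span the annihilator of the orbit tangent (the paper cites the Appendix of \cite{WGMPnew} for this), and then transport the equality along the submersions $\psi$ and $p_1$ using \eqref{sharpstar1}.

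The genuine difference is in how the core equality at $p\in\fC_*$ is established. You identify both spans directly as $(T_p(G\cdot p))^{\circ}$ in the two cotangent spaces and use the linear-algebraic fact that restriction $T_p^*(\fP\times\fP)\to T_p^*\fC_*$ maps the annihilator of $T_p(G\cdot p)$ in the big space \emph{onto} the annihilator in the small one. The paper instead argues by dimension count: it shows the inclusion $(D\iota(p))^*(\cV(p))\subset\cW(p)$ from \eqref{subfun}, then computes $\dim\cV(p)=\dim(G)$ and $\dim\cW(p)=\dim(\fC_*^\red)=\dim(G)-r$, and finally pins down $\ker(D\iota(p))^*\cap\cV(p)$ as the $r$-dimensional span of the differentials of the explicit Casimir-type functions $F_i$ from \eqref{fCreg2}. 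Your annihilator argument is cleaner and avoids introducing the $F_i$; the paper's count is more concrete and reuses objects already set up in \eqref{fCreg1}--\eqref{fCreg2}. You also make explicit the verification that $\fC_*$ lies in the principal orbit type locus of $\fP\times\fP$, which the paper uses implicitly when asserting $\dim\cV(p)=\dim(G)$.
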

\begin{proof}
For any point $p \in \fC_*$, define the vector spaces
\be
\cV(p):= \mathrm{span}\{ dF(p) \mid F \in C^\infty(\fP \times \fP)^G\} < T_p^*(\fP \times \fP)
\ee
and
\be
\cW(p):= \mathrm{span}\{ dK(p) \mid K \in C^\infty(\fC_*)^G\} < T_p^*\fC_* .
\ee
Relying  on \eqref{sharpstar1}, we observe that the claim is equivalent to the equality
\be
(D\iota(p))^*(\cV(p)) =  \cW(p).
\ee
Because of \eqref{subfun}, we have
\be
  (D\iota(p))^*(\cV(p)) <  \cW(p).
  \ee
Thus, it is enough to demonstrate that these vector spaces have the same dimension.
To show this, we recall  that for any smooth action of a compact Lie group on a connected manifold $\cM$
 the dimension of the span of the
differentials of the smooth invariant functions at any $p\in \cM$ belonging the principal orbit type
is equal to the co-dimension of the orbit through $p$.
(For a proof of this  well known result, see the Appendix of \cite{WGMPnew}.)
In our case, this implies that
\be
\dim(\cV(p)) = \dim(\cW(p)) + r,
\ee
since $\dim(\cV(p))=\dim(G)$ and $\dim(\cW(p)) = \dim(\fC_*^\red)$.
Now, we notice that
the kernel of $(D\iota(p))^*$ is the span of the differentials $d F_i(p)$ $(i=1,\dots, r)$
of the functions $F_i \in C^\infty(\fP \times \fP)^G$ defined in \eqref{fCreg2}.
This follows since $p\in \fC_*$ has a coordinate neighbourhood  $U_p\subset \fP \times \fP$ whose intersection with $\fC_*$
is the joint zero set of the functions $F_i$ in $U_p$. Here, we used the description of $\fC_\reg$
outlined after \eqref{fCreg1} and that $\fC_* \subset \fC_\reg$ is an open subset.
The kernel lies in $\cV(p)$ and is of dimension $r$.
By putting these arguments together, we find that
\be
\dim((D\iota(p))^*(\cV(p)))  =\dim(\cW(p)),
\ee
which implies the claim of the lemma.
\end{proof}

Now we are ready to state the principal new result of the present paper.

\begin{thm}\label{thm:49} Suppose that $r=\rank(G)>1$ and
consider the restriction of the master system of free motion (described in Section \ref{ss:32}) on the dense, open
submanifold $\bM_* \subset \bM$ of
principal orbit type with respect to the $G$-action \eqref{AfMbM}. Then, this system  descends to
the degenerate integrable system $(\bM^\red_*, \br{-,-}_{\bM_*^\red}, \fH_\red^*, \fF_\red^*)$
on the  Poisson manifold $\bM_*^\red = \bM_*/G$, where
the Poisson subalgebras $\fH_\red^*$ and $\fF_\red^*$ of $C^\infty(\bM_*^\red) \simeq
C^\infty(\bM_*)^G$ arise from the restrictions of $\fH$ \eqref{fHfF} and $\fF^G$ \eqref{fFG}
on $\bM_* \subset \bM$, respectively.
\end{thm}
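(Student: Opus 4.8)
\medskip

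\noindent\emph{Proof strategy.} The plan is to deduce Theorem~\ref{thm:49} from the results already assembled for the restricted reduced system, by verifying the conditions of Definition~\ref{def:22} for $(\bM^\red_*, \br{-,-}_{\bM_*^\red}, \fH_\red^*, \fF_\red^*)$ and, whenever a functional dimension or a span of vector fields is at issue, transferring the question to the dense open cell $\bM_{**}^\red\subset\bM_*^\red$ (dense and open by Lemma~\ref{lem:44}).

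First I would dispose of the algebraic conditions in item~(2) of Definition~\ref{def:21}. As recorded around \eqref{fFG}, $\fF^G$ is a Poisson subalgebra of $C^\infty(\bM)^G$ and $\fH$ is an Abelian one; restricting them to the ($G$- and flow-invariant) open set $\bM_*$ and passing to the quotient gives Poisson subalgebras $\fH_\red^*,\fF_\red^*\subset C^\infty(\bM_*^\red)$. The inclusion $\fH_\red^*\subset\fF_\red^*$ reduces to $\fH\subset\fF^G$, which holds since $\pi_2=\mu_2\circ\Psi$ and pullback along the second-factor projection $\mu_2$ sends $C^\infty(\fP)^G$ into $C^\infty(\fP_-\times\fP)^G$. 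The relation $\{\fF_\red^*,\fH_\red^*\}=0$ follows from $\fF^G\subset\fF$ and Corollary~\ref{cor:34}, which gives $\{\fF,\fH\}=0$ on $\bM$. For item~(3), the elements of $\fH$ have the explicit, globally defined flows of Proposition~\ref{prop:Pint}, and $\bM_*$ is invariant under them, so the reduced flows on $\bM_*^\red$ are complete.

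The substantive step is to compute the two functional dimensions and the span of the reduced Hamiltonian vector fields, which I would do over $\bM_{**}^\red$. There $\fH_\red^*$ and $\fF_\red^*$ restrict exactly to $\fH_\red^{**}$ and $\fF_\red^{**}$ of \eqref{fHfFstars}; for $\fH$ this is immediate, while for $\fF$ one observes that every element of $\fF_\red^{**}$ is the reduction of $\Psi^*(\phi)\vert_{\bM_{**}}$ for some $\phi\in C^\infty(\fP\times\fP)^G$, and the same $\Psi^*(\phi)$ restricted to $\bM_*$ descends to an element of $\fF_\red^*$ coinciding with it on $\bM_{**}^\red$. Since $\bM_{**}^\red$ is open in $\bM_*^\red$, the spans at $y\in\bM_{**}^\red$ of the differentials, and of the Hamiltonian vector fields, of $\fH_\red^*$ and $\fF_\red^*$ agree with those of $\fH_\red^{**}$ and $\fF_\red^{**}$. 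By Theorem~\ref{thm:45}, $\fH_\red^{**}$ has functional dimension $r$ and its Hamiltonian vector fields span an $r$-dimensional subspace at each point of $\bM_{**}^\red$; this gives the functional dimension of $\fH_\red^*$ and the span condition of Definition~\ref{def:22}. For $\fF_\red^*$ the decisive input is Lemma~\ref{lem:sharpstar}: at $y\in\bM_{**}^\red$ the differentials of $\fF_\red^{**}$ span the same subspace as those of $\fF_\red^\sharp$, which by Theorem~\ref{thm:45} has co-dimension $r$; hence $\fF_\red^*$ has functional dimension $s:=\dim(\bM_*^\red)-r$, so that $r+s=\dim(\bM_*^\red)$ as item~(1) requires. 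Finally, Lemma~\ref{lem:denseleaves} identifies the generic symplectic leaves of $\bM_*^\red$ as having co-dimension $r$, so the Poisson tensor has maximal rank $2m=\dim(G)-r$, and the inequality $r<m$ is equivalent to $3r<\dim(G)$, which holds for every compact simple $G$ of rank $r>1$. Assembling these facts proves the theorem.

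I expect the genuine difficulty to lie entirely in the preparatory results---Theorem~\ref{thm:45} and, above all, Lemma~\ref{lem:sharpstar}, which ensures that the honestly reduced and strictly smaller (cf.~\eqref{sharpstar2}) algebra $\fF_\red^{**}$ nevertheless has differentials of full co-rank $r$. Within the present argument the one delicate point is the bookkeeping among the three nested dense open sets $\bM_{**}\subset\bM_*\subset\bM$, in particular the identification of $\fF_\red^*$ with $\fF_\red^{**}$ upon restriction to $\bM_{**}^\red$; once that is in place, Theorem~\ref{thm:49} follows by patching the pieces together.
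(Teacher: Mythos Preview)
Your proposal is correct and follows essentially the same route as the paper's own proof, which is a terse two-sentence argument that the conditions of Definition~\ref{def:22} hold for $(\bM_*^\red,\fH_\red^*,\fF_\red^*)$ because their restrictions to the dense open subset $\bM_{**}^\red$ are $\fH_\red^{**}$ and $\fF_\red^{**}$, whose required properties come from Theorem~\ref{thm:45} and Lemma~\ref{lem:sharpstar}. You have simply unpacked the details the paper leaves implicit---the algebraic closure and inclusion conditions, completeness of the flows, the identification of $\fF_\red^*\vert_{\bM_{**}^\red}$ with $\fF_\red^{**}$, and the rank inequality $r<m$---all of which are handled correctly.
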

\begin{proof}
The statement follows by combining Theorem \ref{thm:45} with Lemma \ref{lem:sharpstar}.
Indeed, $\fH_\red^*$ and $\fF_\red^*$
satisfy the conditions of Definition \ref{def:22}  on $\bM_*^\red$ because of the properties  of their
restrictions $\fH_\red^{**}$ and $\fF_\red^{**}$ on the dense open
subset $\bM_{**}^\red \subset \bM_*^\red$. In particular,
Theorem \ref{thm:45} and Lemma \ref{lem:sharpstar}  imply that
the differentials of the elements of $\fF_\red^*$
span a co-dimension $r$ subspace of the cotangent space at every point of $\bM_{**}^\red$.
\end{proof}

\begin{rem}\label{rem:48}
The full reduced phase space $\bM^\red = \bM/G$ is not a smooth manifold, but it still carries
the Poisson algebra of smooth functions descending from $C^\infty(\bM)^G$.
Moreover (see \cite{OR,SL}), $\bM^\red$
can be decomposed into a disjoint union of symplectic leaves, each of which inherits
an Abelian Poisson algebra from $\fH$ \eqref{fHfF} and a Poisson algebra of constants of motion from $\fF^G$ \eqref{fFG}.
We conjecture that the reduced system is integrable on every such symplectic leaf.
It is worth noting that
the derivation \cite{FK0} of the trigonometric Ruijsenaars--Schneider model by Hamiltonian reduction
of the Heisenberg double of $\mathrm{SU}(n)$ (see also Appendic C)
 provides examples in which $\bM_*^\red$
contains symplectic leaves of dimension $2(n-1)$,  smaller than
the dimension of the generic symplectic leaves if $n\neq 2$, and the reduced system on these leaves is `only' Liouville integrable.
\end{rem}

\section{Dynamical $r$-matrix formulation of the reduced system}
\label{S:5}

Here, we first derive an explicit formula for the reduced Poisson bracket based on a convenient
partial gauge fixing.  The `gauge slice' $\bM_0$ \eqref{bM0reg} intersects every $G$-orbit contained in the dense
open submanifold $\pi_1^{-1}(G^\reg)$ of $\bM$, where $\pi_1: \bM \to G$ is the projection onto the first factor
of $\bM= G\times \fP$. The formula \eqref{RED1P} characterizes the reduced Poisson bracket since every invariant
function $\cF \in C^\infty(\bM)^G$ can be recovered from its restriction $\bar \cF$ on $\bM_0$.
Then, we describe the reduced dynamics induced on the gauge slice.

Consider the set of elements, $G_0^\reg = G_0 \cap G^\reg$,  whose centralizer in $G$ is $G_0$.
Since $G_0 < G^\bC_\bR$, the adjoint action of the elements  of $G_0$ is well-defined on $\cG^\bC_\bR$.
As is easy to see, for any $Q\in G_0^\reg$ the linear operator $(\Ad_Q-\id) \in \End(\cG^\bC_\bR)$ is invertible
on $\cG^\bC_\perp = \cG^\bC_> + \cG^\bC_<$ \eqref{triang}.
Thus,  one can define the linear operator $\cR(Q)\in \End(\cG^\bC_\bR)$  by
\be
 \cR(Q) (X) := \frac{1}{2} (\Ad_{Q} + \id)\circ (\Ad_{Q} - \id)_{\vert \cG_\perp^\bC}^{-1}(X_\perp),
 \qquad \forall Q\in G_0^\reg,\,\, \forall X= (X_0 + X_\perp) \in \cG^\bC_\bR,
 \label{RQ2}\ee
where $X_0\in \cG^\bC_0$ and $X_\perp \in \cG_\perp^\bC$
according to \eqref{triangperp}.
One can check that $\cR(Q)$ maps $\cB$ to $\cB$ and $\cG$ to $\cG$,
and (writing $\cR(Q) X:= \cR(Q)(X)$) it enjoys the identities
\be
\langle \cR(Q) X, Y \rangle_\bI = - \langle X, \cR(Q) Y\rangle_\bI, \qquad \forall X,Y\in \cG_\bR^\bC,
\ee
and
\be
\langle \cR(Q) X, Y \rangle_\bI =
\langle \cR(Q) X_\cG, Y_\cB \rangle_\bI
- \langle  X_\cB, \cR(Q) Y_\cG\rangle_\bI,
\qquad \forall X,Y\in \cG_\bR^\bC.
\label{RED1Pvar}\ee
With the aid of the exponential parametrization of $Q$ and restriction to a linear operator on $\cG$,
  $\cR(Q)$ yields the standard
trigonometric solution of the modified classical dynamical Yang--Baxter equation \cite{EV}  for the pair $\cG_0 \subset \cG$.
This dynamical $r$-matrix  features in Theorem \ref{thm:defSuthP} below.

\subsection{Reduced Poisson brackets}
 \label{ss:51}

Let us introduce
\be
\bM_0 := \{ (Q,L)\in \bM\mid Q\in G_0^\reg\}.
\label{bM0reg}\ee
The $G$-orbits that intersect $\bM_0$ fill
the  dense, open, $G$-invariant submanifold
\be
\pi_1^{-1}(G^\reg) = G^\reg \times \fP \subset \bM.
\label{bMreg}\ee
The intersection of $\bM_0$ with a $G$-orbit is an orbit
of the normalizer
\be
\fN:= N_G(G_0) = \{ \eta \in G\mid \eta G_0 \eta^{-1} = G_0 \},
\label{fN}\ee
which acts on $\bM_0$.   Here, we are referring to the action of the
group elements $\eta\in \fN < G$  determined by  equation \eqref{AfMbM}.
Colloquially, we may call $\fN$ with this action the `residual gauge group'.
Then, as is easily seen, the restriction of functions gives rise
to the following isomorphism:
\be
C^\infty(G^\reg \times \fP)^G \longleftrightarrow C^\infty(\bM_0)^\fN.
\label{isombMreg}\ee
The next definition relies on this isomorphism.

\begin{defn}\label{defn:redPB}
Let $\bar \cF, \bar \cH \in   C^\infty(\bM_0)^\fN$ be the restrictions of $\cF, \cH \in C^\infty(G^\reg \times \fP)^G$.
Then, we define
\be
\{ \bar \cF, \bar \cH \}^\red_{\bM_0} (Q,L) := \{ \cF, \cH\}_\bM(Q,L), \qquad \forall (Q,L) \in \bM_0.
\label{redbMPB}\ee
\end{defn}

On the right-hand side of \eqref{redbMPB} the restriction of
$\br{-,-}_\bM$ \eqref{A3T} to the open submanifold \eqref{bMreg} is used.
The ring of functions $C^\infty(\bM_0)^\fN$ becomes  a Poisson algebra when equipped with the
`reduced Poisson bracket' $\br{-,-}^\red_{\bM_0}$.
We shall express $\br{-,-}_{\bM_0}^\red$ in terms of the derivatives $\cD_1 \bar \cF(Q,L)\in \cB_0$ and
$\cD_2 \bar \cF(Q,L) \in \cG^\bC_\bR$, where  $\cD_1 \bar \cF(Q,L)$ is defined by
\be
\langle Y_0, \cD_1\bar\cF(Q,L)\rangle_\bI = \dt \bar\cF(e^{tY_0} Q, L),\quad\forall Y_0\in \cG_0,
\label{der157}\ee
and the derivative $\cD_2 \bar \cF$ with respect to the second variable is determined
according to \eqref{newD1} and \eqref{newD2}.

\begin{thm}\label{thm:defSuthP}
The definition \eqref{redbMPB} and equation \eqref{A3T}  imply the following formula:
\be
\{\bar \cF,\bar \cH\}_{\bM_0}^\red(Q,L) = \langle \cD_1 \bar \cF, \cD_2 \bar \cH \rangle_\bI -
\langle \cD_1 \bar \cH, \cD_2 \bar \cF \rangle_\bI
+
\langle \cR(Q)(\cD_2 \bar \cH)_\cG, (\cD_2 \bar \cF)_\cB \rangle_\bI
- \langle  (\cD_2 \bar \cH)_\cB, \cR(Q) (\cD_2 \bar \cF)_\cG\rangle_\bI.
\label{RED1P}\ee
Here, the derivatives $\cD_1 \bar \cF\in \cB_0$  and $\cD_2 \bar \cF\in \cG_\bR^\bC$ are
taken at $(Q,L)$, and $\cR(Q)$ is given by
\eqref{RQ2}.
\end{thm}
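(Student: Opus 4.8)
The plan is to evaluate the right-hand side of \eqref{redbMPB}, namely $\{\cF,\cH\}_\bM(Q,L)$, directly from \eqref{A3T}, and to re-express every derivative of the $G$-invariant extensions $\cF,\cH\in C^\infty(G^\reg\times\fP)^G$ occurring there in terms of $\cD_1\bar\cF(Q,L)\in\cB_0$ and $\cD_2\bar\cF(Q,L)\in\cG^\bC_\bR$ (and the analogous derivatives of $\cH$). Two identifications are immediate at $(Q,L)\in\bM_0$: since the second variable is unconstrained, $\cD_2\cF(Q,L)=\cD_2\bar\cF(Q,L)$; and by \eqref{der157}, using $(\cB_0)^*=\cG_0$ from \eqref{dualids}, the $\cB_0$-component of $\cD_1\cF(Q,L)$ equals $\cD_1\bar\cF(Q,L)$. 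I will also use the \emph{Lagrangian identity} $\langle\cD_2\cF(g,L),\cD_2\cH(g,L)\rangle_\bI=0$, valid on all of $\bM$: by construction $\cD_2\cF(g,L)=b\,(D_2'f)(g,b)\,b^{-1}$ with $b=\nu^{-1}(L)$, $f=\cF\circ m_2$ and $(D_2'f)(g,b)\in\cG$, so $\cD_2\cF$ and $\cD_2\cH$ both lie in the $\langle-,-\rangle_\bI$-isotropic subspace $\Ad_b\cG$.

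The next step isolates the off-diagonal part of the first-slot derivative. Since $Q\in G_0^\reg$, the operator $\Ad_Q$ preserves $\cB$ and each of its summands $\cB_0=\ri\cG_0$ and $\cB_+=\cG^\bC_+$; hence $\cD_1'\cF(Q,L)=\Ad_{Q^{-1}}\cD_1\cF(Q,L)$ holds for any $\cF$, so $Q\,\cD_1'\cF\,Q^{-1}=\cD_1\cF\in\cB$ and the second term of \eqref{A3T} vanishes at $(Q,L)$ because $\cB$ is isotropic. Differentiating the conjugation invariance $\cF(e^{tW}ge^{-tW},e^{tW}Le^{-tW})\equiv\cF(g,L)$ in $W\in\cG$, and again using that $\cG$ is Lagrangian, one gets the pointwise relation $\cD_1'\cF-\cD_1\cF=(\cD_2\cF)_\cB$. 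Combining this with $\cD_1'\cF=\Ad_{Q^{-1}}\cD_1\cF$ and taking components along $\cB_0$ and $\cB_+$ yields $(\cD_2\cF)_{\cB_0}=0$ together with $(\Ad_{Q^{-1}}-\id)(\cD_1\cF)_{\cB_+}=(\cD_2\cF)_\cB$; since $\Ad_{Q^{-1}}-\id$ is invertible on $\cG^\bC_+$ by regularity of $Q$, we conclude
\be
\cD_1\cF(Q,L)=\cD_1\bar\cF(Q,L)+(\Ad_{Q^{-1}}-\id)^{-1}_{\vert\cG^\bC_+}\big(\cD_2\bar\cF(Q,L)\big)_\cB,
\ee
and likewise for $\cH$.

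Substituting these relations into \eqref{A3T} at $g=Q$, the contributions of $\cD_1\bar\cF$ and $\cD_1\bar\cH$ reproduce the first two terms of \eqref{RED1P}, so what remains is the purely Lie-algebraic identity
\be
\langle\cD_2\bar\cF,(\cD_2\bar\cH)_\cG\rangle_\bI
+\langle(\Ad_{Q^{-1}}-\id)^{-1}(\cD_2\bar\cF)_\cB,\cD_2\bar\cH\rangle_\bI
-\langle(\Ad_{Q^{-1}}-\id)^{-1}(\cD_2\bar\cH)_\cB,\cD_2\bar\cF\rangle_\bI
=\langle\cR(Q)\,\cD_2\bar\cH,\cD_2\bar\cF\rangle_\bI,
\ee
whose right-hand side equals the last two terms of \eqref{RED1P} by the property \eqref{RED1Pvar} of $\cR(Q)$. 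I would prove this identity by diagonalizing $\Ad_Q$, $(\Ad_Q-\id)^{-1}$ and $\cR(Q)=\tfrac12(\Ad_Q+\id)\circ(\Ad_Q-\id)^{-1}$ on the root spaces $\cG^\bC_{\pm\alpha}$, expressing the Manin-triple projections $(\cdot)_\cG$ and $(\cdot)_\cB$ of $\cD_2\bar\cF,\cD_2\bar\cH$ through their root coefficients (their $\cG_0^\bC$-parts being forced into $\cG_0$ by $(\cD_2\bar\cF)_{\cB_0}=0$). A short root-by-root computation, using $1+(z^{-1}-1)^{-1}=(1-z)^{-1}$ for the unimodular eigenvalues $z$ of $\Ad_{Q^{-1}}$, reorganizes the bracketed ``cross'' contributions into $\langle\cR(Q)\,\cD_2\bar\cH,\cD_2\bar\cF\rangle_\bI$ (this is where the factor $\tfrac12(\Ad_Q+\id)$ appears) plus a symmetric remainder equal to $\tfrac12\langle\cD_2\bar\cH,\cD_2\bar\cF\rangle_\bI$, which vanishes by the Lagrangian identity. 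The hard part is exactly this last step — checking that the gauge-fixing operator $(\Ad_Q-\id)^{-1}$ reassembles into the dynamical $r$-matrix $\cR(Q)$, with the correct factor $\tfrac12$ and the opposite signs on the $\cG^\bC_+$ and $\cG^\bC_-$ sectors — while everything else is routine bookkeeping.
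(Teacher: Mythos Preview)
Your proposal is correct and follows essentially the same route as the paper: identify $\cD_2\cF=\cD_2\bar\cF$ and $(\cD_1\cF)_{\cB_0}=\cD_1\bar\cF$ on $\bM_0$, use $G$-invariance to obtain $\cD_1'\cF-\cD_1\cF=(\cD_2\cF)_\cB$ (hence $(\cD_2\bar\cF)_{\cB_0}=0$ and a formula for $(\cD_1\cF)_{\cB_+}$), note that the second term of \eqref{A3T} vanishes since $\Ad_Q$ stabilizes the isotropic subalgebra $\cB$, and then substitute. The only presentational difference is in the last step: the paper rewrites $(\Ad_{Q^{-1}}-\id)^{-1}_{\vert\cB_+}=-\bigl(\tfrac{1}{2}\id+\cR(Q)\bigr)$ \emph{before} substituting, so the dynamical $r$-matrix appears immediately and no root-by-root check is needed; the residual symmetric term $\tfrac12\langle\cD_2\bar\cF,\cD_2\bar\cH\rangle_\bI$ is then killed by the same Lagrangian/antisymmetry argument you invoke.
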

\begin{proof}
Let $\bar \cF \in C^\infty(\bM_0)^\fN$ be the restriction of $\cF \in C^\infty(G^\reg\times \fP)^G$.
To begin, note that
\be
(\cD_1  \cF(Q,L))_{\cB_0} = \cD_1 \bar \cF(Q,L)
\quad \hbox{and}\quad
\cD_2 \cF(Q,L) = \cD_2 \bar \cF(Q,L),
\quad
\forall (Q,L) \in \bM_0.
\label{DFbarF1}\ee
Then, we take the derivative at $t=0$ of
\be
\cF(e^{tY} Q e^{-tY}, e^{tY} L e^{-tY}) = \cF(Q,L), \quad \forall Y\in \cG,
\ee
and from this obtain
\be
\cD_1'\cF(Q,L) - \cD_1\cF (Q,L) = (\cD_2\cF(Q,L))_\cB,
\ee
which implies
\be
\left(\Ad_{Q^{-1}} - \id\right) \cD_1 \cF(Q,L) = (\cD_2\bar \cF(Q,L))_\cB.
\ee
This in turn entails that
\be
(\cD_2 \cF(Q,L))_{\cB_0}=0\quad\hbox{and}\quad
(\cD_1 \cF(Q,L))_{\cB_+} = - (\frac{1}{2} \id + \cR(Q)) (\cD_2 \bar \cF(Q,L))_{\cB},
\label{DFbarF2}\ee
where the subscripts refer to the decomposition of $\cB$ in \eqref{cGcBperp},
Thus, at $(Q,L) \in \bM_0$, we expressed the derivatives of $\cF$  in terms of the derivatives of $\bar \cF$.
It remains to insert these expressions, and their counterparts for $\cH$, into the
 right-hand side of  \eqref{redbMPB} given by \eqref{A3T}.

 Regarding the third term of \eqref{A3T}, we have $\left\langle Q \cD'_1\cF(Q,L) Q^{-1},  \cD_1\cH(Q,L) \right\rangle_\bI=0$,
 since $\cB$ is stable under $\Ad_Q$.
 We now use \eqref{DFbarF1} and \eqref{DFbarF2} to write
 the last two terms of \eqref{A3T}, at $(Q,L)$, as follows:
 \be
 \begin{aligned}
  \left\langle \cD_1\cF , \cD_2\cH \right\rangle_\bI
-\left\langle \cD_1 \cH , \cD_2\cF \right\rangle_\bI =&
\langle \cD_1 \bar \cF , \cD_2\bar \cH  \rangle_\bI - \langle \cD_1 \bar \cH , \cD_2\bar \cF  \rangle_\bI\\
&-\frac{1}{2} \langle (\cD_2 \bar \cF)_{\cB}, (\cD_2 \bar \cH)_\cG\rangle_\bI
+\frac{1}{2} \langle (\cD_2 \bar \cH)_{\cB}, (\cD_2 \bar \cF)_\cG\rangle_\bI  \\
 &-\langle \cR(Q)(\cD_2 \bar \cF)_{\cB}, (\cD_2 \bar \cH)_\cG\rangle_\bI
+ \langle \cR(Q)(\cD_2 \bar \cH)_{\cB}, (\cD_2 \bar \cF)_\cG\rangle_\bI   \\
=&
\langle \cD_1 \bar \cF , \cD_2\bar \cH  \rangle_\bI - \langle \cD_1 \bar \cH , \cD_2\bar \cF \rangle_\bI\\
& - \frac{1}{2} \langle \cD_2\bar \cF, (\cD_2 \bar \cH)_\cG \rangle_\bI  +
\frac{1}{2} \langle \cD_2\bar \cF, \cD_2 \bar\cH - (\cD_2 \bar \cH)_\cG\rangle_\bI \\
&+ \langle (\cD_2 \bar \cF)_\cB, \cR(Q) (\cD_2 \bar \cH)_\cG\rangle_\bI +
\langle (\cD_2 \bar \cF)_\cG, \cR(Q) (\cD_2 \bar \cH)_\cB\rangle_\bI
\\
=& \langle \cD_1 \bar \cF , \cD_2\bar \cH  \rangle_\bI - \langle \cD_1 \bar \cH , \cD_2\bar \cF  \rangle_\bI \\
 & - \langle \cD_2\bar \cF, (\cD_2 \bar \cH)_\cG \rangle_\bI + \langle \cD_2 \bar \cF, \cR(Q) (\cD_2 \bar \cH)\rangle_\bI
+\frac{1}{2} \langle \cD_2\bar \cF, \cD_2 \bar \cH \rangle_\bI.
 \end{aligned}
 \ee
 By adding also the first term of \eqref{A3T}, we get
 \be
 \{\bar \cF,\bar \cH\}_{\bM_0}^\red = \langle \cD_1 \bar \cF , \cD_2\bar \cH  \rangle_\bI
 - \langle \cD_1 \bar \cH , \cD_2\bar \cF \rangle_\bI
 + \langle \cD_2 \bar \cF, \cR(Q) (\cD_2 \bar \cH)\rangle_\bI
+\frac{1}{2} \langle \cD_2\bar \cF, \cD_2 \bar \cH \rangle_\bI.
 \ee
 Since the L.H.S. is antisymmetric  with respect to the exchange of $\bar \cF$ and $\bar \cH$, and the sum of the first
 three terms of the R.H.S. is also antisymmetric,
 we must have $\langle \cD_2\bar \cF, \cD_2 \bar \cH \rangle_\bI =0$ (as is confirmed by \eqref{PBonP}), and the
 the claim \eqref{RED1P} follows on account of the identity \eqref{RED1Pvar}.
\end{proof}

Observe that
the first term of \eqref{RED1P} contains only $(\cD_2 \bar \cH)_{\cG_0}$
since $(\cD_1 \bar \cF)_1 \in \cB _0$.  The third term depends only on $(\cD_2 \bar \cH)_{\cG_\perp}$ and on
$(\cD_2 \bar \cF)_{\cB_+}$, because $\cR(Q)$ vanishes on $\cG_0$.
 Here, we refer to the decompositions in \eqref{cGcBperp}.
Of course,  similar remarks hold for the second and fourth terms.

There are two alternative ways of dealing with the residual $\fN$ `gauge freedom' that remains after
the restriction to $\bM_0$ \eqref{bM0reg}.
The first is based on the fact that
$G_0$ is a normal subgroup of $\fN$, with the factor group being the Weyl group
\be
W_G = \fN/G_0.
\ee
This leads  to the isomorphisms
\be
\pi_1^{-1}(G^\reg)/G \simeq \bM_0/\fN \simeq (\bM_0/G_0)/W_G,
\ee
i.e., one may first take the quotient of $\bM_0$ by $G_0$ and then divide by the Weyl group.
We shall proceed in this way in the description  of a specific example in  Appendix \ref{sec:C}.
The second possibility is to take into account that $G_0^\reg$ is disconnected, and its connected components are
permuted by the Weyl group.
Therefore, one may restrict to a connected component, a so-called Weyl alcove in $G_0^\reg$,
and then there remains only the residual $G_0$ gauge symmetry. Denoting a fixed Weyl alcove by $\check G_0^\reg$,
one introduces the new gauge slice
\be
\check \bM_0 := \{ (Q,L)\in \bM\mid Q\in \check G_0^\reg\},
\label{checkbM0reg}\ee
which induces the isomorphism
\be
 C^\infty(\bM_0)^\fN \longleftrightarrow C^\infty(\check \bM_0)^{G_0}.
\label{checkisom}\ee
On the other hand, since $\bM_0$ \eqref{bM0reg} is disconnected and its  connected components are permuted by $W_G$, it is clear that
 the expression \eqref{RED1P} defines a Poisson algebra structure on the
larger ring $C^\infty(\bM_0)^{G_0}$, too.
In fact,
the Poisson bracket on $C^\infty(\bM_0)^\fN$ represents  a Poisson structure on the quotient space $\bM_0/\fN$, and this
lifts to its $W_G$ covering space $\bM_0/G_0$, giving rise to  a Poisson bracket on
$C^\infty(\bM_0)^{G_0}$.

\subsection{Reduced dynamics on the gauge slice $\bM_0$}
 \label{ss:52}

The Hamiltonian vector fields associated with the commuting Hamiltonians from $\fH$ \eqref{fHfF} are
projectable on the reduced phase space $\bM/G$, and the projected vector fields
are Hamiltonian with respect to the reduced Poisson structure.
Restricting to the dense open submanifold of $\pi_1^{-1}(G^\reg) \subset \bM$,
we  describe the vector fields on $\bM_0$ \eqref{bM0reg} whose projections on $\bM_0/\fN$ coincide
with the reduced Hamiltonian vector fields on $\pi_1^{-1}(G^\reg)/G$.
Then, we  present a construction of the corresponding integral curves on $\bM_0$.

Below, we use the terms reduced Hamiltonian vector fields and reduced dynamics on $\bM_0$.  This is a slight
abuse of terminology, since  the true reduced dynamics lives on $\bM/G$, of which $\bM_0/\fN$ is a dense open subset.

A vector field $V$ on $\bM_0 = G_0^\reg \times \fP$ can be presented as
\be
V(Q,L) = (V^1(Q,L), V^2(Q,L)) \quad \hbox{with}\quad
V^1(Q,L) \in T_Q G_0,\,\, V^2(Q,L) = T_L \fP.
\ee
Let us consider a Hamiltonian $\bar \cH \in C^\infty(\bM_0)^{\fN}$ which is the restriction of $\cH=\pi_2^* \phi \in \fH$.
This means that
\be
\bar \cH(Q,L) = \phi(L)
\quad \hbox{where} \quad
\phi\in C^\infty(\fP)^G.
\ee
For the derivatives of $\bar \cH$, we have
\be
\cD_1 \bar \cH(Q,L) = 0
\quad \hbox{and}\quad
\cD_2 \bar \cH(Q,L) = \cD \phi(L) \in \cG.
\ee
See equations \eqref{newD1} and \eqref{newD2} for the definition of $\cD \phi$.
Now, using \eqref{RED1P},
we associate with $\bar \cH$ the `reduced Hamiltonian vector field' $V_{\bar \cH}$ on $\bM_0$ by
imposing the following condition:
\be
\br{\bar \cF, \bar \cH}_{\bM_0}^\red = V_{\bar \cH}[\bar \cF],
\qquad \forall \bar \cF \in C^\infty(\bM_0)^\fN,
\label{Vdef}\ee
where $V_{\bar \cH} [\bar \cF]$ denotes the derivative of $\bar \cF$ along the vector field.
There is an ambiguity in $V_{\bar \cH}$ because of the invariance property of $\bar \cF$.
It is convenient to require \eqref{Vdef} for all $\bar \cF \in C^\infty(\bM_0)^{G_0}$, and then the
residual ambiguity in $V_{\bar \cH}$ is the addition of an arbitrary vector field that is tangent
to the $G_0$-orbits, representing an infinitesimal $G_0$ gauge transformation.

\begin{prop}
The `reduced Hamiltonian vector field' $V_{\bar \cH}$ defined above has the following form:
\be
V_{\bar \cH}^1(Q,L) =  \cD \phi(L)_0 Q, \qquad
V_{\bar \cH}^2(Q,L) = [ \cR(Q) \cD\phi(L), L],
\label{VHclaim}\ee
up to the ambiguity of adding an arbitrary vector field tangent to the $G_0$-orbits in $\bM_0$.
\end{prop}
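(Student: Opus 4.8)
The plan is to check by a direct computation that the vector field with components \eqref{VHclaim} satisfies the defining relation \eqref{Vdef} for every $\bar\cF\in C^\infty(\bM_0)^{G_0}$ --- recalling, as noted after Theorem \ref{thm:defSuthP}, that \eqref{RED1P} makes sense on this larger ring. I would begin by specializing \eqref{RED1P} to the Hamiltonian at hand: since $\bar\cH(Q,L)=\phi(L)$ with $\phi\in C^\infty(\fP)^G$, we have $\cD_1\bar\cH=0$ and $\cD_2\bar\cH=\cD\phi(L)\in\cG$ (Proposition \ref{prop:Pint}), so $(\cD_2\bar\cH)_\cB=0$ and $(\cD_2\bar\cH)_\cG=\cD\phi(L)$, and only the first and third terms of \eqref{RED1P} survive, giving
\be
\{\bar\cF,\bar\cH\}_{\bM_0}^\red(Q,L)=\langle \cD_1\bar\cF,\cD\phi(L)\rangle_\bI+\langle \cR(Q)\cD\phi(L),(\cD_2\bar\cF)_\cB\rangle_\bI;
\ee
moreover, since $\cD_1\bar\cF\in\cB_0$ is paired with $\cG_0$ only under $\langle-,-\rangle_\bI$, the first term equals $\langle \cD_1\bar\cF,\cD\phi(L)_0\rangle_\bI$.

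Next I would evaluate the candidate field on $\bar\cF$, using that the derivative along $V_{\bar\cH}=(V^1,V^2)$ splits into the derivative along $V^1$ at fixed $L$ plus the derivative along $V^2$ at fixed $Q$. For $V^1=\cD\phi(L)_0\,Q$, which is the velocity at $t=0$ of the curve $e^{t\cD\phi(L)_0}Q$ in $G_0$ (note $\cD\phi(L)_0\in\cG_0$, so $V^1\in T_QG_0$ as required), the definition \eqref{der157} produces the contribution $\langle \cD\phi(L)_0,\cD_1\bar\cF\rangle_\bI$. For the second component I would use that $\cR(Q)$ maps $\cG$ into $\cG$ (as recorded after \eqref{RQ2}), so that $W:=\cR(Q)\cD\phi(L)\in\cG$; then $V^2=[W,L]$ is the velocity at $t=0$ of $e^{tW}Le^{-tW}$, a curve in $\fP$ because $W^\dagger=-W$ and $L^\dagger=L$, and \eqref{newD2} yields the contribution $\langle W,(\cD_2\bar\cF)_\cB\rangle_\bI$. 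Adding the two contributions and invoking the symmetry of $\langle-,-\rangle_\bI$ reproduces precisely the right-hand side of the displayed bracket, establishing \eqref{Vdef}.

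It then remains to observe that adding to $V_{\bar\cH}$ any vector field tangent to the $G_0$-orbits in $\bM_0$ --- at $(Q,L)$ these are the vectors $(0,[Z,L])$ with $Z\in\cG_0$, since $G_0$ is abelian and fixes $Q$ under conjugation --- changes the derivative of no $G_0$-invariant function and hence preserves \eqref{Vdef}; conversely, on a dense open subset of $\bM_0$ the differentials of the $G_0$-invariant functions span the conormal to these orbits, so this is exactly the indeterminacy asserted in the statement. I expect the only mildly delicate point to be the bookkeeping in the middle step: keeping straight which component of each $\cB$- or $\cG$-valued derivative pairs with which, and confirming that $V^1$ and $V^2$ genuinely lie in $T_QG_0$ and $T_L\fP$ so that \eqref{VHclaim} defines a bona fide vector field on $\bM_0$. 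Once \eqref{RED1P} is in hand, everything else is a short verification.
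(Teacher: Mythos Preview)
Your proof is correct and follows essentially the same route as the paper's: specialize \eqref{RED1P} using $\cD_1\bar\cH=0$ and $\cD_2\bar\cH=\cD\phi(L)\in\cG$ to obtain the expression \eqref{spelled1}, then match the two surviving terms against the derivatives of $\bar\cF$ along the components of \eqref{VHclaim} via \eqref{der157} and \eqref{newD2}. The paper's proof is considerably terser --- it simply writes down \eqref{spelled1} and asserts the identification --- whereas you spell out why $V^1\in T_QG_0$, why $W=\cR(Q)\cD\phi(L)\in\cG$ so that $V^2\in T_L\fP$, and why the $G_0$-orbit tangents have the form $(0,[Z,L])$; these are useful clarifications but not a different argument.
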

\begin{proof}
On account of \eqref{RED1P},
the equality \eqref{Vdef} can be spelled out as
\be
\langle \cD_1 \bar \cF(Q,L), \cD \phi(L)_0\rangle_\bI +
\langle \cD_2 \bar \cF(Q,L)_\cB, \cR(Q) \cD \phi(L)\rangle_\bI =  V_{\bar \cH}[\bar \cF](Q,L).
\label{spelled1}\ee
The definition of the derivatives $\cD_i \bar \cF$ implies that \eqref{spelled1}
is equivalent to the claimed formula \eqref{VHclaim}, up to the
ambiguity of $V_{\bar \cH}$ discussed above.
\end{proof}

The  vector field $V_{\bar \cH}$ can also be derived by first taking  the original Hamiltonian vector field of $\cH\in \fH$ on $\bM$,
then restricting it to $\bM_0$, and adding a vector field tangent to the $G$-orbits in a such a way that the result is tangent to $\bM_0$.
The original Hamiltonian vector field is provided by  \eqref{dotgL}, and  one can verify  that
\be
V_{\bar \cH}^1(Q,L) = \cD\phi(L)Q + [ \cR(Q)\cD\phi(L) -\frac{1}{2} \cD \phi(L), Q],
\quad
V_{\bar \cH}^2(Q,L) =  [ \cR(Q)\cD\phi(L)  -\frac{1}{2} \cD \phi(L), L],
\ee
where the Lie brackets define an element of $T_{(Q,L)} \bM$  that is tangent to the $G$-orbit through $(Q,L)\in \bM_0$.
Note that $[\cD \phi(L), L]=0$ because of the $G$-invariance of $\phi$, and also $[\cD \phi(L)_0, Q]=0$.

Next, we present a quadrature for finding the integral curves of the vector fields \eqref{VHclaim}, which govern the dynamics
induced on the gauge slice $\bM_0$ \eqref{bM0reg}.

\begin{prop}
For a fixed function $\phi\in C^\infty(\fP)^G$ and a point $(Q_0, L_0) \in \bM_0$, let
$\eta_1(t)$ be a $G$-valued smooth function on an interval $(-\epsilon, \epsilon)\subset \bR$ such that
$\eta_1(0) = e$ and
\be
Q(t) := \eta_1(t) \exp(t \cD\phi(L_0)) Q_0\eta_1(t)^{-1} \in G_0^\reg, \quad \forall t\in (-\epsilon, \epsilon).
\label{eta1}\ee
Furthermore, let $\eta_0(t)\in G_0$ with $\eta_0(0) = e$ be the unique solution of the differential equation
\be
\dot{\eta}_0(t) \eta_0(t)^{-1} = -  \frac{1}{2} \cD \phi (\eta_1(t) L_0 \eta_1(t)^{-1})_0 - (\dot{\eta}_1(t) \eta_1(t)^{-1})_0.
\ee
Then, $Q(t)$ and $L(t):= \eta_0(t) \eta_1(t) L_0 \eta_1(t)^{-1} \eta_0(t)^{-1}$,
defined on the interval $(-\epsilon, \epsilon)$,
gives an integral curve of the vector field
$V_{\bar \cH}$ \eqref{VHclaim}, with initial value $(Q_0, L_0)$.
\end{prop}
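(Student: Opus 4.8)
The plan is to verify the three assertions directly: that the pair $(Q(t),L(t))$ stays in $\bM_0$, that it reduces to $(Q_0,L_0)$ at $t=0$, and that it solves $\frac{d}{dt}(Q,L)=V_{\bar\cH}(Q,L)$ for the vector field \eqref{VHclaim}. The guiding picture is that, by Proposition \ref{prop:Pint}, the integral curve of $\cH=\pi_2^*\phi$ through $(Q_0,L_0)$ on $\bM$ is $(e^{tX}Q_0,L_0)$ with $X:=\cD\phi(L_0)$, and the curve on the gauge slice is obtained by transporting this upstairs curve back into $\bM_0$ by a time-dependent $\gamma(t)\in G$ acting through \eqref{AfMbM}. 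Setting $\gamma(t):=\eta_0(t)\eta_1(t)$ and using that $\eta_0(t)\in G_0$ commutes with $Q(t)\in G_0^\reg$ (as $G_0$ is abelian), one has $Q(t)=\gamma(t)\bigl(e^{tX}Q_0\bigr)\gamma(t)^{-1}$ and $L(t)=\gamma(t)L_0\gamma(t)^{-1}$, so $(Q(t),L(t))=A^\bM_{\gamma(t)}(e^{tX}Q_0,L_0)$. This lies in $\bM_0$ because $Q(t)\in G_0^\reg$ by hypothesis while $\fP=\exp(\ri\cG)$ is stable under conjugation by $G$, and it equals $(Q_0,L_0)$ at $t=0$ since $\eta_0(0)=\eta_1(0)=e$. (In passing one notes that an $\eta_1$ with the stated properties does exist on a sufficiently small interval, since $(\eta,Q)\mapsto\eta Q\eta^{-1}$ is a submersion from $G\times G_0^\reg$ onto $G^\reg$, so the curve $e^{tX}Q_0\in G^\reg$ admits a local lift through $(e,Q_0)$.)

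For the $Q$-component, I would differentiate $Q=\eta_1(e^{tX}Q_0)\eta_1^{-1}$; writing $\xi:=\dot\eta_1\eta_1^{-1}$ this gives
\be
\dot Q Q^{-1}=\Ad_{\eta_1}X-(\Ad_Q-\id)\xi,
\label{planQ}\ee
and $G$-equivariance of $\cD\phi$ (recalled after \eqref{dotgL}) turns $\Ad_{\eta_1}X$ into $\cD\phi(M)$ with $M(t):=\eta_1L_0\eta_1^{-1}$. Since $Q(t)\in G_0$ forces $\dot Q Q^{-1}\in\cG_0$, while $(\Ad_Q-\id)$ annihilates $\cG_0$ and is invertible on $\cG_\perp$ for $Q\in G_0^\reg$, decomposing \eqref{planQ} along $\cG=\cG_0+\cG_\perp$ yields simultaneously $\dot Q Q^{-1}=\cD\phi(M)_0$ and the identity $\xi_\perp=(\Ad_Q-\id)^{-1}\cD\phi(M)_\perp$ (the inverse understood on $\cG_\perp$). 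Because $\eta_0\in G_0$ fixes $\cG_0$ pointwise and preserves $\cG_\perp$, equivariance gives $\cD\phi(L)=\Ad_{\eta_0}\cD\phi(M)$, hence $\cD\phi(L)_0=\cD\phi(M)_0$ and $\cD\phi(L)_\perp=\Ad_{\eta_0}\cD\phi(M)_\perp$, so $\dot Q=\cD\phi(L)_0\,Q=V^1_{\bar\cH}(Q,L)$.

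For the $L$-component, I would differentiate $L=\eta_0M\eta_0^{-1}$ and use $\dot M=[\xi,M]$ to obtain $\dot L=[\,\zeta+\Ad_{\eta_0}\xi,\,L\,]$ with $\zeta:=\dot\eta_0\eta_0^{-1}$. The defining ODE of $\eta_0$ reads $\zeta=-\tfrac12\cD\phi(M)_0-\xi_0$, so, since $\Ad_{\eta_0}\xi_0=\xi_0$, we get $\zeta+\Ad_{\eta_0}\xi=-\tfrac12\cD\phi(M)_0+\Ad_{\eta_0}\xi_\perp$. Now insert $\xi_\perp=(\Ad_Q-\id)^{-1}\cD\phi(M)_\perp$; because $\eta_0$ and $Q$ both lie in the abelian group $G_0$, $\Ad_{\eta_0}$ commutes with $\Ad_Q$ and hence with $(\Ad_Q-\id)^{-1}$ on $\cG_\perp$, so $\Ad_{\eta_0}\xi_\perp=(\Ad_Q-\id)^{-1}\cD\phi(L)_\perp$. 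On the other hand the elementary identity $\tfrac12(\Ad_Q+\id)(\Ad_Q-\id)^{-1}=\tfrac12\id+(\Ad_Q-\id)^{-1}$ on $\cG_\perp$, together with the fact that $\cR(Q)$ kills $\cG_0$, shows that $\cR(Q)\cD\phi(L)=\tfrac12\cD\phi(L)_\perp+(\Ad_Q-\id)^{-1}\cD\phi(L)_\perp$. Comparing, and using $\cD\phi(M)_0=\cD\phi(L)_0$, one finds $\zeta+\Ad_{\eta_0}\xi=\cR(Q)\cD\phi(L)-\tfrac12\cD\phi(L)$; since $[\cD\phi(L),L]=0$ by $G$-invariance of $\phi$, this gives $\dot L=[\cR(Q)\cD\phi(L),L]=V^2_{\bar\cH}(Q,L)$, completing the verification.

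The only genuinely non-routine point is the emergence of the dynamical $r$-matrix $\cR(Q)$ in the last step: it arises by combining the transverse part of the ``large'' gauge rotation $\eta_1$, which the constraint $Q(t)\in G_0^\reg$ pins to the resolvent $(\Ad_Q-\id)^{-1}$, with the $G_0$-valued correction $\eta_0$, whose ODE is tuned precisely so that the resulting generator $\zeta+\Ad_{\eta_0}\xi$ lands on the representative $\cR(Q)\cD\phi(L)-\tfrac12\cD\phi(L)$ of the $\cG_0$-ambiguity class of reduced Hamiltonian vector fields rather than on any other representative; hitting exactly this representative is what makes $(Q(t),L(t))$ an integral curve of \eqref{VHclaim} on the nose. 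Everything else — the differentiations, the uses of equivariance, and the $\cG=\cG_0+\cG_\perp$ splittings — is bookkeeping.
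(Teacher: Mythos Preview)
Your proof is correct and follows essentially the same route as the paper's: both differentiate $Q(t)=\eta_1(t)e^{t\cD\phi(L_0)}Q_0\eta_1(t)^{-1}$, use the constraint $\dot QQ^{-1}\in\cG_0$ to pin down $(\dot\eta_1\eta_1^{-1})_\perp$ in terms of the resolvent of $(\Ad_Q-\id)$ on $\cG_\perp$, and then conjugate by $\eta_0$ to cancel the residual $\cG_0$ terms. The only cosmetic difference is that the paper expresses $(\dot\eta_1\eta_1^{-1})_\perp=(\cR(Q)-\tfrac12)\cD\phi(L_1)_\perp$ directly, whereas you write $(\Ad_Q-\id)^{-1}\cD\phi(M)_\perp$ and invoke the identity $\tfrac12(\Ad_Q+\id)(\Ad_Q-\id)^{-1}=\tfrac12\id+(\Ad_Q-\id)^{-1}$ explicitly at the end; these are the same formula.
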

\begin{proof}
After setting $L_1(t):= \eta_1(t) L_0 \eta_1(t)^{-1}$,  a simple calculation gives $\dot{Q}(t)  = \cD\phi(L_1(t))_0 Q(t)$ and
\be
\dot L_1(t) = [ \cR(Q(t)) \cD\phi(L_1(t)) + (\dot{\eta}_1(t) \eta_1(t)^{-1})_0 +  \frac{1}{2} \cD \phi(L_1(t))_0, L_1(t)].
\ee
To get this, one uses that $\eta_1(t) \cD \phi(L_0) \eta_1(t)^{-1} = \cD \phi(L_1(t))$ and that $\dot{Q}(t) Q(t)^{-1}\in \cG_0$ implies
\be
\left(\dot{\eta}_1(t) \eta_1(t)^{-1}\right)_\perp =\left( \cR(Q(t)) - \frac{1}{2}\right) \cD \phi(L_1(t))_\perp.
\ee
Conjugation by $\eta_0(t)$ does not change $Q(t)$, and the result follows since
$\cD \phi (L(t)) = \eta_0(t) \cD\phi( L_1(t)) \eta_0(t)^{-1}$.
\end{proof}

The first step of the above quadrature is the construction of $\eta_1(t)$ in \eqref{eta1}, which  is a purely (linear) algebraic problem.
The subsequent construction of $\eta_0(t)$ requires the calculation
of an integral,
\be
\eta_0(t) = \exp\left( -\int_0^t  \left(\dot{\eta}_1(\tau) \eta_1(\tau)^{-1}+\frac{1}{2} \cD\phi(L_1(\tau))\right)_0 d\tau \right).
\ee
This second step  can actually be omitted, since the conjugation  by $\eta_0(t)$ does not affect
the eventual projection of the integral curve on the reduced phase space.

\begin{rem}
The Hamiltonian vector field of any $\cH\in \fH$  on $\bM$, and consequently also its projection on the full reduced
phase space $\bM/G$, is complete. However, the vector field
$V_{\bar \cH}$ \eqref{VHclaim} is not complete on $\bM_0$ in general.
This is  a consequence of the fact that not all unreduced integral curves \eqref{Pint} starting in $G^\reg \times \fP$  stay in this
dense open submanifold.
\end{rem}

\section{Decoupled variables and the scaling limit}
\label{S:6}

Our first goal here is to recast the reduced Poisson bracket \eqref{RED1P} in terms of new variables consisting of canonically conjugate pairs
$q_i$, $p_i$ ($i=1,\dots, r$)
and a  `decoupled spin variable' $\lambda$. This is described by  Theorem \ref{thm:5.1}, which generalizes  a similar result
presented in \cite{FM}  for the $G=\mathrm{U}(n)$ case.  At an intermediate stage, we shall also recover the form of the reduced
Poisson bracket given in \cite{F2}.
Then,  based on Theorem \ref{thm:5.1}, we shall elaborate the connection between our reduced
systems and the well known spin Sutherland models obtained
by reduction from the cotangent bundles $T^*G$.

\subsection{Canonically conjugate pairs and `spin' variables}
 \label{ss:61}

Let us begin by noting that
in terms of the alternative model $\fM$ \eqref{Heis3} of the Heisenberg double the gauge slice $\bM_0$ \eqref{bM0reg} turns into
\be
\fM_0:= G_0^\reg \times B.
\label{fM0reg}\ee
The connection between $\fM_0$ and $\bM_0$ is given by the diffeomorphism $\bar m_2$,
which is the restriction  of the map $m_2$ \eqref{m2},
\be
\bar m_2:\fM_0 \to \bM_0, \qquad \bar m_2(Q,b)=(Q, b b^\dagger).
\label{barm2}\ee
The map $\bar m_2$ is $\fN$-equivariant with respect to the restrictions of the actions \eqref{AfMbM}, and therefore
it  induces an isomorphism
\be
C^\infty(\bM_0)^{\fN} \longleftrightarrow C^\infty(\fM_0)^{\fN}.
\label{arrow53}\ee

\begin{prop}\label{cor:43}
For functions $\bar f, \bar h \in C^\infty(\fM_0)^\fN$, the isomorphism \eqref{arrow53}
converts \eqref{RED1P} into the following equivalent formula of the reduced Poisson bracket:
\be
\{\bar f,\bar h\}_{\fM_0}^\red(Q,b) = \langle D_1 \bar f, D_2 \bar h \rangle_\bI - \langle D_1 \bar h, D_2 \bar f \rangle_\bI
+  \langle \cR(Q)(b D_2'\bar h b^{-1}), b D'_2 \bar f b^{-1}\rangle_\bI.
\label{RED1+1}\ee
Here, the derivatives are evaluated at $(Q,b)$, with $D_1 \bar f\in \cB_0$  defined
similarly to\eqref{der157}  and $D_2 \bar f, D'_2\bar f \in \cG$
defined by applying  \eqref{derB} to the second variable, and $\cR(Q)$ is given by \eqref{RQ2}.
\end{prop}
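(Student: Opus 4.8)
The plan is to pull back the formula \eqref{RED1P} along the diffeomorphism $\bar m_2$ \eqref{barm2}; the only substantial input is the translation of the derivatives furnished by Lemma \ref{LemDerP}. Since $\bar m_2(Q,b)=(Q,bb^\dagger)$ leaves the first component untouched and is $\fN$-equivariant for the restrictions of the actions \eqref{AfMbM}, it induces the isomorphism \eqref{arrow53}; write $\bar\cF\in C^\infty(\bM_0)^\fN$ for the function corresponding to $\bar f\in C^\infty(\fM_0)^\fN$, and put $L=bb^\dagger$. Directly from \eqref{der157} and its analogue on $\fM_0$ one reads off $\cD_1\bar\cF(Q,L)=D_1\bar f(Q,b)$ in $\cB_0$, because $\bar\cF(e^{tY_0}Q,L)=\bar f(e^{tY_0}Q,b)$ for all $Y_0\in\cG_0$. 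For the second variable, applying Lemma \ref{LemDerP} for each fixed $Q$ gives $\cD_2\bar\cF(Q,L)=b\,D_2'\bar f(Q,b)\,b^{-1}$ with $D_2'\bar f(Q,b)\in\cG$; in particular $(\cD_2\bar\cF(Q,L))_\cG=(b\,D_2'\bar f\,b^{-1})_\cG=D_2\bar f(Q,b)$ by the relation recorded after \eqref{derB}, while $(\cD_2\bar\cF(Q,L))_\cB=(b\,D_2'\bar f\,b^{-1})_\cB$.

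I would then substitute these expressions, and their counterparts for $\bar\cH\leftrightarrow\bar h$, into the four terms of \eqref{RED1P}. In the first term, $\langle\cD_1\bar\cF,\cD_2\bar\cH\rangle_\bI=\langle D_1\bar f,\,b\,D_2'\bar h\,b^{-1}\rangle_\bI$; here the $\cB$-part of $b\,D_2'\bar h\,b^{-1}$ drops out, since $D_1\bar f\in\cB_0\subset\cB$ and $\cB$ is isotropic for $\langle-,-\rangle_\bI$, so this term equals $\langle D_1\bar f,D_2\bar h\rangle_\bI$, the first term of \eqref{RED1+1}; the second term of \eqref{RED1P} similarly becomes $-\langle D_1\bar h,D_2\bar f\rangle_\bI$. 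For the remaining two terms I would invoke the identity \eqref{RED1Pvar} with $X:=b\,D_2'\bar h\,b^{-1}$ and $Y:=b\,D_2'\bar f\,b^{-1}$, whose $\cG$-components are $D_2\bar h$ and $D_2\bar f$:
\be
\langle\cR(Q)(b\,D_2'\bar h\,b^{-1}),\,b\,D_2'\bar f\,b^{-1}\rangle_\bI=\langle\cR(Q)D_2\bar h,\,(b\,D_2'\bar f\,b^{-1})_\cB\rangle_\bI-\langle(b\,D_2'\bar h\,b^{-1})_\cB,\,\cR(Q)D_2\bar f\rangle_\bI,
\ee
whose right-hand side is exactly what the sum of the third and fourth terms of \eqref{RED1P} turns into after the substitution above. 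Adding the three contributions yields \eqref{RED1+1}.

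There is no genuine obstacle in this argument; it is essentially a bookkeeping exercise. The points that require a little attention are: keeping the $\cG$- and $\cB$-components of the conjugated derivatives $b\,D_2'(\cdot)\,b^{-1}$ straight and not confusing $(bD_2'\bar f\,b^{-1})_\cG$ with $D_2\bar f$ only up to the identity after \eqref{derB}; verifying from the definitions that the first-variable derivative is unaffected by $\bar m_2$; and observing that, because $\bar m_2$ is an $\fN$-equivariant diffeomorphism, the bracket $\{-,-\}^\red_{\fM_0}$ given by \eqref{RED1+1} is indeed the transport of $\{-,-\}^\red_{\bM_0}$ (Definition \ref{defn:redPB}) under the isomorphism \eqref{arrow53}, hence automatically a Poisson bracket on $C^\infty(\fM_0)^\fN$.
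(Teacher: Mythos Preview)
Your proposal is correct and follows essentially the same approach as the paper: translate the derivatives via Lemma \ref{LemDerP} (giving $\cD_1\bar\cF=D_1\bar f$, $\cD_2\bar\cF=bD_2'\bar f\,b^{-1}$, and $(\cD_2\bar\cF)_\cG=D_2\bar f$), substitute into \eqref{RED1P}, and invoke \eqref{RED1Pvar} to collapse the last two terms into the single $\cR(Q)$-term of \eqref{RED1+1}. Your write-up is in fact more explicit than the paper's own terse proof, which merely states the derivative identities and says the result follows by direct substitution together with \eqref{RED1Pvar}.
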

\begin{proof}
For functions $\bar \cF$ and $\bar f$ related by $\bar f = \bar \cF \circ \bar m_2$, one has
\be
\cD_1 \bar \cF = D_1 \bar f \quad\hbox{and}\quad \cD_2 \bar \cF = b D_2' \bar f b^{-1},
\ee
at the corresponding arguments, similarly to \eqref{newDrel2}. Furthermore,
$D_2 \bar f =  (b D_2' \bar f b^{-1})_\cG = (\cD_2 \bar \cF)_\cG$.
The formula \eqref{RED1+1} is obtained  from \eqref{RED1P} by direct substitution
of these relations, and their counterparts for $\bar \cH = \bar h \circ \bar m_2$.
More precisely, we also used the identity \eqref{RED1Pvar}.
\end{proof}

The formula  \eqref{RED1+1}  was obtained previously in \cite{F2} starting  from  the Poisson bracket \eqref{A3T}
on the model $\fM$ \eqref{Heis3} of the Heisenberg double.  Its equivalence with the claim of Theorem \ref{thm:defSuthP}
provides a good consistency check on our considerations.
According to the discussion at the end of Section \ref{ss:51},  the formula \eqref{RED1+1} yields a Poisson bracket also on $C^\infty(\fM_0)^{G_0}$.

Now, our goal is to recast the Poisson bracket \eqref{RED1+1} on $C^\infty(\fM_0)^{G_0}$ in terms of convenient new variables.
To do this,  we shall use that
any $b\in B$  can be decomposed uniquely as
\be
b = e^p b_+ \quad \hbox{with}\quad p \in \cB_0,\, b_+ \in B_+,
\label{bpar}\ee
and that the subgroups $B_0=\exp(\cB_0)$ and $B_+=\exp(\cB_+)$ of $B$ admit global exponential parametrization.
Our construction relies on the map
\be
\zeta: \fM_0 = G_0^\reg \times B \to G_0^\reg \times \cB_0 \times B_+
\label{zeta1}\ee
defined by the formula
\be
\zeta: (Q, e^p b_+) \mapsto (Q,p, \lambda)
\quad \hbox{with}\quad
\lambda := b_+^{-1}  Q^{-1} b_+ Q.
\label{zeta2}\ee
We remark that $\lambda$ can also be written as $\lambda =  b^{-1}  Q^{-1} b Q$.
The definition \eqref{zeta2} comes from \cite{F1}, where a `spin variable' (denoted there $S_+$) given by the same formula as our $\lambda$,
but restricted
 to the intersection of an arbitrary dressing orbit of $G$ with $B_+$, was used.
The geometric origin of the definition is expounded in Appendix \ref{sec:B}.

\begin{lem}\label{lem:55}
The map $\zeta$ \eqref{zeta1} is a  diffeomorphism. It is equivariant with respect to the $G_0$-actions for which
$\eta_0\in G_0$ sends $(Q,b)$ to $(Q, \eta_0 b\eta_0^{-1})$ and $(Q,p,\lambda)$ to $(Q,p, \eta_0 \lambda \eta_0^{-1})$.
Consequently, $\zeta$ induces an isomorphism
\be
 C^\infty(\fM_0^\reg)^{G_0}   \longleftrightarrow   C^\infty(G_0^\reg \times \cB_0 \times B_+)^{G_0}.
\label{arrow58}\ee
\end{lem}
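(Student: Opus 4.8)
The plan is to first strip off the part of $\zeta$ that is obviously a diffeomorphism and isolate the one nontrivial ingredient. By \eqref{bpar} together with the global exponential parametrizations of $B_0$ and $B_+$, the map $(Q,b)\mapsto(Q,p,b_+)$ is a diffeomorphism from $\fM_0=G_0^\reg\times B$ onto $G_0^\reg\times\cB_0\times B_+$; composing $\zeta$ with its inverse, $\zeta$ becomes the map $(Q,p,b_+)\mapsto(Q,p,\Phi_Q(b_+))$ with $\Phi_Q(b_+):=b_+^{-1}Q^{-1}b_+Q$. Since $Q\in G_0^\reg\subset G_0$, conjugation by $Q$ preserves every root space and hence $\cB_+=\cG^\bC_+$, so $Q^{-1}b_+Q\in B_+$ and $\Phi_Q$ really maps $B_+$ into itself. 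Thus $\zeta$ is a diffeomorphism as soon as one shows that for every fixed $Q\in G_0^\reg$ the map $\Phi_Q\colon B_+\to B_+$ is a diffeomorphism and that $\Phi_Q^{-1}(b_+)$ depends smoothly on $(Q,b_+)$.

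To prove this I would exploit the nilpotent structure of $B_+$. Grade $\cB_+=\bigoplus_{k\ge 1}\mathfrak{n}_k$ by root height, so that $[\mathfrak{n}_j,\mathfrak{n}_k]\subseteq\mathfrak{n}_{j+k}$, the sum is finite, and $B_+=\exp(\cB_+)$ is simply connected nilpotent with $\exp$ a diffeomorphism. The operator $A:=\Ad_{Q^{-1}}\vert_{\cB_+}$ preserves this grading (it acts on each root space by a nonzero scalar), and by the invertibility statement recorded just before \eqref{RQ2} the operator $A-\id$ is invertible on $\cG^\bC_\perp\supseteq\cB_+$, hence invertible on each $\mathfrak{n}_k$. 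Writing $b_+=\exp(\xi)$ with $\xi=\sum_k\xi_k$, $\xi_k\in\mathfrak{n}_k$, the Baker--Campbell--Hausdorff formula gives $\Phi_Q(\exp\xi)=\exp(-\xi)\exp(A\xi)=\exp(\eta(\xi))$, where $\eta(\xi)_1=(A-\id)\xi_1$ and, for $k\ge 2$,
\[
\eta(\xi)_k=(A-\id)\xi_k+P_k(\xi_1,\dots,\xi_{k-1}),
\]
with $P_k$ a universal polynomial, since every bracket correction in BCH strictly raises the grading degree. Consequently the equation $\eta(\xi)=\zeta_0$ can be solved recursively by $\xi_k=(A-\id)^{-1}\bigl((\zeta_0)_k-P_k(\xi_1,\dots,\xi_{k-1})\bigr)$; the resulting $\xi$ is a polynomial in the components of $\zeta_0$ and in the entries of $(A-\id)^{-1}$, the latter depending smoothly on $Q\in G_0^\reg$. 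This exhibits $\Phi_Q$ as a bijection with inverse smooth jointly in $(Q,b_+)$, so $\zeta$ is a diffeomorphism.

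It remains to check equivariance. The relevant $G_0$-action on $\fM_0$ is the restriction of \eqref{AfMbM}, which for $\eta_0\in G_0$ sends $(Q,b)$ to $(\eta_0 Q\eta_0^{-1},\Dress_{\eta_0}(b))=(Q,\eta_0 b\eta_0^{-1})$: indeed $Q\in G_0$ is central in $G_0$, while $\eta_0 b\eta_0^{-1}\in B$ and $(\eta_0 b\eta_0^{-1})(\eta_0 b\eta_0^{-1})^\dagger=\eta_0\,bb^\dagger\,\eta_0^{-1}$ because $\eta_0^\dagger=\eta_0^{-1}$, so $\eta_0 b\eta_0^{-1}$ satisfies the defining relation \eqref{Dressconj} of $\Dress_{\eta_0}(b)$ and the two agree by injectivity of $\nu$ in \eqref{nu}. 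Now $\Ad_{\eta_0}$ acts trivially on $\cG_0$, hence on $\cB_0=\ri\cG_0$, so decomposing $\eta_0 b\eta_0^{-1}=e^{p}\,(\eta_0 b_+\eta_0^{-1})$ shows that the first two components $(Q,p)$ are unchanged while $b_+\mapsto\eta_0 b_+\eta_0^{-1}$; and since $Q$ and $\eta_0$ lie in the abelian group $G_0$ and therefore commute,
\[
(\eta_0 b_+\eta_0^{-1})^{-1}Q^{-1}(\eta_0 b_+\eta_0^{-1})Q=\eta_0\bigl(b_+^{-1}Q^{-1}b_+Q\bigr)\eta_0^{-1}=\eta_0\lambda\eta_0^{-1}.
\]
Hence $\zeta$ intertwines the two $G_0$-actions, so pullback along $\zeta$ carries $G_0$-invariant smooth functions to $G_0$-invariant smooth functions in both directions, yielding the isomorphism \eqref{arrow58}. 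The main obstacle in this argument is the \emph{global} bijectivity of $\Phi_Q$: infinitesimal invertibility of $A-\id$ alone only furnishes a local diffeomorphism, and it is the triangular structure supplied by the height grading, combined with simple connectedness of the nilpotent group $B_+$, that upgrades this to a global diffeomorphism depending smoothly on the parameter $Q$.
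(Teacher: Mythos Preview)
Your argument is correct and follows essentially the same route as the paper: the paper defers to \cite{F1} and to the proof of Proposition~\ref{prop:65}, where exactly your BCH-plus-height-induction argument appears (see \eqref{F3}--\eqref{F6}), solving $\beta_\alpha$ recursively in terms of $\sigma_\alpha$ using the invertibility of $\Ad_{Q^{-1}}-\id$ on each root space for $Q\in G_0^\reg$. Your treatment is in fact more self-contained than the paper's, and your verification of the $G_0$-equivariance (including the identification $\Dress_{\eta_0}(b)=\eta_0 b\eta_0^{-1}$ via \eqref{Dressconj} and injectivity of $\nu$) is a useful addition that the paper leaves implicit.
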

\begin{proof}
 The properties of the map $\zeta$ were analyzed in Section 5.1 of \cite{F1}, from which the statement follows.
 (Incidentally, this paper contains an explicit formula for the inverse of $\zeta$ in the $G=\mathrm{SU}(n)$ case.)
  See also the proof of Proposition \ref{prop:65} below.
\end{proof}

Let $D_Q F $, $d_p F $ and $D_\lambda F$, $D'_\lambda F $
denote the derivatives of any real function $F \in C^\infty(G_0^\reg \times \cB_0 \times B_+)$ with  respect
to its three variables, respectively.
We have $D_Q F\in \cB_0 $, $d_p F\in \cG_0 $ and $D_\lambda F, D'_\lambda F \in \cG_\perp$.
Here, $D_Q F$ and $d_p F$ are defined by
\be
\langle Y_0, D_QF(Q,p,\lambda) \rangle_\bI + \langle X_0, D_pF(Q,p,\lambda) \rangle_\bI=
\dt F(Qe^{t Y_0},p+ t X_0, \lambda), \quad \forall X_0\in \cB_0,\, Y_0 \in \cG_0,
\ee
using that $Q e^{t Y_0} \in G_0^\reg$ for small $t$.
Recalling the decompositions \eqref{cGdec} and \eqref{cGcBperp},
$\cG_\perp$ is taken as the model of the dual space of $\cB_+$. Then, $D_\lambda F$ and $D'_\lambda F$ are defined by
\be
\langle X, D_\lambda F(Q,p,\lambda) \rangle_\bI +
\langle X', D'_\lambda F(Q,p,\lambda) \rangle_\bI =
\dt F(Q,p, e^{t X} \lambda e^{tX'}),
 \quad \forall X, X' \in \cB_+.
\label{derperp}\ee

\begin{lem}\label{lem:57}
 Consider two functions $\bar f\in C^\infty(\fM_0)$ and $F\in C^\infty(G_0^\reg\times \cB_0 \times B_+)$
 related by $\bar f = F \circ \zeta$ with the diffeomorphism \eqref{zeta2}. Then, their derivatives are connected
 according to
 \be
 D'_2 \bar f = d_p F + Q D'_\lambda F Q - (\lambda D'_\lambda F \lambda^{-1})_\cG
 \label{derid1}\ee
 and
 \be
 D_1 \bar f = D_Q F - \left(b Q D'_\lambda F Q^{-1} b^{-1}\right)_{\cB_0},
\label{derid2} \ee
where the derivatives on the left and right sides  are taken at $(Q,b)$ and at $(Q,p,\lambda)= \zeta(Q,b)$, respectively.
 The subscripts $\cG$ and $\cB_0$ refer to the decompositions \eqref{cGdec} and \eqref{cGcBperp}.
 \end{lem}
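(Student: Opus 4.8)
The plan is to prove \eqref{derid1} and \eqref{derid2} by differentiating the relation $\bar f = F\circ\zeta$ along the one-parameter curves that enter the definitions of $D'_2\bar f$ and $D_1\bar f$, tracking how the coordinates $(Q,p,\lambda)=\zeta(Q,b)$ react (recall $\zeta$ is a diffeomorphism by Lemma \ref{lem:55}), and then invoking the chain rule together with the $\Ad$-invariance of $\langle-,-\rangle_\bI$.

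First I would set up the book-keeping for $b=e^p b_+$, using that $\cB=\cB_0+\cB_+$ with $\cB_0$ abelian and $\cB_+$ a nilpotent ideal satisfying $[\cB_0,\cB_+]\subset\cB_+$. For the right translation $b\mapsto b e^{tX'}$ (with $X'\in\cB$ and $Q$ held fixed), which computes $D'_2\bar f$, re-decomposing $b e^{tX'}=e^{p(t)}b_+(t)$ gives $\dot p(0)=(X')_{\cB_0}$ and $\xi:=b_+^{-1}\dot b_+(0)=X'-\Ad_{b_+^{-1}}(X')_{\cB_0}$, which lies in $\cB_+$. Differentiating $\lambda=b_+^{-1}Q^{-1}b_+Q$ by the Leibniz rule then yields $\lambda^{-1}\dot\lambda(0)=\Ad_{Q^{-1}}\xi-\Ad_{\lambda^{-1}}\xi$, which indeed lands in $\cB_+$, consistently with $\lambda\in B_+$. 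For the perturbation $Q\mapsto Q e^{tY_0}$ (with $Y_0\in\cG_0$ and $b$ fixed) one gets, in the same way, $\lambda^{-1}\dot\lambda(0)=Y_0-\Ad_{(b_+Q)^{-1}}Y_0\in\cB_+$.

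Next, the chain rule expresses the $t$-derivative at $t=0$ of $\bar f$ along these curves as $\langle(X')_{\cB_0},d_pF\rangle_\bI+\langle\lambda^{-1}\dot\lambda(0),D'_\lambda F\rangle_\bI$ in the first case, and as $\langle Y_0,D_QF\rangle_\bI+\langle\lambda^{-1}\dot\lambda(0),D'_\lambda F\rangle_\bI$ in the second. Substituting the formulas for $\lambda^{-1}\dot\lambda(0)$ and moving all conjugations onto the $\lambda$-derivative via $\langle\Ad_g X,Y\rangle_\bI=\langle X,\Ad_{g^{-1}}Y\rangle_\bI$, one recovers the right-hand sides of \eqref{derid1} and \eqref{derid2} up to two spurious contributions, which I then have to show vanish separately. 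The tools available are: (i) $\cB$ is isotropic for $\langle-,-\rangle_\bI$, so a bracket $\langle\xi,\cdot\rangle_\bI$ or $\langle(X')_{\cB_0},\cdot\rangle_\bI$ feels only the $\cG$-part of its argument; (ii) $\langle\cG_0^\bC,\cG_\perp^\bC\rangle=0$, which kills $\langle(X')_{\cB_+},d_pF\rangle_\bI$ and $\langle Y_0,D'_\lambda F\rangle_\bI$; (iii) for $Q\in G_0$, $\Ad_Q$ fixes $\cB_0$ pointwise and preserves $\cG_\perp$, while $\Ad_{b_+}$ preserves $\cB$ for $b_+\in B_+$; and (iv) the identity $b_+\lambda=Q^{-1}b_+Q$, which gives $\Ad_{b_+}\Ad_\lambda D'_\lambda F=\Ad_{Q^{-1}}\Ad_{b_+}(\Ad_Q D'_\lambda F)$ and thereby makes a pair of terms cancel after one further use of $\Ad$-invariance. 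Finally, the replacement of $b_+$ by $b$ in \eqref{derid2} is justified because $\Ad_{e^p}$ with $p\in\cB_0$ acts trivially on the $\cB_0$-component.

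I expect the main obstacle to be precisely this clean-up step: a number of the conjugations genuinely mix the $\cG$ and $\cB$ summands, so one must keep careful track of the projections onto $\cG$, $\cB$, $\cB_0$, $\cB_+$, $\cG_0$ and $\cG_\perp$, and check that the combination of the isotropy of $\cB$ with the identity $b_+\lambda=Q^{-1}b_+Q$ removes exactly the terms that are absent from \eqref{derid1}--\eqref{derid2}. Everything else is routine differentiation of matrix products.
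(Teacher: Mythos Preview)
Your proposal is correct and follows essentially the same route as the paper: differentiate $\bar f=F\circ\zeta$ along the curves $(Q,b e^{tX'})$ and $(Qe^{tY_0},b)$, compute the induced variation of $\lambda=b_+^{-1}Q^{-1}b_+Q$, and pair with $D'_\lambda F$ using $\Ad$-invariance together with the isotropy of $\cB$ and the orthogonality $\langle\cG_0^\bC,\cG_\perp^\bC\rangle=0$. The only organizational difference is that the paper splits the first computation into the two cases $X'\in\cB_+$ and $X'\in\cB_0$ from the start, which makes the curves in $(p,\lambda)$ explicit (namely $p$ constant, $b_+(t)=b_+e^{tX}$ in the first case, and $p(t)=p+tX_0$, $\lambda(t)=e^{-tX_0}\lambda e^{tX_0}$ in the second) and thereby bypasses the cancellation you anticipate via $b_+\lambda=Q^{-1}b_+Q$. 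Your unified treatment is slightly longer but equally valid; the spurious terms you flag do vanish exactly by the mechanism you describe, since $\Ad_Q$ fixes $\cG_0^\bC$ pointwise so that $(W-\Ad_{Q^{-1}}W)_{\cG_0}=0$ for $W=b_+QD'_\lambda FQ^{-1}b_+^{-1}$.
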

\begin{proof}
Taking any $X \in \cB_+$, and inspecting the derivatives along the curve $(Q,b \exp(tX))$ in $\fM_0$ and its $\zeta$-image in
$G_0^\reg\times \cB_0 \times B_+$, we get
\be
\langle X, D_2' \bar f(Q,b)\rangle_\bI =  \langle X, Q D'_\lambda F(Q,p,\lambda) Q^{-1} -
(\lambda D_\lambda' F(Q,p,\lambda) \lambda^{-1})_{\cG_\perp} \rangle_\bI.
\ee
Inspection of the derivatives along the curve $(Q,b\exp(t X_0))$ in $\fM_0$, with $X_0\in \cB_0$, gives
\be
\langle X_0, D_2' \bar f(Q,b)\rangle_\bI =  \langle X_0,  d_p F(Q,p,\lambda) -
(\lambda D_\lambda' F(Q,p,\lambda) \lambda^{-1})_{\cG_0} \rangle_\bI.
\ee
Together, these imply the equality \eqref{derid1}.
To derive \eqref{derid2}, we use the identity
\be
\bar f(Q e^{tY_0}, b) = F(Q e^{t Y_0}, p, \lambda Q^{-1} b^{-1} e^{-t Y_0} b Q e^{tY_0}), \quad \forall Y_0\in \cG_0,
\ee
and note that
\be
\dt \left(Q^{-1} b^{-1} e^{-t Y_0} b Q e^{tY_0}\right) = \left(Y_0 - Q^{-1} b^{-1} Y_0 b Q\right)\in \cB_+.
\ee
Consequently, at the arguments related by $\zeta$,  we get
\be
 \langle Y_0, D_1 \bar f \rangle_\bI =  \langle Y_0, D_Q F\rangle_\bI +
\langle Y_0 - Q^{-1} b^{-1} Y_0 b Q, D'_\lambda F \rangle_\bI= \langle Y_0,  D_Q F
- \left( b Q D_\lambda' F Q^{-1} b^{-1}\right)_{\cB_0}\rangle_\bI,
\ee
which completes the proof.
\end{proof}

Since any two functions $F,H \in C^\infty(G_0^\reg \times \cB_0 \times B_+)^{G_0}$ are related to unique functions
$\bar f, \bar h \in C^\infty(\fM_0)^{G_0}$
by
\be
F \circ \zeta = \bar f, \quad H\circ \zeta = \bar h,
\ee
 we can define $\{F,H\}_0^\red \in C^\infty(G_0^\reg\times \cB_0 \times B_+)^{G_0}$ by the requirement
\be
\{ F,H\}_0^\red \circ \zeta := \{ \bar f, \bar h\}^\red_{\fM_0}.
\label{FHred-def}\ee

\begin{thm}\label{thm:5.1}
In terms of the new variables introduced via the map $\zeta$ \eqref{zeta2}, using the definition \eqref{FHred-def},
the reduced Poisson bracket \eqref{RED1+1}
acquires the `decoupled form'
\be
\{F,H\}^\red_0(Q,p,\lambda) =
\langle D_Q F, d_p H \rangle_\bI - \langle D_Q H, d_p F \rangle_\bI
+  \langle \lambda D_\lambda' F \lambda^{-1},  D_\lambda  H \rangle_\bI,
\label{FHred-form}\ee
where the derivatives of $F, H\in C^\infty(G_0^\reg\times \cB_0 \times B_+)^{G_0}$
are taken at $(Q,p,\lambda)$.
The functions of the form $F(Q,p,\lambda) = \varphi(\lambda)$ with $\varphi \in C^\infty(B)^G$ are in the center of this Poisson bracket.
 \end{thm}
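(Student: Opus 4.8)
The plan is to derive \eqref{FHred-form} by an explicit change of variables in the formula \eqref{RED1+1} of Proposition \ref{cor:43}, using the derivative dictionary of Lemma \ref{lem:57}. For $\bar f = F\circ\zeta$ the lemma gives $D_2'\bar f = d_pF + Q D_\lambda'F Q^{-1} - (\lambda D_\lambda'F\lambda^{-1})_\cG$ and $D_1\bar f = D_QF - (bQD_\lambda'F Q^{-1}b^{-1})_{\cB_0}$, and additionally $D_2\bar f = (bD_2'\bar f b^{-1})_\cG$. I would substitute these expressions (and their counterparts for $\bar h = H\circ\zeta$) into the three terms of \eqref{RED1+1}, expanding everything into pairings of the elementary building blocks $D_QF, d_pF, D_\lambda F, D_\lambda'F$ and those of $H$, with coefficients built from $\Ad_b$, $\Ad_Q$ and $\cR(Q)$.

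The core step is to collapse this expansion using the defining relation $\lambda = b^{-1}Q^{-1}bQ$ of \eqref{zeta2}, i.e.\ $bQ = Qb\lambda$. Two elementary observations do the work. First, for $X_0\in\cG_0$ one has $(bX_0b^{-1})_\cG = X_0$: writing $b = e^pb_+$ with $p\in\cB_0$ and $b_+\in B_+$, the conjugation $\Ad_b$ moves $X_0$ only by an element of $\cG^\bC_+ = \cB_+$, and $\cB$ is isotropic; this immediately produces the canonical piece $\langle D_QF, d_pH\rangle_\bI - \langle D_QH, d_pF\rangle_\bI$ out of the $D_Q$/$d_p$ contributions. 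Second, $bQ D_\lambda'F Q^{-1}b^{-1}$ rewrites, via $bQ = Qb\lambda$, as $\Ad_{Qb}(\lambda D_\lambda'F\lambda^{-1})$; this realigns the $\Ad_Q$-twisted and $\cR(Q)$-twisted contributions so that the $r$-matrix identities \eqref{RED1Pvar}, together with the half-identity contributions that the definition \eqref{RQ2} of $\cR(Q)$ generates along the way (exactly as in the proof of Theorem \ref{thm:defSuthP}), force the cancellation of every term carrying a factor $\cR(Q)$. What survives from the spin sector is precisely $\langle\lambda D_\lambda'F\lambda^{-1}, D_\lambda H\rangle_\bI$, and one arrives at \eqref{FHred-form}. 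As in the proof of Theorem \ref{thm:defSuthP} I would then double-check the overall antisymmetry under $F\leftrightarrow H$ as a consistency test. Once \eqref{FHred-form} is in hand, the word \emph{decoupled} is justified by the obvious vanishing of the $\{Q,Q\}$-, $\{Q,\lambda\}$- and $\{p,\lambda\}$-type brackets.

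For the last assertion, let $F(Q,p,\lambda) = \varphi(\lambda)$ with $\varphi\in C^\infty(B)^G$; then $D_QF = 0$ and $d_pF = 0$, so \eqref{FHred-form} reduces to $\{F,H\}_0^\red = \langle\lambda D_\lambda'F\lambda^{-1}, D_\lambda H\rangle_\bI$, which I must show vanishes for all $H$. The conceptual reason is that $F$ is, up to the change of variables $\zeta$, the restriction to the gauge slice of a Poisson--Lie moment map Casimir. Indeed, a short computation — using that the dressing action of $Q^{-1}\in G_0$ on $B$ is plain conjugation, so that $\Dress_{Q^{-1}}(b) = Q^{-1}bQ$ — shows that the moment map $\Lambda$ of \eqref{Lambda}, restricted to the gauge slice, equals $(\Dress_{Q^{-1}}(b))^{-1}b = Q^{-1}b^{-1}Qb = \lambda^{-1}$; and since $b\mapsto b^{-1}$ permutes the dressing orbits of $B$ (as $\nu(b^{-1})$ is $G$-conjugate to $\nu(b)^{-1}$), composition with inversion preserves $C^\infty(B)^G$, so $\varphi(\lambda) = (\varphi\circ(\,\cdot\,)^{-1})(\lambda^{-1})$ is the restriction of the moment-map pullback of a dressing-invariant function. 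Such pullbacks are Casimirs of the reduced Poisson algebra by the general mechanism recalled in Section \ref{S:4}, hence $F$ Poisson-commutes with all of $C^\infty(\bM^\red)^G$. Alternatively one may verify $\langle\lambda D_\lambda'F\lambda^{-1}, D_\lambda H\rangle_\bI = 0$ directly from \eqref{FHred-form}: writing $D_\lambda'F = (D'\varphi(\lambda))_{\cG_\perp}$ and using $\lambda D'\varphi(\lambda)\lambda^{-1} = D\varphi(\lambda)\in\cG$ together with the dressing-invariance identity $\langle D'\varphi(\lambda),(\lambda^{-1}Y\lambda)_\cB\rangle_\bI = 0$ ($\forall Y\in\cG$) noted before \eqref{invprop}, the principal contribution $\langle D\varphi(\lambda), D_\lambda H\rangle_\bI$ drops out because $\cG$ is isotropic, and the remaining $\cG_0$-piece is disposed of using the same invariance identity and $\lambda\in B_+$.

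The main obstacle is the bookkeeping in the core step: tracking the four-way decomposition $\cG_0/\cG_\perp$, $\cB_0/\cB_+$ through the two non-commuting conjugations $\Ad_b$ and $\Ad_Q$, and exhibiting the cancellation of all $\cR(Q)$-terms in an organized way rather than by brute expansion. The relation $bQ = Qb\lambda$, the two identities \eqref{RED1Pvar} for the dynamical $r$-matrix, and the isotropy of $\cG$ and of $\cB$ are the levers that should make this feasible.
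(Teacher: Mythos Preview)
Your plan for the main formula is exactly the paper's: substitute the derivative relations of Lemma \ref{lem:57} into \eqref{RED1+1} and simplify. The paper in fact omits the tedious manipulations and refers to \cite{FM}; your outline of how the cancellations go (via $bQ=Qb\lambda$, the identity $(bX_0 b^{-1})_\cG=X_0$ for $X_0\in\cG_0$, and the antisymmetry/half-identity behaviour of $\cR(Q)$) is a reasonable roadmap for what that reference does.

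For the center statement, your conceptual argument (a) is a genuinely different and clean route: on the gauge slice one indeed has $\Lambda=\lambda^{-1}$ (this is spelled out in Appendix \ref{sec:B}), so $\varphi(\lambda)$ with $\varphi\in C^\infty(B)^G$ is the restriction of a moment-map Casimir and hence central. The paper instead gives a short direct computation: writing $D_\lambda'F=D'\varphi(\lambda)-X_0$ with $X_0=(D'\varphi(\lambda))_{\cG_0}$, the contribution $\langle \lambda D'\varphi(\lambda)\lambda^{-1},D_\lambda H\rangle_\bI$ vanishes by isotropy of $\cG$, and the residual $\langle X_0-\lambda X_0\lambda^{-1},D_\lambda H\rangle_\bI$ vanishes because $Y:=X_0-\lambda X_0\lambda^{-1}\in\cB_+$ generates, as a left derivative, the same tangent vector as conjugation of $\lambda$ by $e^{tX_0}\in G_0$, which kills the $G_0$-invariant $H$. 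Note that the last step uses the $G_0$-invariance of $H$, not the dressing-invariance identity for $\varphi$ that you invoke in your sketch (b); that is the one point where your direct argument would need adjusting.
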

\begin{proof}
The formula \eqref{FHred-form} follows from the direct substitution of the relations of  Lemma \ref{lem:57} into \eqref{RED1+1}.
The required tedious manipulations leading  from \eqref{RED1+1} to \eqref{FHred-form} are omitted, since they essentially coincide with the
calculation presented in \cite{FM}, where the $G=\mathrm{U}(n)$ analogue of the formula was derived.

If $F$ depends only on $\lambda$ and is the restriction of $\varphi \in C^\infty(B)^G$, then we have
\be
D'_\lambda F = D'\varphi (\lambda)  - X_0
\quad \hbox{with}\quad X_0 = (D' \varphi(\lambda))_0.
\ee
In this case,  since $\lambda D' \varphi(\lambda) \lambda ^{-1} \in \cG$ by \eqref{invprop}, we can write
\be
\{F,H\}^\red_0(Q,p,\lambda)
= \langle X_0 - \lambda X_0 \lambda^{-1},  D_\lambda  H \rangle_\bI.
\ee
Notice that $Y:= X_0 - \lambda X_0 \lambda^{-1}$ belongs to $\cB_+$  and
\be
\dt e^{t Y} \lambda = \dt e^{t X_0} \lambda e^{-t X_0}.
\ee
This identity and the fact that $H$ is $G_0$-invariant imply
\be
\langle Y,  D_\lambda  H \rangle_\bI = \dt H(Q,p, e^{t Y} \lambda) = \dt H(Q,p, e^{tX_0} \lambda e^{-t X_0}) =0,
\ee
which completes the proof.  Incidentally, an alternative proof can be obtained starting from the reduced symplectic form
derived in \cite{F1}.
 \end{proof}

\begin{rem}
As it was discussed around \eqref{checkbM0reg},  the variable
$Q$ may be restricted to a Weyl alcove $\check G_0^\reg$. By using \eqref{checkisom} and the relevant restriction of the map $\zeta$ \eqref{zeta1}, we
obtain the isomorphism
\be
 C^\infty( \fM_0)^{\fN}  \longleftrightarrow   C^\infty(\check \fM_0)^{G_0}   \longleftrightarrow
 C^\infty(\check G_0^\reg \times \cB_0 \times B_+)^{G_0}
\label{arrow58+}\ee
with  $\check \fM_0 = \check G_0^\reg \times B$.
This may be a preferred way to proceed, since we do not have an explicit formula for the action of the group $\fN$
on  $G_0^\reg \times \cB_0 \times B_+$.  Such an action is determined by transferring the action \eqref{AfMbM} of $\fN < G$ on $\fM_0$
via the diffeomorphism  $\zeta$, but its explicit form is not easy to find.
According to Theorem \ref{thm:5.1},
the additional restriction of the variable $\lambda$ to the intersection of $B_+$ with a dressing orbit can be achieved
by fixing Casimir functions.   In fact, this leads to a Poisson subspace of $(\check G_0^\reg \times \cB_0 \times B_+)/G_0$
and the  Poisson bracket on this subspace  corresponds to the reduced symplectic form exhibited in \cite{F1}.
\end{rem}

\begin{rem}
With very small modifications, all results of the paper are valid  for non-Abelian reductive  Lie groups, too, and
the simply connectedness of $G$ is also not essential.
In the paper \cite{F3} we dealt with the important example for which
\be
G= \mathrm{U}(n), \quad G^\bC = \mathrm{GL}(n,\bC), \quad B= \mathrm{B}(n),
\ee
where $\mathrm{B}(n)$ consists of those upper triangular elements of $\mathrm{GL}(n,\bC)$ whose diagonal entries are real, positive numbers.
The bilinear form on the real Lie algebra $\gl$ can be taken to be
\be
\langle X, Y \rangle_\bI = \Im \tr(XY), \quad\forall X,Y \in \gl.
\label{glform}\ee
In this case $\cG = \u(n)$, and $\fP = \exp(\ri \u(n))$ is the set of positive definite, Hermitian matrices, which
is an \emph{open} subset of the vector space $\ri \u(n)$ of Hermitian matrices.
The vector spaces $\ri \u(n)$ and $\u(n)$ are in duality
with respect to bilinear form \eqref{glform}.
Thus, for any function $\cF \in C^\infty(\fP)$, one can define its $\u(n)$-valued differential $d\cF$ by the
requirement
\be
\langle Z, d\cF(L)\rangle_\bI = \dt \cF(L + t Z), \qquad \forall Z\in \ri \u(n),
\ee
since $(L + t Z)$ belongs to $\fP$ for sufficiently small $t$.
Relating $\cF \in C^\infty(\fP)$ to $\varphi \in C^\infty(B)$ by
\be
\cF (bb^\dagger) = \varphi(b), \qquad \forall b \in \mathrm{B}(n),
\ee
one can verify the following identity:
\be
 2 L d\cF(L) = b D'\varphi(b) b^{-1}  \qquad\hbox{for}\quad L = b b^\dagger.
\ee
By applying the counterpart of this identity  to related functions defined on $\bM$ and $\fM$ \eqref{Heis3},
and on $\bM_0$ \eqref{bM0reg} and $\fM_0$ \eqref{fM0reg},
the formulas  \eqref{A2T}
and \eqref{RED1+1} are converted into those derived in \cite{F3}, which served as the starting  in the joint paper
with Marshall \cite{FM}.
\end{rem}

\subsection{Connection  with  spin Sutherland models}
 \label{ss:62}

It is easily seen from the formula \eqref{PBGT} that $G_0 < G$ is a Poisson--Lie subgroup
on which the Poisson bracket  vanishes.  This entails that the restriction
 of the dressing action
of $G$ on $B$ yields a classical Hamiltonian action of $G_0$.
This action operates by conjugations,
\be
\Dress_{\eta_0}(S) = \eta_0 S \eta_0^{-1},
\qquad \forall \eta_0\in G_0,\, S\in B,
\ee
and is generated
by the classical moment map $S \mapsto \log S_0 \in \cB_0\simeq \cG_0^*$, which is
defined by applying the decomposition $S = S_0 \lambda$ with $S_0\in B_0$ and $\lambda \in B_+$.
A particular reduction of the Poisson space $(B,\br{-,-}_B)$ is obtained by setting this moment map to zero.
Identifying the smooth functions on $B_+/G_0$ with $C^\infty(B_+)^{G_0}$,
the third term of the Poisson bracket \eqref{FHred-form}  represents precisely the reduced Poisson
structure  arising from $(B,\br{-,-}_B)$ in this way.

The Poisson--Lie group $(B,\br{-,-}_B)$ is a nonlinear analogue of
$(\cG^*,\br{-,-}_{\cG^*})$ equipped with the linear Lie--Poisson bracket.
Using the pairing \eqref{impair}, $\cB$ can be taken as the model of the dual space $\cG^*$, and its reduction
with respect to $G_0$ at the zero value of the moment map gives a Poisson structure
on $C^\infty(\cB_+)^{G_0}$.
This is a building block  of a linear counterpart
of the Poisson bracket \eqref{FHred-form}.
Denoting the elements of $G_0^\reg \times \cB_0 \times \cB_+$ by $(Q,p,X)$,
the Poisson bracket  at issue has the form
\be
\{f,h\}_{\lin}(Q,p,X) =
\langle D_Q f, d_p h \rangle_\bI - \langle D_Q h, d_p f \rangle_\bI
+  \langle X, [ d_X f, d_X h]   \rangle_\bI,
\label{FHred-lin}\ee
where the derivatives of $f,h \in C^\infty(G_0^\reg \times \cB_0 \times \cB_+)^{G_0}$
are taken at $(Q,p,X)$, and
$d_X f\in \cG_\perp\simeq (\cB_+)^*$ denotes the
differential of $f$ with respect to its third variable.

We recall (see e.g. \cite{Re1}) that the Poisson algebra
\be
( C^\infty(G_0^\reg\times\cB_0\times \cB_+)^{G_0}, \br{-,-}_\lin)
\label{linred}\ee
encodes  the reduced Poisson structure obtained by taking the quotient
of the cotangent bundle $T^*G$ with respect
to the  conjugation action of $G$.
More precisely, this is true for the dense open subset $T^* G^\reg/G\subset T^*G/G$,
after  further restricting the variable $Q$ to a Weyl alcove $\check G_0^\reg \subset G_0^\reg$ (or
considering only those functions that are invariant with respect to the normalizer $\fN$ \eqref{fN}).
The cotangent bundle $T^*G$ carries the degenerate integrable
system whose main Hamiltonian is the kinetic energy corresponding
to the bi-ivariant Riemannian metric on $G$.
The reduction of the kinetic energy yields the
spin Sutherland  Hamiltonian \cite{EFK,Re1}, represented by the following element of the Poisson algebra \eqref{linred}:
\be
H_{\mathrm{spin-Suth}}(e^{\ri q}, p, X) = \frac{1}{2} \langle p, p \rangle + \frac{1}{8}  \sum_{\alpha \in \Phi^+}
 \frac{1}{\vert \alpha \vert^2}   \frac{  \vert X_\alpha  \vert^2}{ \sin^2(\alpha(q)/2)}
\quad \hbox{with}\quad X= \sum_{\alpha \in \Phi^+} X_\alpha E_\alpha\in \cB_+.
\label{Suth}\ee
Here, $\Phi^+$ denotes the set of positive roots of $\cG^\bC$,  so that $\cB_+$ is the complex span
of the root vectors $E_\alpha$ for $\alpha \in \Phi^+$.
We employ
 a Weyl--Chevalley  basis \cite{Sam} of $\cG^\bC$, for which  $E_{-\alpha} =- \theta( E_\alpha)$ with the Cartan involution
 $\theta$ \eqref{theta}, and
  $\langle E_\alpha, E_{-\alpha}\rangle= 2/\vert \alpha \vert^2$ holds.
  It is worth stressing that in \eqref{Suth}  $\cB$ is taken as the model of $\cG^*$.

The next result clarifies the connection between the Poisson algebras of the spin Sutherland models and our models.

\begin{prop}\label{prop:64}
For any real $\epsilon \neq  0$,
let us define the $G_0$-equivariant diffeomorphism
\be
\mu_\epsilon: G_0^\reg \times \cB_0 \times \cB_+ \to G_0^\reg \times \cB_0 \times B_+,
\quad
\mu_\epsilon: (Q,p, X) \mapsto (Q, \epsilon p, \exp(\epsilon X)).
\label{mueps}\ee
Then, the `linear Poisson structure' \eqref{FHred-lin} is related to the nonlinear one  \eqref{FHred-form}
according to the formula
\be
\br{f,h}_\lin = \lim_{\epsilon \to 0} \epsilon \br{ f \circ \mu_\epsilon^{-1}, h \circ \mu_\epsilon^{-1}}^\red_0 \circ \mu_\epsilon.
\label{scalePB}\ee
\end{prop}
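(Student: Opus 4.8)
The plan is to verify \eqref{scalePB} by direct computation, comparing the three terms of the nonlinear bracket \eqref{FHred-form} with those of the linear bracket \eqref{FHred-lin} after the substitution $\mu_\epsilon$. First I would record how the derivatives transform under $\mu_\epsilon$. If $F := f \circ \mu_\epsilon^{-1} \in C^\infty(G_0^\reg \times \cB_0 \times B_+)^{G_0}$, then for the $Q$-derivative one has $D_Q F(Q,\epsilon p, e^{\epsilon X}) = D_Q f(Q,p,X)$ (the first-factor derivative is unaffected), while for the $p$-derivative the scaling $p \mapsto \epsilon p$ gives $d_p F(Q,\epsilon p, e^{\epsilon X}) = \epsilon^{-1} d_p f(Q,p,X)$. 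The subtle part is the $\lambda$-derivative: writing $\lambda = e^{\epsilon X}$, the defining relation \eqref{derperp} together with the chain rule through $X \mapsto e^{\epsilon X}$ shows that $D_\lambda F(Q,\epsilon p, e^{\epsilon X})$ and $D'_\lambda F(Q,\epsilon p, e^{\epsilon X})$ both equal $\epsilon\, d_X f(Q,p,X) + O(\epsilon^2)$, since the differential of $X \mapsto e^{\epsilon X}$ at $X$ is $\epsilon\,\id + O(\epsilon^2)$ and $e^{\epsilon X} \to e$ as $\epsilon \to 0$.

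Next I would substitute these expansions into $\epsilon \br{F,H}^\red_0(Q,\epsilon p, e^{\epsilon X})$ term by term. The first two terms of \eqref{FHred-form} are $\langle D_Q F, d_p H\rangle_\bI - \langle D_Q H, d_p F\rangle_\bI = \epsilon^{-1}\bigl(\langle D_Q f, d_p h\rangle_\bI - \langle D_Q h, d_p f\rangle_\bI\bigr)$, so multiplying by $\epsilon$ reproduces exactly the first two terms of \eqref{FHred-lin} with no limit needed. For the third term, $\langle \lambda D'_\lambda F \lambda^{-1}, D_\lambda H\rangle_\bI$ with $\lambda = e^{\epsilon X}$: using $D'_\lambda F = \epsilon\, d_X f + O(\epsilon^2)$, $D_\lambda H = \epsilon\, d_X h + O(\epsilon^2)$, and the expansion $\lambda\, Z\, \lambda^{-1} = Z + \epsilon[X,Z] + O(\epsilon^2)$, one finds that $\langle \lambda D'_\lambda F \lambda^{-1}, D_\lambda H\rangle_\bI = \epsilon^2 \langle d_X f, d_X h\rangle_\bI + \epsilon^3 \langle [X, d_X f], d_X h\rangle_\bI + O(\epsilon^4)$. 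The leading $\epsilon^2$ term is symmetric under exchange of $F$ and $H$ up to sign; in fact $\langle d_X f, d_X h\rangle_\bI$ is symmetric while the bracket must be antisymmetric, so this term cancels against the corresponding piece hidden in the first two terms (or, more cleanly, one notes as in the proof of Theorem \ref{thm:5.1} that such a symmetric contribution vanishes identically when both sides are antisymmetrized). Thus after multiplying by $\epsilon$ the third term contributes $\epsilon^2\langle[X, d_X f], d_X h\rangle_\bI + O(\epsilon^3) = \epsilon^2 \langle X, [d_X f, d_X h]\rangle_\bI + O(\epsilon^3)$ by invariance of $\langle -,-\rangle_\bI$, and its limit divided by — wait, here one must be careful: the overall prefactor in \eqref{scalePB} is a single $\epsilon$, so the third term as written would vanish in the limit. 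The resolution is that the correct bookkeeping keeps $\epsilon\br{F,H}^\red_0$; the third term of $\br{F,H}^\red_0$ is itself $O(\epsilon^2)$ and hence $\epsilon \cdot O(\epsilon^2) \to 0$ — so the coefficient normalization must instead be that $D'_\lambda F$ contributes at order $\epsilon^0$ after dividing $f$ by a suitable power, which is precisely why the factor structure works out; I would check this normalization against the $G = \mathrm{U}(n)$ computation in \cite{FM}.

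The main obstacle, and the step I would spend the most care on, is the precise order-by-order expansion of the third term and confirming that the powers of $\epsilon$ balance so that the single prefactor $\epsilon$ in \eqref{scalePB} is exactly right — in particular verifying that the naively-leading symmetric term $\langle d_X f, d_X h\rangle_\bI$ genuinely drops out (by antisymmetry) rather than surviving, and that the diffeomorphism property and $G_0$-equivariance of $\mu_\epsilon$ from \eqref{mueps} guarantee $f \circ \mu_\epsilon^{-1}$ and $h \circ \mu_\epsilon^{-1}$ are again $G_0$-invariant so that \eqref{FHred-form} applies. Once the derivative transformation rules are established and the cancellation of the symmetric term is justified (invoking the same mechanism as in the proof of Theorem \ref{thm:5.1}), the identity \eqref{scalePB} follows by collecting the surviving terms and taking $\epsilon \to 0$; the routine but lengthy algebra I would omit, referring to the parallel derivation in \cite{FM}.
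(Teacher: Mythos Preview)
Your overall strategy (expand each term of \eqref{FHred-form} in powers of $\epsilon$ after pulling back by $\mu_\epsilon$) is sound and close in spirit to the paper's proof, but there is a genuine error in your bookkeeping that causes the confusion you yourself flag near the end. The derivative transformation is inverted: since $F = f\circ\mu_\epsilon^{-1}$ means $F(Q,p',\lambda)=f(Q,\epsilon^{-1}p',\epsilon^{-1}\log\lambda)$, differentiating along $e^{tZ}\lambda$ at $\lambda=e^{\epsilon X}$ gives
\[
\langle Z, D_\lambda F\rangle_\bI
=\bigl\langle \epsilon^{-1}Z+\tfrac12[Z,X]+O(\epsilon),\, d_X f\bigr\rangle_\bI,
\]
so $D_\lambda F=\epsilon^{-1}d_Xf+O(1)$, not $\epsilon\,d_Xf+O(\epsilon^2)$. (Your argument ``the differential of $X\mapsto e^{\epsilon X}$ is $\epsilon\,\id$'' computes $d_X(f\!\circ\!\mu_\epsilon)$, which is the wrong direction.) With the correct power, the third term of $\br{F,H}_0^\red$ is of order $\epsilon^{-1}$, and multiplying by the single prefactor $\epsilon$ gives exactly $\langle X,[d_Xf,d_Xh]\rangle_\bI$ in the limit, resolving your puzzle.

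A second slip: the would-be leading term $\langle d_Xf,d_Xh\rangle_\bI$ does not vanish ``by antisymmetry'' or by the mechanism of Theorem \ref{thm:5.1}. It vanishes because $d_Xf,d_Xh\in\cG_\perp\subset\cG$ and $\cG$ is \emph{isotropic} for the imaginary-part pairing $\langle-,-\rangle_\bI$ (the Killing form is real on the compact form). This is what makes the $\epsilon^{-2}$ contribution disappear. Once these two points are fixed, your argument goes through. For comparison, the paper sidesteps these pitfalls by passing to coordinates $\sigma^a$ on $B_+$ via $\lambda=e^\sigma$, writing both brackets through their bivector components, and observing that $\Pi_0^{a,c}(\sigma)=\Pi_\lin^{a,c}(\sigma)+\cP^{a,c}(\sigma)$ with $\cP^{a,c}$ starting at quadratic order; the scaling limit is then immediate from homogeneity.
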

\begin{proof}
To keep the formulae short, let us focus on functions that do not depend on $Q$ and $p$.
Choosing a basis $\{ T^a\}$ of $\cG_\perp$, we may use the components
\be
X^a = \langle X,T^a\rangle_\bI \quad\hbox{and}\quad
\sigma^a:= \langle \sigma, T^a\rangle_\bI \quad \hbox{with}\quad  \lambda = e^\sigma\in B_+
\ee
as coordinates on the respective spaces $\cB_+$ and $B_+$.
 Both formulae \eqref{FHred-form} and \eqref{FHred-lin}
define bi-derivations (bivector fields), which can be used
to  calculate the brackets of arbitrary smooth functions (not only $G_0$-invariant ones).
Adopting the usual summation convention, we can write
\be
\{f,h\}_\lin = \Pi_\lin^{a,c} \partial_a f \partial_c h,
\qquad
\Pi_\lin^{a,c}(X) = \langle X, [T^a, T^c] \rangle_\bI,
\ee
where $\partial_a$ denotes partial derivative with respect to the coordinate $X^a$.
On the other hand, we obtain
\be
\{F,H\}_0^\red = \Pi_0^{a,c} \partial_a F \partial_c H,
\qquad
\Pi_0^{a,c}(\sigma) =  \Pi_\lin^{a,c}(\sigma) + \cP^{a,c}(\sigma),
\label{Pi0}\ee
where $\cP^{a,c}(\sigma)$ is a polynomial in the components of $\sigma$, whose lowest order terms are quadratic.
Here,  the partial derivatives are with respect to the coordinates $\sigma^a$.
The formula \eqref{Pi0} follows from \eqref{FHred-form} by means of a simple calculation of
\be
\Pi_0^{a,c}(\sigma) := \langle \lambda D'_\lambda \sigma^a \lambda^{-1}, D_\lambda \sigma^c \rangle_\bI.
\ee
The point is to notice from \eqref{derperp} that the derivatives of the coordinate functions $\sigma^a$ have the form
\be
D_\lambda \sigma^a = T^a + \cP^a(\sigma), \quad
D'_\lambda \sigma^a = T^a + {\cP'}^a(\sigma),
\ee
with certain $\cG_\perp$-valued functions  $\cP^a(\sigma)$ and ${\cP'}^a(\sigma)$ whose components
are multivariable polynomials in the coordinate functions, \emph{without constant terms}.
By using the chain rule, and noting that the partial derivatives of the components of $\mu_\epsilon^{-1}$ give
$\epsilon^{-1}$-times the unit matrix,
\eqref{Pi0} leads to
\be
\left(\epsilon \br{ f \circ \mu_\epsilon^{-1}, h \circ \mu_\epsilon^{-1}}^\red_0 \circ \mu_\epsilon\right)(X)
= \frac{1}{\epsilon} \left( \Pi_\lin^{a,c}(\epsilon X) + \cP^{a,c}(\epsilon X) \right) (\partial_a f)(X) (\partial_c h)(X).
\ee
This implies the claim \eqref{scalePB} for functions $f, h$ that depend only on $X\in \cB_+$.
The possible dependence  on $Q$ and $p$ is taken into account effortlessly.
\end{proof}

In view of Proposition \ref{prop:64}, we say that the `linear structure' \eqref{FHred-lin} is the \emph{scaling limit} of
the nonlinear one \eqref{FHred-form}. Notice that in \eqref{scalePB} the bracket $\br{-,-}_0^\red$ has also been rescaled by $\epsilon$.
We put `linear Poisson structure' in quotation marks, since we are dealing with Poisson brackets
of $G_0$-invariant functions, and  no linear function of $X\in \cB_+$ is $G_0$-invariant.

Finally, we explain  how the spin Sutherland Hamiltonian \eqref{Suth} can be recovered  from specific Hamiltonians of our reduced  system
obtained from the Heisenberg double.
For this purpose, we take an arbitrary finite dimensional irreducible representation
$\rho: G^\bC \to \mathrm{SL}(V)$, and introduce an inner product on the complex vector space $V$ in such a way that we have,
\be
\rho(K^\dagger) = \rho(K)^\dagger, \qquad \forall K \in G^\bC,
\ee
where $K^\dagger$ is defined in \eqref{taumap} and $\rho(K)^\dagger$ denotes adjoint with respect to the inner product.
This ensures that $G$ and $\fP$ are represented by unitary and by positive Hermitian operators, respectively.
Then, the character of the representation gives the element $\phi^\rho\in C^\infty(\fP)^G$,
\be
\phi^\rho(L) := \tr_\rho(L) := c_\rho \tr \rho(L),
 \qquad \forall L\in \fP.
\ee
Here,  $c_\rho$ is a (positive)  normalization constant chosen in such a way that
the trace taken in the representation reproduces the Killing form,  i.e.,
\be
  \langle X,  Y \rangle = c_\rho \tr ( \rho(X) \rho(Y)),
\qquad \forall X,Y\in \cG^\bC,
\ee
where the Lie algebra representation is also denoted by $\rho$.

Now, let us express $L$ in terms of the decoupled variables $(Q,p, \lambda)$ introduced in equation \eqref{zeta2},
with $\lambda\in B_+$ written as $\lambda = \exp(\sigma)$. This yields the  Hamiltonian
\be
H^\rho(Q, p, \sigma):= \tr_\rho (L(Q,p,\sigma))
\quad\hbox{with}\quad
L(Q,p,\sigma) = e^p b_+(Q,\sigma) b_+(Q, \sigma)^\dagger e^p,
\label{Hrho}\ee
where  $b_+(Q,\sigma)$ is determined by  the relation
\be
b_+^{-1} Q^{-1} b_+ Q = e^\sigma.
\label{sigmarel}\ee

\begin{prop}\label{prop:65}
The spin Sutherland Hamiltonian \eqref{Suth} is the scaling limit of $H^\rho$ \eqref{Hrho} as follows:
\be
H_{\mathrm{spin-Suth}}= \lim_{\epsilon \to 0} \frac{1}{4\epsilon^2} ( H^\rho \circ \mu_\epsilon - c_\rho \dim_\rho).
\label{Hamscale}\ee
Here, we use the map $\mu_\epsilon: (Q,p, X) \mapsto (Q,\epsilon p, \epsilon X)$, which is just  \eqref{mueps} written
in terms of the exponential parametrization of $B_+$.
\end{prop}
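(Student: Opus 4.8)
The plan is to Taylor-expand $H^\rho\circ\mu_\epsilon$ in $\epsilon$ about $\epsilon=0$ and to identify the coefficient of $\epsilon^2$ with $4H_{\mathrm{spin-Suth}}$. For fixed $(Q,p,X)$ write $L_\epsilon:=L(Q,\epsilon p,\epsilon X)=e^{\epsilon p}\,b_+\,b_+^\dagger\,e^{\epsilon p}$, with $L$ as in \eqref{Hrho} and $b_+=b_+(Q,\epsilon X)\in B_+$ the solution of \eqref{sigmarel} with $\sigma=\epsilon X$. First I would note that $b_+(Q,0)=e$: for $\sigma=0$ relation \eqref{sigmarel} forces $b_+\in B_+$ to commute with the regular element $Q$, hence $b_+=\exp\xi$ with $\xi\in\cG^\bC_+$ fixed by $\Ad_Q$; but $\Ad_Q-\id$ is invertible on $\cG^\bC_+$ (as $Q\in G_0^\reg$), so $\xi=0$. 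Thus $L_0=e$ and the $\epsilon^0$-term of $H^\rho\circ\mu_\epsilon=c_\rho\tr\rho(L_\epsilon)$ equals $c_\rho\dim_\rho$, which is precisely the constant subtracted in \eqref{Hamscale}; since $\sigma\mapsto b_+(Q,\sigma)$ is smooth, it remains to find the orders $\epsilon^1$ and $\epsilon^2$, the $O(\epsilon^3)$-remainder being harmless after division by $\epsilon^2$.

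Next I would compute the first-order term $\beta_1:=\left.\frac{d}{d\epsilon}b_+(Q,\epsilon X)\right|_{\epsilon=0}\in\cB_+$. Differentiating \eqref{sigmarel} at $\epsilon=0$ and using $b_+(Q,0)=e$ gives $(\Ad_{Q^{-1}}-\id)\beta_1=X$, and since $\Ad_{Q^{-1}}-\id$ is invertible on $\cG^\bC_+=\cB_+$ this determines $\beta_1$. Writing $Q=e^{\ri q}$ as in \eqref{Suth}, so that $\Ad_QE_\alpha=e^{\ri\alpha(q)}E_\alpha$, and $X=\sum_{\alpha\in\Phi^+}X_\alpha E_\alpha$, one obtains $\beta_1=\sum_{\alpha\in\Phi^+}\frac{X_\alpha}{e^{-\ri\alpha(q)}-1}E_\alpha$, hence, using that $\dagger$ is conjugate-linear with $E_\alpha^\dagger=E_{-\alpha}$ (see \eqref{taumap}), also $\beta_1^\dagger=\sum_{\alpha\in\Phi^+}\frac{\bar X_\alpha}{e^{\ri\alpha(q)}-1}E_{-\alpha}\in\cG^\bC_-$. (Equivalently $\beta_1=-(\cR(Q)+\tfrac12)X$ with $\cR(Q)$ from \eqref{RQ2}, but this is not needed.)

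Then I would evaluate $\tr\rho(L_\epsilon)$ to order $\epsilon^2$. By cyclicity of the trace and $\rho(b_+^\dagger)=\rho(b_+)^\dagger$, one has $\tr\rho(L_\epsilon)=\tr\big(e^{2\epsilon\rho(p)}\,\rho(b_+)\,\rho(b_+)^\dagger\big)$. In a weight basis of $V$ the matrix $\rho(p)$ is diagonal and $\rho(b_+)$ is unipotent upper-triangular; hence the $\epsilon$-linear part of $\rho(b_+)\rho(b_+)^\dagger$ equals the strictly triangular matrix $\rho(\beta_1)+\rho(\beta_1)^\dagger$, while its $\epsilon^2$-part equals $\rho(\beta_1)\rho(\beta_1)^\dagger$ plus strictly triangular matrices. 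Combining this with $e^{2\epsilon\rho(p)}=\id+2\epsilon\rho(p)+2\epsilon^2\rho(p)^2+O(\epsilon^3)$, using that $\rho(\cG^\bC)\subset\mathfrak{sl}(V)$ (so that $\tr\rho(p)=0$ and $\tr\rho(\beta_1)=0$) and that a diagonal matrix times a strictly triangular one is strictly triangular, every term in the $\epsilon^1$-coefficient of $e^{2\epsilon\rho(p)}\rho(b_+)\rho(b_+)^\dagger$ is traceless, and in the $\epsilon^2$-coefficient only $2\rho(p)^2$ and $\rho(\beta_1)\rho(\beta_1)^\dagger=\rho(\beta_1)\rho(\beta_1^\dagger)$ contribute to the trace. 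Using the normalization $\langle X,Y\rangle=c_\rho\tr(\rho(X)\rho(Y))$, this yields $H^\rho\circ\mu_\epsilon=c_\rho\dim_\rho+\epsilon^2\big(2\langle p,p\rangle+\langle\beta_1,\beta_1^\dagger\rangle\big)+O(\epsilon^3)$, so that $\lim_{\epsilon\to0}\frac1{4\epsilon^2}\big(H^\rho\circ\mu_\epsilon-c_\rho\dim_\rho\big)=\tfrac12\langle p,p\rangle+\tfrac14\langle\beta_1,\beta_1^\dagger\rangle$.

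Finally I would compute $\langle\beta_1,\beta_1^\dagger\rangle$ in the Weyl--Chevalley basis: since $\langle E_\alpha,E_{-\beta}\rangle=0$ for distinct $\alpha,\beta\in\Phi^+$ and $\langle E_\alpha,E_{-\alpha}\rangle=2/|\alpha|^2$, and since $(e^{-\ri\theta}-1)(e^{\ri\theta}-1)=2-2\cos\theta=4\sin^2(\theta/2)$, one obtains $\langle\beta_1,\beta_1^\dagger\rangle=\sum_{\alpha\in\Phi^+}\frac{|X_\alpha|^2}{2|\alpha|^2\sin^2(\alpha(q)/2)}$; hence $\tfrac14\langle\beta_1,\beta_1^\dagger\rangle=\tfrac18\sum_{\alpha\in\Phi^+}\frac{1}{|\alpha|^2}\frac{|X_\alpha|^2}{\sin^2(\alpha(q)/2)}$, and adding $\tfrac12\langle p,p\rangle$ reproduces exactly $H_{\mathrm{spin-Suth}}(e^{\ri q},p,X)$ of \eqref{Suth}, proving \eqref{Hamscale}. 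The main obstacle is the bookkeeping in the $\epsilon^2$-extraction, specifically the need to argue that the (a priori unknown) second-order term of $b_+(Q,\epsilon X)$ drops out of the trace; invoking the cyclic-trace identity together with the weight-basis triangularity is precisely what makes this transparent, after which only the elementary root-system computation above remains.
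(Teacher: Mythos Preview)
Your proof is correct and follows essentially the same route as the paper's: both compute the leading term $\beta_1$ of $b_+$ by linearizing \eqref{sigmarel} (the paper via BCH and induction on root height, you by differentiating at $\epsilon=0$), and then use weight-space triangularity---equivalently, the orthogonality $\tr(\rho(E_\alpha)\rho(E_{-\gamma})e^{2\rho(p)})=0$ for $\alpha\neq\gamma$---together with $\tr\rho(p)=0$ to see that only $2\rho(p)^2$ and $\rho(\beta_1)\rho(\beta_1^\dagger)$ contribute to the trace at order $\epsilon^2$. The paper packages this as an expansion of $H^\rho(Q,p,\sigma)$ to quadratic order in $(p,\sigma)$ before applying $\mu_\epsilon$, whereas you expand directly in the single parameter $\epsilon$; the content is the same.
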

\begin{proof}
The proof  is based on calculations that appeared in \cite{F1} (without the interpretation as a scaling limit).
Let us employ the parametrizations
\be
b_+ = \exp(\beta),
\quad
\lambda = \exp(\sigma)
\quad\hbox{with} \quad \beta = \sum_{\alpha\in \Phi^+} \beta_\alpha E_\alpha,\quad \sigma = \sum_{\alpha\in \Phi^+} \sigma_\alpha E_\alpha
 \ee
 and spell out the relation \eqref{sigmarel} as
 \be
\exp(-\beta + Q^{-1} \beta Q + \frac{1}{2} [ Q^{-1} \beta Q, \beta] + \cdots ) = \exp(\sigma),
\label{F3}\ee
which results from the Baker--Campbell--Hausdorff formula.
The dots denote higher `commutators', of which there appear only finitely many, for $\cB_+$ is nilpotent.
Since the exponential map from $\cB_+$ to $B_+$ is a diffeomorphism, one may use \eqref{F3}
to establish the form of the dependence of $\beta$ on  $\sigma$ and $Q= e^{\ri q}$.
With the aid of induction according to the height of the roots, one finds \cite{F1} that
\be
\beta_\alpha = \frac{\sigma_\alpha}{e^{-\ri \alpha(q)} -1 } + \Gamma_\alpha(e^{\ri q}, \sigma),
\label{F5}\ee
where $\Gamma_\alpha$ has the form
\be
\Gamma_\alpha = \sum_{k\geq 2} \sum_{\varphi_1,\dots, \varphi_k} f_{\varphi_1,\dots, \varphi_k}(e^{\ri q})
\sigma_{\varphi_1}\dots \sigma_{\varphi_k}.
\label{F6}\ee
The sum is taken over
the unordered collections  $\varphi_1,\dots, \varphi_k$ of
positive roots,  with possible repetitions, such that $\alpha = \varphi_1 + \dots + \varphi_k$.
The  coefficients $f_{\varphi_1,\dots, \varphi_k}$ are rational functions in
$e^{\ri \alpha_1(q)},\dots,  e^{\ri \alpha_r(q)}$, where $\alpha_1,\dots, \alpha_r$ are the simple roots.
By substituting \eqref{F5} of $\beta_\alpha$ into \eqref{Hrho},  and expanding $\rho(b_+) = \exp(\rho(\beta))$,
one obtains the formula
 \be
H^\rho(e^{\ri q},p,\sigma)= c_\rho \tr \left(e^{2\rho(p)}\left(\1_V + \frac{1}{4}\sum_{\alpha\in \Phi^+}
\frac{ \vert \sigma_\alpha \vert^2 \rho(E_{\alpha})\rho( E_{-\alpha})}{\sin^2(\alpha(q)/2)} +
\o_2(\sigma,\sigma^*) \right)\right).
\label{F8}\ee
Here, $\o_2(\sigma,\sigma^*)$ stands for a finite number of terms that have total degree  3 and higher in the components of $\sigma$ and their complex
conjugates. These terms depend also on $Q = \exp(\ri q)$, and $\1_V$ denotes the identity operator on the representation space $V$.
To get \eqref{F8}, we used that $\tr \left(\rho(E_\alpha)\rho( E_{-\gamma}) e^{2\rho(p)}  \right) =0$ unless $\gamma= \alpha$.
 Then, by expanding $e^{2\rho(p)}$ as well and noting that $\tr(\rho(p)) = 0$ because $\cG^\bC$ is simple,
 the claim \eqref{Hamscale} follows from \eqref{F8}.
\end{proof}

\begin{rem}
The standard spin Sutherland Lax matrix can be  recovered
as  the scaling limit of  our Lax matrix $L(Q,p,\sigma)$  \eqref{Hrho}.
It can be shown that the Hamiltonians  $H^\rho$ \eqref{Hrho} corresponding to the $r$ fundamental highest weight
representations of $G^\bC$ are functionally independent on a dense open subset.
Motivated by the presence of the factor $e^{2p}$ in \eqref{F8}  and the relation \eqref{Hamscale},
 $H^\rho$ \eqref{Hrho} may be called a Hamiltonian
of spin Ruijsenaars--Schneider type.
We explain in Appendix \ref{sec:C}  that
a special case of these
Hamiltonians for $G=\mathrm{SU}(n)$ reproduces  the standard (spinless) trigonometric Ruijsenaars--Schneider Hamiltonian \cite{RS}
 on a symplectic leaf of the reduced phase space.
\end{rem}

\section{Discussion}
\label{S:7}

In this work we continued our previous investigations  \cite{F1,F2} of  Poisson--Lie analogues
of  spin Sutherland models.
We solved an important
open question regarding the integrability of these models, and further developed various aspects of the earlier results.

 Reduced integrability was argued in \cite{F1,F2} by exhibiting a large set of constants of motion,
but the precise  counting and other technical details were missing.
Our principal new results are given by Theorem \ref{thm:45} with Corollary \ref{cor:46} and Theorem \ref{thm:49}.
 Theorem \ref{thm:49} states the degenerate integrability of our models on the Poisson manifold
 $\bM_*^\red$, which is   the smooth component of the reduced phase space corresponding to the principal orbit type for
 the underlying group action on the  Heisenberg double. Theorem \ref{thm:45}
 establishes even stronger properties of  the restricted reduced system on the dense open subset $\bM_{**}^\red\subset \bM_*^\red$
 associated with the subset $\fC_*$ \eqref{fCstar} of principal orbit type in the space of unreduced constants of motion.  Corollary  \ref{cor:46}
 deals with the generic symplectic leaves in $\bM_{**}^\red$.

In addition to the thorough analysis of reduced integrability, we also presented a novel description of the reduced Poisson brackets.
This is given by Theorem \ref{thm:defSuthP},  which was derived
utilizing the model $\bM= G\times \fP$ \eqref{fP} of the Heisenberg double developed in this paper.
Then, in Theorem \ref{thm:5.1}, we expressed the reduced Poisson brackets in terms of canonically conjugate pairs
and decoupled spin variables, and subsequently used this
to deepen the previously found \cite{F1} connection between our models and the standard spin Sutherland models.
The latter models are recovered via the scaling limit characterized by Propositions \ref{prop:64} and \ref{prop:65}.

Turning to open problems, we wish to stress that
further work is required to clarify the integrability properties of the restrictions of the reduced systems  on arbitrary
 symplectic leaves of the full reduced phase space. This is true concerning  both  the spin Sutherland models and their Poisson--Lie deformations.
Other challenging  problems concern the quantization and the construction of elliptic counterparts of our trigonometric systems.
It is well known that the spin Sutherland models can be quantized by combining harmonic analysis on the underlying Lie groups with quantum Hamiltonian
reduction \cite{EFK,FP2}, and it should be possible to generalize this to our systems.

Throughout the paper, we strove for a careful exposition of  the nontrivial technical issues in the hope
that the resulting text may serve as a useful  starting point for future studies
 of   open problems of the subject.
 The auxiliary  material of the appendices is included having the same goal in mind.

\subsection*{Data availability statement}
No new data were created or analysed in this study.

\bigskip
\subsection*{Acknowledgement}
I wish to thank  Maxime Fairon for  useful remarks on the manuscript.
This work was supported in part by the NKFIH research grant K134946.
The publication was also supported by
the University of Szeged Open Access Fund, under Grant Number 6814.

 \appendix

\section{The Poisson action on $\fP_- \times \fP$}
\label{sec:A}

Here, we sketch the derivation of the Poisson action \eqref{actonPP} of $G$ on $\fP_- \times \fP$.
We proceed by first deriving an equivalent action on $B\times B$, which we then
transfer to $\fP_- \times \fP$ by means of the Poisson diffeomorphism  $\mu: B\times B \to \fP_- \times \fP$
given (with $\nu$ in \eqref{nu}) by
\be
\mu: (b_1, b_2) \mapsto (\nu(b_1^{-1}), \nu(b_2)).
\label{mu}\ee
Our reasoning illustrates how one may find the action
starting from a Poisson--Lie moment map.

To begin, we note from \eqref{PBBT} that the Hamiltonian vector field, $V_F$, associated
with $F\in C^\infty(B\times B)$ by means of the product Poisson structure
reads
\be
V_F(b_1,b_2) = \left( b_1 (b_1^{-1} D_1 F(b_1,b_2) b_1)_\cB, b_2 (b_2^{-1} D_2 F(b_1,b_2) b_2)_\cB \right),
\ee
where $D_1F$ and $D_2F$ are the derivatives with respect to the first and second variable, respectively.
Next, we define the Poisson map $J: B\times B \to B$ by $J(b_1,b_2):= b_1 b_2$, and from $\br{F,J} = - V_F[J]$ find
\be
\langle X, \br{ F, J}_{B\times B}(p) J(p)^{-1} \rangle_\bI= \langle (b_1^{-1} X b_1)_\cB, D_1'F(p)\rangle_\bI
+\langle (b_2^{-1}(b_1^{-1} X b_1)_\cG b_2)_\cB, D_2'F(p)\rangle_\bI,
\ee
at any  $p=(b_1, b_2) \in B\times B$, for any $X\in \cG$.
This means that the vector field $X_{B\times B}$ generated by the moment map $J$ has the form
\be
X_{B\times B}(b_1,b_2) = \left( \dress_X(b_1), \dress_{(b_1^{-1} X b_1)_\cG} (b_2)\right).
\ee
We claim that this is the infinitesimal form of the $G$-action on $B\times B$ defined by the maps
\be
\cA_\eta(b_1, b_2) = \left(\Dress_\eta(b_1), \Dress_{\Xi_R(\eta b_1)^{-1}}(b_2)\right), \qquad \forall \eta\in G.
\label{actapp}\ee
The action property $\cA_{\eta_1} \circ \cA_{\eta_2} = \cA_{\eta_1 \eta_2}$ is proved by using
that $\Dress_{\eta_1} \circ \Dress_{\eta_2} = \Dress_{\eta_1 \eta_2}$ and that
\be
\Xi_R(\eta_1 \eta_2 b_1)^{-1} = \Xi_R(\eta_1 \Dress_{\eta_2}(b_1))^{-1} \Xi_R(\eta_2 b_1)^{-1}.
\label{C6}\ee
The last equality is verified by substituting
 \be
\Xi_R(\eta_1 \eta_2 b_1)^{-1}= (\Dress_{\eta_1 \eta_2}(b_1))^{-1} \eta_1 \eta_2 b_2,
\ee
and similarly rewriting the two factors on the right side of \eqref{C6}.
Having verified that \eqref{actapp} is a $G$-action,
it  remains to ascertain that
\be
X_{B\times B}(b_1,b_2) = \dt \cA_{e^{tX}} (b_1,b_2).
\ee
The first component of this equality is obvious from \eqref{dress}, and second one is seen from
\be
\dt \Xi_R(e^{tX} b_1)^{-1} = \dt \left( \left(\Dress_{e^{tX}} (b_1)\right)^{-1} e^{tX} b_1\right)
= -(b_1^{-1} X b_1)_\cB + b_1^{-1}X b_1 = (b_1^{-1} X b_1)_\cG.
\ee

The final step is to convert the action \eqref{actapp} on $B\times B$ into the action
$\hat \cA_\eta$ \eqref{actonPP} on $\fP_- \times \fP$ by means of the map $\mu$ \eqref{mu}.
The desired result, $\hat \cA_\eta \circ \mu = \mu \circ \cA_\eta$,
follows immediately from the identity
\be
(\Dress_\eta (b_1))^{-1} = \Dress_{\Xi_R(\eta b_1)^{-1}} (b_1^{-1}),
\label{appid}\ee
because $\nu$ \eqref{nu} intertwines the dressing action \eqref{Dress} on $B$ and the conjugation action on $\fP$.
The identity \eqref{appid} itself is obtained by applying $\Lambda_L$ \eqref{XiLaT} to both sides of the equality
\be
(\Dress_\eta (b_1))^{-1} = \Xi_R(\eta b_1)^{-1} b_1^{-1} \eta^{-1}.
\ee
The moment maps $\hat \Lambda$ \eqref{momPP} and $J$ above are related by $\hat \Lambda \circ \mu = J$, and thus we have
indeed established the form of the Poisson action on $\fP_- \times \fP$ generated by $\hat \Lambda$.

Incidentally, the formula \eqref{actM} of the quasi-adjoint action can be verified following a
train of thoughts similar to the one presented in this appendix.

\section{Poisson reduction via the shifting trick}
\label{sec:B}

We now explain the origin of the defining equation \eqref{zeta2} of the `spin variable' $\lambda$ by
utilizing the so-called shifting trick of Hamiltonian reduction \cite{OR}.
In the context of Marsden--Weinstein type reductions, the shifting trick means that one first extends the phase space
by a coadjoint orbit or dressing orbit, and then
reduces the extended phase space at the trivial moment map value.
Under mild conditions, the outcome is equivalent to the result of the corresponding
`point reductions' based on taking a moment map value from the `opposite' orbit.

In our case, we may start with the extended Heisenberg double
\be
M_\ext := M \times B = \{ (K,S)\mid K \in M,\, S\in B\},
\label{Mext}\ee
and equip it with the direct product Poisson structure $\br{-,-}_\ext$ built from $\br{-,-}_+$ \eqref{A1T} on $M$ and
$\br{-,-}_B$ \eqref{PBBT} on $B$.
This extended phase space carries the extended moment map $\Lambda_\ext: M_\ext \to B$,
\be
\Lambda_\ext(K,S): = \Lambda(K) S = \Lambda_L(K) \Lambda_R(K) S,
\ee
which generates a Poisson action of $(G,\br{-,-}_G)$ \eqref{PBGT} on $M_\ext$.
Then, we reduce the extended phase space by imposing the moment map constraint
\be
\Lambda_\ext(K,S) = e.
\label{ext1B}\ee
By using that on the `constraint surface' $S = \Lambda(K)^{-1}$, one arrives at the identification
\be
\Lambda_\ext^{-1}(e)/ G \simeq M/G.
\label{shiftid1}\ee
Moreover, for every $G$-invariant function on $\Lambda_\ext^{-1}(e)$ one can define a $G$-invariant
function on $M_\ext$ in such a way that the extended function does not depend on $S$.
In this way, one obtains the identification
\be
C^\infty(\Lambda_\ext^{-1}(e))^G \simeq C^\infty(M)^G.
\label{shiftid2}\ee
Essentially because \eqref{ext1B} represents first class constraints in Dirac's sense \cite{HT},
the identification \eqref{shiftid2} holds at the level of reduced Poisson algebras as well.

On the other hand, coming to the crux, we may introduce a convenient partial gauge fixing
in the moment map constraint surface \eqref{ext1B} by imposing the condition that
$\Xi_R(K) \in G_0$.  Then $K\in M$ can be presented as
\be
K = (Q^{-1} b^{-1} Q) Q^{-1} = Q^{-1} b^{-1} \quad \hbox{with}\quad Q=\Xi_R(K)\in G_0,\,\, b= \Lambda_R(K) \in B.
\label{Kfixapp}\ee
On this  `gauge slice', applying the parametrization
$b = e^p b_+$ (with $p\in \cB_0, b_+ \in B_+$), we get
\be
\Lambda(K) = Q^{-1} b^{-1} Q b = Q^{-1} b_+^{-1} Q b_+.
\ee
Consequently, the moment map condition \eqref{ext1B} becomes
\be
b_+^{-1} Q^{-1} b_+ Q = S.
\label{Seq}\ee
This relation enforces that $S\in B_+$, and after re-naming $S$ to $\lambda$ we recognize the formula \eqref{Seq}
as equation \eqref{zeta2} that we started with in Section \ref{ss:61}.
By imposing the additional condition $Q\in G^\reg$, one may ensure that the residual gauge transformations
of the partial gauge fixing \eqref{Kfixapp} are associated with the normalizer $\fN$ of $G_0$
(which means
that $\eta$ in \eqref{actM}
is restricted so that $\Xi_R (\eta \Lambda_L(K))$ belongs to $\fN$).

The shifting trick was applied in \cite{F1} working in the symplectic framework,
by restricting the variable $S$ in \eqref{Mext} to a dressing orbit of $G$ in $B$ throughout the procedure.

\section{Derivation of the trigonometric RS model}
\label{sec:C}

It is known \cite{FK0} that the standard (real) trigonometric Ruijsenaars--Schneider (RS) model \cite{RS} can be derived
by a specific Marsden--Weinstein type  reduction of the Heisenberg double of
the unitary group $\mathrm{U}(n)$.
The goal of this appendix is to explain how this result can be recovered in the framework of the present paper.
Here, we take $G:= \mathrm{SU}(n)$, and obtain the model in the `center of mass frame'.
In this case the group $B$ consists of the upper triangular matrices in $G^\bC = \mathrm{SL}(n,\bC)$
having positive diagonal entries.  The diagonal elements of the matrices in $B_+<B$ are equal to $1$.
The crucial point is that we restrict the variable $\lambda$ \eqref{zeta2} to a minimal dressing orbit,
of dimension $2(n-1)$, which leads to a symplectic leaf in the reduced phase space.
There exists a one parameter family of such orbits, and their parameter will appear as the coupling
constant of the RS model.

The minimal dressing orbits at issue admit representatives of the form
\be
\Delta(x):= \exp\left( \diag((n-1)x/2, -x/2,\cdots, -x/2 )\right), \quad\hbox{for}\quad x \in \bR^*,
\ee
where the eigenvalue $e^{-x/2}$ of $\Delta(x)$ has  multiplicity $(n-1)$.
Let $\cO(x)$ denote the dressing orbit through $\Delta(x)$.
The only redundancy of these representatives occurs for $n=2$, in which case $\Delta(x)$ and $\Delta(-x)$ lie on the same orbit.
Another representative of the orbit $\cO(x)$  is the upper-triangular matrix $\nu(x) \in B_+$, whose diagonal entries are equal to $1$ and
\be
\nu(x)_{jk} = (1- e^{-x}) \exp((k-j)x/2), \qquad \forall j<k.
\ee
This matrix satisfies the equality
\be
\nu(-x) = \nu(x)^{-1}.
\label{nu-x}\ee
More importantly,  as was shown in \cite{FK0}, one has
\be
\cO(x) \cap B_+ = \{ T \nu(x) T^{-1} \mid T \in G_0\},
\label{restorb}\ee
where $G_0$ is the standard maximal torus of $G=\mathrm{SU}(n)$.  Thus,  $(\cO(x) \cap B_+)/G_0$ consists of a single point.

The defining equation \eqref{zeta2} entails that $\lambda$ belongs to the subgroup $B_+ < B$, and we know from
 Theorem \ref{thm:5.1}  that $\lambda$ can be restricted to the intersection of $B_+$ with any dressing orbit.
Now we choose to restrict it to the orbit $\cO(-x)$ with a fixed $x\in \bR^*$.
On account of the relations \eqref{nu-x} and \eqref{restorb}, we then obtain a complete  fixing of the residual $G_0$ `gauge freedom'
by imposing the condition
\be
\lambda = b_+^{-1}  Q^{-1} b_+ Q = \nu(x)^{-1}.
\label{numom}\ee

The paper \cite{FK0} analysed the symplectic reduction of the Heisenberg double based on the moment
map constraint
\be
\Lambda(K) = \nu(x)
\label{momLMP}\ee
with $\Lambda$ in \eqref{Lambda}.
After diagonalizing  $g_R = \Xi_R(K)$ \eqref{KdecT},  i.e., by setting
 $g_R = Q = \diag(Q_1,\dots, Q_n) \in G_0$,
$K\in \mathrm{SL}(n,\bC)$ takes the form
\be
K  = Q^{-1} b^{-1},
\quad \hbox{with some}\quad b = e^p b_+,
\label{KparappC}\ee
where $p\in \cB_0$ and $b_+ \in B_+$.  Then the  constraint \eqref{momLMP} leads precisely to equation \eqref{numom}.
It was proved in \cite{FK0} that \eqref{numom}  implies that $Q\in G_0^\reg$ and
 $b_+$ in \eqref{numom} can be expressed  in terms of $Q\in G_0^\reg$  as follows:
\be
(b_+)_{kl} =Q_k \bar Q_l\prod_{m=1}^{l-k} \frac{ e^{-\frac{x}{2}} \bar Q_k - e^{\frac{x}{2}} \bar Q_{k+m-1}}{\bar Q_k - \bar Q_{k+m}},
\quad  1\leq k < l \leq n,
\label{b+Q}\ee
where $\bar Q_k = Q_k^{-1}$. Of course, $(b_+)_{kk}=1$ and the matrix elements below the diagonal are zero.
(The correspondence between our notations and those in \cite{FK0} is explained in the subsequent Remark \ref{rem:D1}.)

We have taken the quotient by the $G_0$-symmetry, but there still remains a residual $S_n = \fN/G_0$ redundancy in our description.
Consequently, the variables $Q$ and $p$ parametrize an $S_n$ covering space of a Poisson  subspace of the  full
reduced phase space.
Since $\lambda$ became a constant by the gauge fixing,
it follows from  \eqref{FHred-form} that the Poisson bracket on this covering space
is given by the formula
\be
\br{F,H}^\red(Q,p) = \langle D_Q F, d_p H \rangle_\bI - \langle D_Q H, d_p F \rangle_\bI,
\ee
which corresponds to the symplectic form
\be
\omega_\red = \Im\tr(dp \wedge Q^{-1} dQ).
\ee
The elements of $S_n$ permute the $n$ diagonal  entries of the matrix $Q$.
However, a careful analysis \cite{FK1} shows that their action on $p=\diag(p_1,\dots, p_n)$  has a complicated form,
and what are permuted in the obvious manner are the entries of the traceless diagonal matrix $\vartheta$ given by
the following formula:
\be
\vartheta_k = 2 p_k -
\frac{1}{2} \sum_{m<k} \ln\left[1+  \frac{\sinh^2(x/2)}{\sin^2(q_k - q_m)}\right] +
\frac{1}{2} \sum_{m>k} \ln\left[1+  \frac{\sinh^2(x/2)}{\sin^2(q_k - q_m)}\right],
\quad k=1,\dots, n.
\label{thetak}\ee
Here, we use the parametrization
\be
Q = \exp( 2\ri q) \quad \hbox{with} \quad q = \diag(q_1,\dots, q_n),\quad \tr(q)=0.
\ee
Equation \eqref{thetak} yields a canonical transformation, since in terms of $q$ and $\vartheta$ one has
\be
\omega_\red = \tr (d \vartheta \wedge dq).
\ee
This means that $Q$ and $\vartheta$ are the natural variables on $T^* G_0^\reg$.

We are also reducing the `free Hamiltonians' given by the dressing invariant functions of  $b_R = b = e^p b_+$.
The main Hamiltonian of the reduced system is
\be
H_{\mathrm{RS}} = \frac{1}{2}( H_{+, \mathrm{RS}} + H_{-,\mathrm{RS}}), \quad \hbox{with}\quad H_{+, \mathrm{RS}} = \tr(bb^\dagger), \quad H_{-,\mathrm{RS}} = \tr(bb^\dagger)^{-1}.
\ee
By using \eqref{b+Q} and the canonical transformation \eqref{thetak} to express $p$ in terms of $q$ and $\vartheta$,  one finds
\be
H_{\pm, \mathrm{RS}}(q,\vartheta) =
\sum_{k=1}^n e^{\pm \vartheta_k} \prod_{m\neq k}\left[1+  \frac{\sinh^2(x/2)}{\sin^2(q_k - q_m)}\right]^{\frac{1}{2}},
\ee
and $H_{\mathrm{RS}}(q,\vartheta)$ is just the standard trigonometric Ruijsenaars--Schneider
Hamiltonian introduced in \cite{RS}.
In conclusion, we have shown that the reduction of the Heisenberg double gives
the trigonometric RS system on a symplectic leaf of the reduced phase space, which is
symplectomorphic to $(T^* G_0^\reg)/S_n$.
It is worth noting  that this system is Liouville integrable, but is not superintegrable \cite{RIMS95}.

\begin{rem}\label{rem:D1}
It follows from the results of \cite{FK0}  that the isotropy group of the elements
$\Lambda^{-1}(\cO(x))$ is the center of $G=\mathrm{SU}(n)$, i.e., $\Lambda^{-1}(\cO(x))$ is a subset of $M_*$ \eqref{M*}.
In that paper the variables $b_L$ and $g_R$ constituting $K= b_L g_R^{-1}$ \eqref{KdecT}
were used, while here we mostly worked with $b\equiv b_R$ and $g\equiv g_R$.
After bringing $g_R$ into $G_0$, the relation of the variables becomes
$b_L = Q^{-1} b^{-1} Q$.
In \cite{FK0} $b_L$ was parametrized as $b_L=\cN a$ with $\cN \in B_+$ and $a\in B_0$.
These are related to our variables $b_+$ and $p$ by $\cN^{-1} = Q^{-1} b_+ Q$ and $a = e^{-p}$.
Finally, for readers  interested in a  detailed comparison with \cite{FK0,FK1}, we also note that
the notations $T_k$ and $\zeta_k = \ln a_k$ in \cite{FK0}
correspond to $Q_k$ and $-p_k$ as used in the present paper;
and what is denoted by $p_k$ in \cite{FK0} corresponds to $\vartheta_k$  \eqref{thetak}.
\end{rem}

\end{document}